\newcommand{\bP}{{\bf P}}
\newcommand{\p}{\partial}
\newcommand{\vep}{\varepsilon}
\newcommand{\be}{\begin{equation}}
\newcommand{\ee}{\end{equation}}
\newcommand{\ba}{\begin{array}}
\newcommand{\ea}{\end{array}}
\newcommand{\bea}{\begin{eqnarray}}
\newcommand{\eea}{\end{eqnarray}}
\newcommand{\beas}{\begin{eqnarray*}}
\newcommand{\eeas}{\end{eqnarray*}}
\newtheorem{remark}{Remark}[section]
 \newcommand{\bx}{{\bf x} }
\renewcommand{\ldots}{\dotsc}
\title{Ground states and dynamics of spin-orbit-coupled Bose-Einstein condensates}
\author{Weizhu Bao\thanks{Department of Mathematics and
Center for Computational Science and Engineering, National
University of Singapore, Singapore 119076 ({\it
matbaowz@nus.edu.sg}, URL: http://www.math.nus.edu.sg/\~{}bao/).}
\and Yongyong Cai\thanks{Beijing Computational Science Research Center, Beijing 100084,
P. R. China; and Department of Mathematics, Purdue University, West Lafayette,
IN 47907, USA ({\it yonyong.cai@gmail.com}).} }
\date{}
\begin{document}

\maketitle

\begin{abstract}
We study analytically and asymptotically as well as numerically ground states and dynamics of
two-component spin-orbit-coupled Bose-Einstein condensates (BECs) modeled by
the coupled Gross-Pitaevskii equations (CGPEs).
In fact, due to the appearance of the spin-orbit (SO) coupling in the two-component BEC with a
Raman coupling,
the ground state structures and dynamical properties become very rich and complicated.
For the ground states, we establish the existence and  non-existence
results under different parameter regimes, and obtain their limiting behaviors and/or structures
with different combinations of the SO and
Raman coupling strengths. For the dynamics, we show that the
motion of the center-of-mass is either non-periodic or with different frequency
to the trapping frequency when the external trapping potential is taken as harmonic
and the initial data is  chosen as a stationary state (e.g. ground state) with
a shift, which is completely different from the case of a two-component BEC without the SO coupling,
and obtain the semiclassical limit of the CGPEs in the linear case via
the Wigner transform method. Efficient and accurate numerical methods
are proposed for computing the ground states and dynamics, especially for the case of box potentials.
Numerical results are reported to demonstrate the efficiency and accuracy of the numerical
methods and show the rich phenomenon in the SO-coupled BECs.
\end{abstract}

\begin{keywords} Bose-Einstein condensate, spin-orbit coupling,
coupled Gross-Pitaevskii equations, ground state, dynamics, Raman coupling.
\end{keywords}

\begin{AMS} 35Q55, 49J45, 65N06, 65N12, 65Z05,  81-08
\end{AMS}

\pagestyle{myheadings} \markboth{Weizhu Bao and Yongyong Cai}
{Ground states and dynamics of spin-orbit-coupled BEC}

\section{Introduction}
\label{s1} \setcounter{equation}{0}
 Spin-orbit  (SO) coupling  is the interaction
between the spin and  motion of a particle, and is crucial for understanding
many physical phenomenon, such as quantum Hall effects \cite{Xiao} and topological insulators \cite{Hasan}.
However, SO coupling observation in solid state matters is inaccurate due to
the disorder and impurities of the system.  Since
the first experimental realization of Bose-Einstein condensation (BEC)
in 1995 \cite{Anderson,Davis}, degenerate quantum gas has become a perfect candidate for studying
quantum many-body phenomenon in condensed matter physics. Such a system of quantum gas can be controlled
with high precision in experiments.   Very recently, in a pioneer work~\cite{Lin2011}, Lin et al. have
created a spin-orbit-coupled BEC with two spin states of
$^{85}$Rb: $\left|\uparrow\rangle\right.=|F=1,\,m_f=0\rangle$ and
$\left|\downarrow\rangle\right.=|F=1,\,m_f=-1\rangle$.  Due to this remarkable experimental
progress and its potential applications,  SO coupling in cold atoms
has received broad interests in atomic physics community and condensed matter physics community  \cite{Galitski,Hamner}.

At temperatures $T$ much smaller than the critical temperature
$T_c$, following the mean field theory \cite{Lin2011,PitaevskiiStringari,ZhangJ}, a SO-coupled BEC is well described by
the macroscopic wave function $\Psi:=\Psi(\bx,t)=(\psi_{1}(\bx,t),\psi_{2}(\bx,t))^T:=(\psi_1,\psi_2)^T$
whose evolution is governed by the coupled Gross-Pitaevskii equations (CGPEs) in  three dimensions (3D)
\be \label{cgpe1}
\begin{split}
&i\hbar\partial_t \psi_{1}=\left[-\frac{\hbar^2}{2m}\nabla^2
+\tilde{V}_1(\bx)+\frac{i\hbar^2 \tilde{k}_0}{m}\p_x+\frac{\hbar\tilde{\delta}}{2}
+\sum_{l=1}^2 \tilde{g}_{1l}|\psi_{l}|^2\right]\psi_{1}+\frac{\hbar\tilde{\Omega}}{2}
\psi_{2}, \\
&i\hbar\partial _t \psi_{2}=\left[-\frac{\hbar^2}{2m}\nabla^2
+\tilde{V}_2(\bx)-\frac{i\hbar^2\tilde{k}_0}{m} \p_x-\frac{\hbar\tilde{\delta}}{2}+
\sum_{l=1}^2 \tilde{g}_{2l}|\psi_{l}|^2\right]\psi_{2}
+\frac{\hbar\tilde{\Omega}}{2}
\psi_{1}.\end{split} \ee
Here, $t$ is time,
$\bx=(x,y,z)^T\in \mathbb R^3$   is the Cartesian coordinate
vector, $\hbar$ is the Planck constant, $m$ is the mass of particle,
$\tilde{\delta}$ is the detuning constant for Raman transition,
$\tilde{k}_0$
is the wave number of Raman lasers representing the SO coupling strength,
$\tilde{\Omega}$ is the effective Rabi frequency describing the strength of
Raman coupling (i.e. an internal atomic Josephson junction), and
$\tilde{g}_{jl}=\frac{4\pi \hbar^2 a_{jl}}{m}$ ($j,l=1,2$) are
interaction constants with $a_{jl}=a_{lj}$ ($j,l=1,2$) being the $s$-wave scattering
lengths between the $j$th and $l$th component (positive for
repulsive interaction and negative for attractive interaction).
$\tilde{V}_{1}(\bx)$ and $\tilde{V}_{2}(\bx)$  are given
real-valued external trapping potentials whose profiles depend on
different applications and the setups in experiments \cite{Lin2011,Hamner}.
In typical current experiments, the following harmonic potentials are commonly used
\cite{Lin2011,Hamner,Li2012,Li2013}
\begin{equation}\label{trap}
\tilde{V}_j(\bx)=\frac{m}{2}\left[\omega_x^2x^2+\omega_y^2y^2+\omega_z^2(z-\tilde{z}_j)^2\right],
 \qquad j=1,2,\quad\bx=(x,y,z)^T\in {\mathbb R}^3,
\end{equation}
where $\omega_x>0$, $\omega_y>0$ and $\omega_z>0$ are trapping frequencies in $x$-, $y$- and $z$-direction,
respectively, and $\tilde{z}_1,\tilde{z}_2\in {\mathbb R}$ are two given constants.
The wave function $\Psi$ is  normalized as
\be \label{norm}
\|\Psi\|^2:=\|\Psi(\cdot,t)\|_2^2=\int_{{\mathbb R}^3}
\left[|\psi_1(\bx,t)|^2+|\psi_2(\bx,t)|^2\right]\,d\bx=N,\ee
where $N$ is the total number of particles in the SO-coupled BEC.

In order to nondimensionalize the CGPEs (\ref{cgpe1}) with (\ref{trap}), we introduce \cite{Bao2013,Bao2014}
\begin{equation}\label{scal}
\begin{split}
\tilde{t}=\frac{t}{t_s},\qquad \tilde{\bx}=\frac{\bx}{x_s},\qquad
 \tilde{\psi}_j(\tilde{\bx},\tilde{t})=\frac{x_s^{3/2}}{N^{1/2}}\psi_j(\bx,t), \quad j=1,2,
\end{split}
\end{equation}
where $t_s=\frac{1}{\omega_0}$ and $x_s=\sqrt{\frac{\hbar}{m\omega_0}}$ with
$\omega_0=\min\{\omega_x,\omega_y,\omega_z\}$
are the scaling parameters of dimensionless time and length units, respectively.
Plugging (\ref{scal}) into  \eqref{cgpe1}, multiplying by
$\frac{t_s^2}{m(x_sN)^{1/2}}$, and  then removing all $\tilde{}$,  we obtain  the
following dimensionless CGPEs in 3D for a SO-coupled BEC
\be\label{eq:cgpe13d}
\begin{split} &i\partial_t \psi_{1}=\left[-\frac{1}{2}\nabla^2
+V_1(\bx)+ik_0\p_x+\frac{\delta}{2}
+\left(g_{11}|\psi_{1}|^2+g_{12}|\psi_{2}|^2\right)\right]\psi_{1}+\frac{\Omega}{2}
\psi_{2}, \\
&i\partial _t \psi_{2}=\left[-\frac{1}{2}\nabla^2
+V_2(\bx)-ik_0 \p_x-\frac{\delta}{2}+\left(g_{21}|\psi_{1}|^2+g_{22}|\psi_{2}|^2\right)\right]\psi_{2}
+\frac{\Omega}{2}
\psi_{1},\end{split} \ee
where $k_0=\frac{\tilde{k}_0x_s}{2}$, $\delta=\frac{\tilde{\delta}}{\omega_0}$, $\Omega=\frac{\tilde{\Omega}}{\omega_0}$,
$g_{11}=\frac{4\pi Na_{11}}{x_s}$, $g_{12}=g_{21}=\frac{4\pi Na_{12}}{x_s}$,
$g_{22}=\frac{4\pi Na_{22}}{x_s}$, $\gamma_x=\frac{\omega_x}{\omega_0}$,
$\gamma_y=\frac{\omega_y}{\omega_0}$ and $\gamma_z=\frac{\omega_z}{\omega_0}$,
and the dimensionless trapping potentials are
\begin{equation} \label{trap72}
V_j(\bx)=\frac{1}{2}\left(\gamma_x^2x^2+\gamma_y^2y^2+\gamma_z^2(z-z_j)^2\right),\quad \bx\in{\mathbb R}^3,\qquad
j=1,2,
\end{equation}
with $z_1=\frac{\tilde{z}_1}{x_s}$ and $z_2=\frac{\tilde{z}_2}{x_s}$.

When the trapping potentials in (\ref{trap72}) are strongly anisotropic, similar to
the dimension reduction of the GPE for a BEC \cite{Bao2013,Bao2014,Ben,PitaevskiiStringari},
the CGPEs (\ref{eq:cgpe13d}) in 3D can be formally reduced
to two dimensions (2D) or one dimension (1D) when the BEC is disk-shaped or cigar-shaped, respectively.
For simplicity of notations, we assume $z_1=z_2=0$ in (\ref{trap72}).
When $\gamma_z\gg \gamma_x$ and $\gamma_z\gg\gamma_y$, i.e. a disk-shaped condensate,
by taking the  ansatz \cite{Ben,Bao2013}
\be\label{dimred}
\psi_j(\bx,t)=\psi_{j}^{2D}(x,y,t)e^{-i\gamma_zt/2}\gamma_z^{-1/4}
w(\gamma_z^{1/2}z), \quad \bx=(x,y,z)^T\in{\mathbb R}^3,\quad j=1,2,
\ee
with $w(z)=\pi^{-1/4}e^{-z^2/2}$, multiplying both sides of \eqref{eq:cgpe13d} by $w(\gamma_z^{1/2}z)$ and integrating
over $z\in{\mathbb R}$, we can formally reduce the 3D CGPEs \eqref{eq:cgpe13d} into 2D as \cite{Bao2013,Bao2014}
\begin{equation}\label{cgpe71}
\begin{split}
&i\partial_t \psi_{1}^{2D}=\left[-\frac{1}{2}\nabla^2
+V_1^{2D}(x,y)+ik_0\p_x+\frac{\delta}{2}
+\sum_{l=1}^2g_{1l}^{2D}|\psi_{l}^{2D}|^2\right]\psi_{1}^{2D}+\frac{\Omega}{2}
\psi_{2}^{2D}, \\
&i\partial _t \psi_{2}^{2D}=\left[-\frac{1}{2}\nabla^2
+V_2^{2D}(x,y)-ik_0 \p_x-\frac{\delta}{2}+\sum_{l=1}^2g_{2l}^{2D}|\psi_{l}^{2D}|^2\right]\psi_{2}^{2D}
+\frac{\Omega}{2}
\psi_{1}^{2D},\end{split} \end{equation}
where $g_{jl}^{2D}\approx \frac{\sqrt{\gamma_z}}{\sqrt{2\pi}}g_{jl}$ ($j,l=1,2$) and $V_1^{2D}(x,y)=V_2^{2D}(x,y)=\frac{1}{2}(\gamma_x^2x^2+\gamma_y^2y^2)$.
Similarly,  when $\gamma_z\gg\gamma_x$ and $\gamma_y\gg \gamma_x$, i.e. a cigar-shaped condensate,
we can formally reduce the 3D CGPEs \eqref{eq:cgpe13d} into 1D as \cite{Bao2013,Bao2014,Ben,PitaevskiiStringari}
\begin{equation}\label{cgpe72}
\begin{split} &i\partial_t \psi_{1}^{1D}=\left[-\frac{1}{2}\nabla^2
+V_1^{1D}(x)+ik_0\p_x+\frac{\delta}{2}
+\sum_{l=1}^2g_{1l}^{1D}|\psi_{l}^{1D}|^2\right]\psi_{1}^{1D}+\frac{\Omega}{2}
\psi_{2}^{1D}, \\
&i\partial _t \psi_{2}^{1D}=\left[-\frac{1}{2}\nabla^2
+V_2^{1D}(x)-ik_0 \p_x-\frac{\delta}{2}+\sum_{l=1}^2g_{2l}^{1D}|\psi_{l}^{1D}|^2\right]\psi_{2}^{1D}
+\frac{\Omega}{2}
\psi_{1}^{1D},\end{split} \end{equation}
where $g_{jl}^{1D}\approx\frac{\sqrt{\gamma_y\gamma_z}}{2\pi}g_{jl}$ ($j,l=1,2$) and $V_1^{1D}(x)=V_2^{1D}(x)=\frac{1}{2}\gamma_x^2x^2$.

In fact, the CGPEs (\ref{eq:cgpe13d}) in 3D, (\ref{cgpe71}) in 2D and (\ref{cgpe72}) in 1D can be written
in a unified form in $d$-dimensions ($d=1,2,3$) for $\bx\in \mathbb R^d$ with $\bx=x\in\mathbb R$,
$\psi_1=\psi_1^{1D}$, $\psi_2=\psi_2^{1D}$ and $\beta_{jl}=\frac{\sqrt{\gamma_y\gamma_z}}{2\pi}g_{jl}$ for $d=1$;
$\bx=(x,y)^T\in\mathbb R^2$,
 $\psi_1=\psi_1^{2D}$, $\psi_2=\psi_2^{2D}$ and $\beta_{jl}=\frac{\sqrt{\gamma_z}}{\sqrt{2\pi}}g_{jl}$ for $d=2$; and
$\bx=(x,y,z)^T\in\mathbb R^3$ and $\beta_{jl}=g_{jl}$ ($j,l=1,2$) for $d=3$ as
\be\label{eq:cgpe1}
\begin{split} &i\partial_t \psi_{1}=\left[-\frac{1}{2}\nabla^2
+V_1(\bx)+ik_0\p_x+\frac{\delta}{2}
+(\beta_{11}|\psi_{1}|^2+\beta_{12}|\psi_{2}|^2)\right]\psi_{1}+\frac{\Omega}{2}
\psi_{2}, \\
&i\partial _t \psi_{2}=\left[-\frac{1}{2}\nabla^2
+V_2(\bx)-ik_0 \p_x-\frac{\delta}{2}+(\beta_{21}|\psi_{1}|^2+\beta_{22}|\psi_{2}|^2)\right]\psi_{2}
+\frac{\Omega}{2}
\psi_{1},\end{split} \ee
where
\be\label{eq:potential:ho}
V_1(\bx)=V_2(\bx)=\begin{cases}
\frac12(\gamma_x^2x^2+\gamma_y^2y^2+\gamma_z^2z^2),& d=3,\\
\frac12(\gamma_x^2x^2+\gamma_y^2y^2),& d=2,\\
\frac12\gamma_x^2x^2, &d=1,
\end{cases} \qquad \bx\in{\mathbb R}^d.
\ee

For other potentials such as box potential, optical lattice potential and double-well potential,
we refer to \cite{Bao2013,Lin2011,Hamner,Li2012,Li2013,PitaevskiiStringari} and references therein.
Thus, in the subsequent discussion, we will treat
the external potentials $V_1(\bx)$ and  $V_2(\bx)$ in (\ref{eq:cgpe1})
as two general real-valued functions and $\beta_{jl}$ ($j,l=1,2$) satisfying $\beta_{12}=\beta_{21}$ as
arbitrary real constants.  In addition,
without loss of generality, we assume $V_1(\bx)\ge0$ and $V_2(\bx)\ge0$ for $\bx\in{\mathbb R}^d$
in the rest of this paper.
The dimensionless CGPEs (\ref{eq:cgpe1}) conserve the total mass or
normalization, i.e.
 \be\label{eq:mass1}
N(t):=\|\Psi(\cdot,t)\|^2=\int_{{\mathbb R}^d}[|\psi_1(\bx,t)|^2+|\psi_2(\bx,t)|^2]d\bx
\equiv \|\Psi(\cdot,0)\|^2=1, \quad t\ge0, \ee
and the energy per particle
\begin{eqnarray} \label{energy91}
E(\Psi)&=&\int_{{\mathbb R}^d}\biggl[
\sum\limits_{j=1}^2\left(\frac12|\nabla\psi_{j}|^2+V_j(\bx)|\psi_j|^2\right)+\frac{\delta}{2}\left(
|\psi_1|^2-|\psi_2|^2\right)+\Omega\;\text{Re}
(\psi_1\overline{\psi}_2)\nonumber\\
&&\qquad+ik_0\left(\overline{\psi}_1\p_x\psi_1-\overline{\psi}_2\p_x\psi_2\right) +\frac{\beta_{11}}{2}|\psi_1|^4+\frac{\beta_{22}}{2}|\psi_2|^4+\beta_{12}|\psi_1|^2|\psi_2|^2
\biggl]d\bx, \label{eq:energy}
\end{eqnarray}
where $\overline{f}$ and Re$(f)$ denote the conjugate and real part of a
function $f$, respectively. In addition, if  $\Omega=0$ in (\ref{eq:cgpe1}), the mass of each
component is also conserved, i.e. \be\label{eq:Njtt1}
N_j(t):=\|\psi_j(\bx,t)\|^2=\int_{{\mathbb
R}^d} |\psi_j(\bx,t)|^2\,d\bx\equiv \|\psi_j(\bx,0)\|^2, \quad t\ge0, \quad j=1,2.\ee

Finally, by introducing the following change of variables
\be\label{changv}
\psi_1(\bx,t)=\tilde{\psi}_1(\bx,t)
 e^{i(\omega t+k_0x)}, \quad \psi_2(\bx,t)=\tilde{\psi}_2(\bx,t)
 e^{i(\omega t-k_0x)},\qquad \bx\in{\mathbb R}^d,
 \ee
with $\omega=\frac{-k_0^2}{2}$
in the CGPEs (\ref{eq:cgpe1}), we obtain for $\bx\in{\mathbb R}^d$ and $t>0$
\begin{equation}\label{eq:cgpe199:sec9}
\begin{split}
&i\partial_t \tilde{\psi}_1=\left[-\frac{1}{2}\nabla^2
+V_1(\bx)+\frac{\delta}{2}
+\beta_{11}|\tilde{\psi}_1|^2+\beta_{12}|\tilde{\psi}_2|^2\right]\tilde{\psi}_1+\frac{\Omega}{2}
e^{-i2k_0x}\tilde{\psi}_2, \\
&i\partial _t \tilde{\psi}_2=\left[-\frac{1}{2}\nabla^2
+V_2(\bx)-\frac{\delta}{2}+\beta_{21}|\tilde{\psi}_1|^2+\beta_{22}|\tilde{\psi}_2|^2\right]
\tilde{\psi}_2+\frac{\Omega}{2}e^{i2k_0x}
\tilde{\psi}_1.
\end{split}
\end{equation}
For any $\Omega\in {\mathbb R}$, the above CGPEs (\ref{eq:cgpe199:sec9}) conserve the normalization (\ref{eq:mass1}),
i.e. $N(t)=\|\tilde{\Psi}(\cdot,t)\|^2\equiv \|\tilde{\Psi}(\bx,0)\|^2=1$ for $t\ge0$
with $\tilde{\Psi}=(\tilde{\psi}_1,\tilde{\psi}_2)^T$ and the energy per particle
\begin{eqnarray} \tilde{E}(\tilde{\Psi})&=&\int_{{\mathbb R}^d}\biggl[
\sum\limits_{j=1}^2\left(\frac12|\nabla\tilde{\psi}_{j}|^2+V_j(\bx)|\tilde{\psi}_j|^2\right)+\frac{\delta}{2}\left(
|\tilde{\psi}_1|^2-|\tilde{\psi}_2|^2\right)+\Omega\;\text{Re}
(e^{i2k_0x}\tilde{\psi}_1\overline{\tilde{\psi}}_2)\nonumber\\
&&\qquad +\frac{\beta_{11}}{2}|\tilde{\psi}_1|^4+\frac{\beta_{22}}{2}|\tilde{\psi}_2|^4+\beta_{12}|\tilde{\psi}_1|^2
|\tilde{\psi}_2|^2
\biggl]d\bx. \label{eq:energy1}
\end{eqnarray}

In fact, different proposals resulting in different theoretical models
have been proposed in the literatures
for realizing SO-coupled BECs in experiments
\cite{Wang,Lin2011,Hamner,ZhangJ,Deng,Hu}.
Based on these proposed mean field models including the CGPEs \eqref{eq:cgpe1},
ground state structures and dynamical properties of SO-coupled BECs have been
theoretically studied and predicted in the literatures, including phase transition \cite{Hu},
spin vortex structure \cite{Deng}, motion of the center-of-mass \cite{Zhang},
Bogoliubov excitation \cite{Zhu}, etc.
To the best of our knowledge, only the model described by the CGPEs \eqref{eq:cgpe1}
 has been realized experimentally for a SO-coupled BEC \cite{Lin2011,Hamner,ZhangJ}.
Other models have not been realized in experiments yet. Thus we will present
our results on ground states and dynamics of SO-coupled BECs based on
the CGPEs (\ref{eq:cgpe1}). We remark that our methods and results
are still valid for other theoretical models for  SO-coupled BECs in the literatures
\cite{Wang,Hamner,ZhangJ,Deng,Hu}.

For the CGPEs (\ref{eq:cgpe1}), when $k_0=0$, i.e.,
a two-component BEC without SO coupling and without/with Raman coupling corresponding
to $\Omega=0$/$\Omega\ne0$, ground state structures and dynamical properties have been
studied theoretically in
the literature \cite{Bao1,Chang,LS2,Bao2009,Liu}.
When the SO coupling is taken into consideration, i.e. $k_0\neq0$, when $\Omega=0$,
it can be easily removed from the CGPEs (\ref{eq:cgpe1}) via (\ref{changv}) and
thus the SO coupling has no essential effect to the system. Therefore
in order to observe the effect of the SO coupling, $\Omega$ must be chosen nonzero.
To the best of our knowledge, there exist very few mathematical results
to the CGPEs (\ref{eq:cgpe1}) when $k_0\ne0$ and $\Omega\ne0$ in the literature.
The main aim of this paper is to mathematically study the existence of ground states and their
 structures as well as dynamical properties of SO-coupled BECs based on
 the CGPEs \eqref{eq:cgpe1} and propose efficient and
accurate methods for numerically simulating ground states and dynamics.

The paper is organized as follows. In section 2, we
establish existence and  non-existence
results of ground states under different parameter regimes,
and obtain their limiting behaviors and/or structures
with different combinations of the SO and
Raman coupling strengths.
In section 3, we present efficient and accurate numerical methods
for computing ground states and dynamics of SO-coupled BECs and report
ground states for different parameter regimes.
In section 4, we derive dynamical properties on the motion of the center-of-mass,
compare them with numerical results,
and obtain the semiclassical limit of the CGPEs  in the linear case via
the Wigner transform method.
Finally, some conclusions are drawn in section 5. Throughout the paper,
we adopt standard notations of the Sobolev spaces.

\section{Ground states}
\label{s2} \setcounter{equation}{0}
The ground state $\Phi_g:=\Phi_g(\bx)=(\phi_1^g(\bx),\phi_2^g(\bx))^T$ of
a two-component SO-coupled BEC based on (\ref{eq:cgpe1})
is defined as the minimizer of the energy functional (\ref{energy91}) under the
constraint (\ref{eq:mass1}), i.e.
\begin{quote}
  Find $\Phi_g \in S$, such that
\end{quote}
  \begin{equation}\label{eq:minimize}
    E_g := E\left(\Phi_g\right) = \min_{\Phi \in S}
    E\left(\Phi\right),
  \end{equation}
where $S$ is  defined as \be\label{eq:nonconset}
S:=\left\{\Phi=(\phi_1,\phi_2)^T\in H^1(\mathbb R^d)^2 \ | \ \|\Phi\|^2=\int_{{\mathbb
R}^d}\left(|\phi_1(\bx)|^2+|\phi_2(\bx)|^2\right)d\bx=1\right\}.\ee
Since $S$ is a nonconvex set,  the problem (\ref{eq:minimize}) is a nonconvex minimization
problem.  In addition, the ground state $\Phi_g$ is a solution to the following nonlinear
eigenvalue problem, i.e. Euler-Lagrange equation of the problem (\ref{eq:minimize})
\be\label{eq:e-l:1} \begin{split}
&\mu \phi_{1}=\left[-\frac{1}{2}\nabla^2
+V_1(\bx)+ik_0\p_x+\frac{\delta}{2}
+(\beta_{11}|\phi_{1}|^2+\beta_{12}|\phi_{2}|^2)\right]\phi_{1}+\frac{\Omega}{2}
\phi_{2}, \\
&\mu \phi_{2}=\left[-\frac{1}{2}\nabla^2
+V_2(\bx)-ik_0\p_x-\frac{\delta}{2}+(\beta_{12}|\phi_{1}|^2+\beta_{22}|\phi_{2}|^2)\right]\phi_{2}
+\frac{\Omega}{2}
\phi_{1},\end{split} \ee
under the normalization constraint $\Phi\in S$. For an eigenfunction
 $\Phi=(\phi_1,\phi_2)^T$ of (\ref{eq:e-l:1}), its corresponding eigenvalue (or chemical potential in the physics
literature)  $\mu:=\mu(\Phi)=\mu(\phi_1,\phi_2)$ can be computed as
\be
\mu=E(\Phi)+\int_{\mathbb R^d}\left(\frac{\beta_{11}}{2}|\phi_1|^4+\frac{\beta_{22}}{2}|\phi_2|^4
+\beta_{12}|\phi_1|^2|\phi_2|^2\right)\,d\bx.
\ee

Similarly, the ground state $\tilde{\Phi}_g=(\tilde{\phi}_1^g, \tilde{\phi}_2^g)^T\in S$
of \eqref{eq:cgpe199:sec9} is defined as:
 \begin{quote}
  Find $\tilde{\Phi}_g \in S$, such that
\end{quote}
  \begin{equation}\label{eq:minimize1}
    \tilde{E}_g := \tilde{E}\left(\tilde{\Phi}_g\right) = \min_{\tilde\Phi \in S}
    \tilde{E}\left(\tilde\Phi\right).
  \end{equation}
 We notice that the ground state $\Phi_g=(\phi_1^g, \phi_2^g)^T$ given by \eqref{eq:minimize} has one-to-one correspondence
 with the ground state $\widetilde{\Phi}_g=(\tilde{\phi}_1^g, \tilde{\phi}_2^g)^T$ given by \eqref{eq:minimize1}, through the following relation
 \be
\Phi_g=(\phi_1^g, \phi_2^g)^T=(e^{ik_0x}\tilde{\phi}_1^g,
 e^{-ik_0x}\tilde{\phi}_2^g)^T \ \Longleftrightarrow \ \tilde\Phi_g=(\tilde\phi_1^g,\tilde\phi_2^g)^T=
 (e^{-ik_0x}\phi_1^g,e^{ik_0x}\phi_2^g)^T.
 \ee
In the sequel, the $\tilde{}$ acting on $\Phi=(\phi_1,\phi_2)^T$ always means that
\be\label{eq:equiv}
\tilde{\Phi}=(\tilde{\phi}_1,\tilde{\phi}_2)^T=(e^{-ik_0x}\phi_1,e^{ik_0x}\phi_2)^T
\ \Longleftrightarrow \  \Phi=(\phi_1,\phi_2)^T=(e^{ik_0x}\tilde\phi_1,e^{-ik_0x}\tilde\phi_2)^T,
\ee
and the following equality holds
\be\label{eq:reform}
E(\Phi)=\tilde{E}(\tilde{\Phi})-\frac{k_0^2}{2}\|\Phi\|^2=\tilde{E}(\tilde{\Phi})-\frac{k_0^2}{2}\|\tilde{\Phi}\|^2.
\ee
In particular
\be
E(\Phi)=\tilde{E}(\tilde{\Phi})-\frac{k_0^2}{2}, \qquad \Phi\in S.
\ee
When $k_0=0$, the existence and uniqueness as well as non-existence results of the ground state
of the problem (\ref{eq:minimize}) have been studied in \cite{Bao2009}.
Hereafter, we  assume $k_0\neq0$.

\subsection{Existence and uniqueness}


In 2D, i.e. $d=2$, let $C_b$ be the
best constant in the following inequality \cite{Weinstein}
\be\label{bestc}
C_b:=\inf_{0\ne f\in H^1({\mathbb R}^2)} \frac{\|\nabla
f\|_{L^2(\mathbb R^2)}^2\|f\|_{L^2(\mathbb R^2)}^2}{\|f\|_{L^4(\mathbb R^2)}^4}.
\ee

Define the function $I(\bx)$ as
\be\label{eq:Ibx}
I(\bx)=\left(V_1(\bx)-V_2(\bx)+\delta\right)^2 +(\beta_{11}-\beta_{12})^2+(\beta_{12}-\beta_{22})^2,
\ee
where $I(\bx)\equiv0$ means that the SO coupled BEC with $k_0=\Omega=0$ is essentially one component;
denote the interaction coefficient matrix
\be\label{beta97}
 A=\begin{pmatrix}\beta_{11}&\beta_{12}\\
\beta_{21}&\beta_{22}\end{pmatrix}= A^T,
\ee
and $A$ is said to be nonnegative if $\beta_{jl}\ge0$ ($j,l=1,2$);

Introduce the function space
\begin{equation*}
X=\left\{(\phi_1,\phi_2)^T\in H^1(\mathbb R^d)\times H^1(\mathbb R^d)\left| \int_{\mathbb R^d}\left(V_1(\bx)|\phi_1(\bx)|^2+V_2(\bx)|\phi_2(\bx)|^2\right)\,d\bx<\infty\right.\right\},
\end{equation*}
then the following embedding results hold.
\begin{lemma}\label{lem:compact} Under the assumption that $V_j(\bx)\ge0$ ($j=1,2$)
for $\bx\in {\mathbb R}^d$ are confining potentials,
i.e. $\lim\limits_{|\bx|\to\infty}V_j(\bx)=\infty$ ($j=1,2$),
we have that
the embedding $X\hookrightarrow L^{p_1}(\mathbb R^d)\times L^{p_2}(\mathbb R^d)$
is compact provided that exponents $p_1$
and $p_2$ satisfy
\begin{equation}
\begin{cases}
p_1,p_2\in[2,6),\quad d=3,\\
p_1,p_2\in[2,\infty),\quad d=2,\\
p_1,p_2\in [2,\infty],\quad d=1.
\end{cases}
\end{equation}
\end{lemma}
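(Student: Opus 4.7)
The plan is to proceed in two stages: first establish the compact embedding $X \hookrightarrow L^2(\mathbb{R}^d) \times L^2(\mathbb{R}^d)$ using the confinement hypothesis on $V_1, V_2$, and then upgrade to the full range of exponents via Gagliardo--Nirenberg interpolation against the $H^1$ bound.

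For the first stage, I would take any sequence $\{\Phi^n\}=\{(\phi_1^n,\phi_2^n)^T\}$ bounded in $X$. Each component is bounded in $H^1(\mathbb{R}^d)$, so the classical Rellich--Kondrachov theorem on balls $B_R=\{|\bx|<R\}$ together with a diagonal extraction across $R\to\infty$ produces a subsequence (still denoted $\{\Phi^n\}$) and a limit $\Phi^\ast\in X$ such that $\phi_j^n\to\phi_j^\ast$ in $L^2(B_R)$ for every $R>0$ and $j=1,2$. The crucial tail estimate comes from the potential: setting $M_j(R):=\inf_{|\bx|>R}V_j(\bx)$, which tends to $\infty$ by the confinement assumption, one has
\[
\int_{|\bx|>R}|\phi_j^n(\bx)|^2\,d\bx\;\le\;\frac{1}{M_j(R)}\int_{|\bx|>R}V_j(\bx)|\phi_j^n(\bx)|^2\,d\bx\;\le\;\frac{C}{M_j(R)},
\]
with $C$ bounding the $X$-norm of the sequence. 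For any $\varepsilon>0$, choose $R$ so the right-hand side is below $\varepsilon$, then use Rellich--Kondrachov on $B_R$ to control the interior part for large $n$. This gives $\phi_j^n\to\phi_j^\ast$ in $L^2(\mathbb{R}^d)$.

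For the second stage, when $p_j>2$ lies strictly below the Sobolev critical exponent (i.e.\ $p_j<6$ for $d=3$ and $p_j<\infty$ for $d=2$), I would apply the Gagliardo--Nirenberg inequality
\[
\|u\|_{L^{p_j}(\mathbb{R}^d)}\;\le\;C\,\|u\|_{L^2(\mathbb{R}^d)}^{1-\theta_j}\,\|u\|_{H^1(\mathbb{R}^d)}^{\theta_j}
\]
with an interpolation exponent $\theta_j\in[0,1)$ determined by $d$ and $p_j$. Setting $u=\phi_j^n-\phi_j^\ast$ and combining the $H^1$-boundedness with the $L^2$-convergence from the first stage delivers convergence in $L^{p_j}(\mathbb{R}^d)$. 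In the one-dimensional case with $p_j=\infty$, this Gagliardo--Nirenberg step is replaced by the continuous embedding $H^1(\mathbb{R})\hookrightarrow C^{0,1/2}(\mathbb{R})$: the sequence is uniformly equicontinuous, and the $L^2$ tail estimate together with the H\"older seminorm bound yields uniform decay at infinity, so Arzel\`a--Ascoli gives a uniformly convergent subsequence.

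The main obstacle is the tail control in the first stage, which is the whole reason the embedding on the unbounded domain $\mathbb{R}^d$ is compact at all: without the potential term in the definition of $X$, the embedding $H^1(\mathbb{R}^d)\hookrightarrow L^p(\mathbb{R}^d)$ is merely continuous, and translation invariance destroys compactness. The endpoints $p_j=6$ ($d=3$) and $p_j=\infty$ ($d=2$) must be excluded precisely because Gagliardo--Nirenberg degenerates to $\theta_j=1$ there and one cannot transmit $L^2$-convergence through the interpolation.
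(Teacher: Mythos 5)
The paper states Lemma~\ref{lem:compact} without proof, treating it as a standard compactness result for energy spaces with confining potentials. Your argument is correct and is exactly the standard proof one would supply: Rellich--Kondrachov on balls plus the uniform tail bound $\int_{|\bx|>R}|\phi_j^n|^2\,d\bx\le C/\inf_{|\bx|>R}V_j$ gives compactness into $L^2\times L^2$, and Gagliardo--Nirenberg interpolation against the $H^1$ bound lifts this to all subcritical exponents (your Arzel\`a--Ascoli detour for $d=1$, $p_j=\infty$ is valid, though the Agmon-type inequality $\|u\|_{L^\infty(\mathbb R)}^2\le 2\|u\|_{L^2}\|u'\|_{L^2}$ would handle that endpoint by the same interpolation mechanism, since $\theta=1/2<1$ there).
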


Then for the existence and uniqueness of the problem
(\ref{eq:minimize}) or (\ref{eq:minimize1}), we have

\begin{theorem}\label{thm:mres}(Existence and uniqueness)
Suppose  $V_j(\bx)\ge 0$ ($j=1,2$) satisfying
$\lim\limits_{|\bx|\to\infty}V_j(\bx)=\infty$, then  there exists a
minimizer
 $\Phi_g=(\phi_1^g,\phi_2^g)^T\in S$ of
 (\ref{eq:minimize}) if one of the following conditions holds,
 \begin{enumerate}\renewcommand{\labelenumi}{(\roman{enumi})}
\item $d=3$ and  the matrix $A$  is
either semi-positive definite or nonnegative.
\item $d=2$, $\beta_{11}>-C_b$, $\beta_{22}>-C_b$ and $\beta_{12}\ge-C_b-\sqrt{(C_b+\beta_{11})(C_b+\beta_{22})}$.
\item $d=1$.
\end{enumerate}
In addition, $e^{i\theta_0}\Phi_g$ is also a ground state
of \eqref{eq:minimize} for any $\theta_0\in\mathbb [0,2\pi)$.
In particular, when $\Omega=0$ the ground state is unique  up to a constant phase factor if the
matrix $A$ is semi-positive definite
and $I(\bx)\not\equiv0$ \eqref{eq:Ibx}.
In contrast, there exists no ground state of \eqref{eq:minimize} if one of the following holds
 \begin{enumerate}\renewcommand{\labelenumi}{(\roman{enumi})}
\item $d=3$ $\beta_{11}<0$ or $\beta_{22}<0$ or $\beta_{12}<0$ with $\beta_{12}^2>\beta_{11}\beta_{22}$;

\item $d=2$, $\beta_{11}<-C_b$ or $\beta_{22}<-C_b$ or $\beta_{12}<-C_b-\sqrt{(C_b+\beta_{11})(C_b+\beta_{22})}$.
\end{enumerate}

\end{theorem}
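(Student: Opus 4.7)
My first move is to pass to the reformulated energy via the gauge transformation \eqref{eq:equiv}, since the identity $E(\Phi)=\tilde{E}(\tilde{\Phi})-k_0^2/2$ on $S$ shows that $\Phi_g$ minimizes $E$ over $S$ if and only if $\tilde{\Phi}_g$ minimizes $\tilde{E}$ over $S$. The advantage is that in $\tilde{E}$ the troublesome momentum term $ik_0\partial_x$ is gone, and the only new feature is the oscillatory factor $e^{\pm i2k_0x}$ in front of $\tilde{\psi}_2\overline{\tilde{\psi}}_1$. That factor is harmless for the existence theory because $|\Omega\,\mathrm{Re}(e^{i2k_0x}\tilde{\phi}_1\overline{\tilde{\phi}}_2)|\le \tfrac{|\Omega|}{2}(|\tilde{\phi}_1|^2+|\tilde{\phi}_2|^2)$, which on $S$ gives an $L^1$ bound independent of gradients. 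The whole proof then proceeds on $\tilde{E}$ and we translate back at the end.

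For the existence statements, I would use the direct method. First, I show $\tilde{E}$ is bounded below on $S$: the kinetic and potential pieces are nonnegative, the detuning and Raman terms are controlled by the mass as above, and the quartic interaction is controlled by the listed hypotheses. In case (i), $d=3$, either $A\succeq 0$ makes the quadratic form $\beta_{11}u^2+2\beta_{12}uv+\beta_{22}v^2$ nonnegative in $u=|\tilde{\phi}_1|^2,v=|\tilde{\phi}_2|^2$, or $A$ is entrywise nonnegative giving each quartic term $\ge 0$ directly. In case (ii), $d=2$, I apply the sharp Gagliardo–Nirenberg inequality with best constant $C_b$ from \eqref{bestc}: the quartic terms are estimated by $\tfrac{1}{C_b}\|\nabla\tilde{\phi}_j\|_2^2$ modulated by masses $\le 1$, and the stated hypothesis $\beta_{jj}>-C_b$ together with $\beta_{12}\ge -C_b-\sqrt{(C_b+\beta_{11})(C_b+\beta_{22})}$ is exactly what is needed for the resulting quadratic form in the $L^2$-norms of gradients to remain coercive. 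Case (iii), $d=1$, is immediate from the continuous embedding $H^1(\mathbb R)\hookrightarrow L^4(\mathbb R)$.

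A minimizing sequence $\{\tilde{\Phi}^{(n)}\}\subset S$ is therefore bounded in $X$. I extract a subsequence with $\tilde{\Phi}^{(n)}\rightharpoonup\tilde{\Phi}_g$ in $X$ and, by the compact embedding in Lemma~\ref{lem:compact}, $\tilde{\Phi}^{(n)}\to\tilde{\Phi}_g$ strongly in $L^{p_1}\times L^{p_2}$ for $p_1,p_2\in[2,4]$. This strong convergence handles the potential, detuning, Raman (note $e^{i2k_0x}$ is bounded), and quartic terms, while the kinetic term is weakly lower semi-continuous; hence $\tilde{E}(\tilde{\Phi}_g)\le\liminf\tilde{E}(\tilde{\Phi}^{(n)})$, the constraint $\|\tilde{\Phi}_g\|^2=1$ is preserved by strong $L^2$-convergence, and $\tilde{\Phi}_g$ is a minimizer. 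Phase invariance $e^{i\theta_0}\Phi_g\in S$ with the same energy is immediate from the structure of \eqref{eq:minimize}.

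For uniqueness when $\Omega=0$, $A$ is semi-positive definite and $I(\bx)\not\equiv 0$, the $\Omega=0$ reduction decouples the Raman oscillatory term from the gauge-transformed problem, so $\tilde{E}$ is invariant under independent constant phases in each component; combined with the fact that $|\tilde{\phi}_j|$ is also admissible with no greater kinetic energy (by the diamagnetic/Kato inequality applied componentwise), the problem reduces to one in real nonnegative densities. There the energy becomes a strictly convex functional of the densities $(|\tilde{\phi}_1|^2,|\tilde{\phi}_2|^2)$ under $A\succeq 0$ and the constraint is convex in the densities as well; the hypothesis $I(\bx)\not\equiv 0$ rules out the degenerate case in which the two components are exchangeable, yielding uniqueness up to the overall constant phase, in agreement with the analysis carried out in \cite{Bao2009}.

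The non-existence statements are proved by a scaling/concentration test. Taking a trial configuration $\Phi_\lambda(\bx)=\lambda^{d/2}\eta(\lambda\bx)$ concentrated in one component (or a suitable two-component combination tailored to the sign of the offending $\beta_{ij}$), the kinetic, detuning, SO and Raman contributions scale subcritically while the quartic interaction scales like $\lambda^d$; under the stated sign conditions the quartic part can be made negative, driving $E(\Phi_\lambda)\to-\infty$ as $\lambda\to\infty$, which rules out a minimizer. In 2D the optimal test function is the Gagliardo–Nirenberg extremizer, producing the sharp threshold $C_b$ in (ii). I expect the main technical obstacle to be the 2D case: tracking the three coupled quartic terms so that the coercivity estimate using $C_b$ is sharp enough to match the non-existence threshold, and organizing the $\beta_{12}$ cross term through a Cauchy–Schwarz-like bound so that the two sides of the theorem meet exactly at the stated hypothesis.
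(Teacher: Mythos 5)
Your proposal is correct and follows essentially the same route as the paper: the paper's proof consists precisely of the observation that, via the gauge reformulation \eqref{eq:equiv}--\eqref{eq:reform}, the problem reduces to the $k_0=0$ setting of \cite{Bao2009} (with the only new term, the oscillatory Raman coupling, controlled by the mass constraint), and then defers all details to that reference. Your write-up supplies exactly those details --- direct method with the compact embedding of Lemma~\ref{lem:compact}, the $C_b$-based coercivity in 2D, convexity in the densities for uniqueness, and the scaling/concentration test for non-existence --- consistent with the argument the authors intend.
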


\smallskip
\begin{proof} The proof is similar to that for the case when $k_0=0$ in \cite{Bao2009} via using
the formulation (\ref{eq:minimize1}) and the details are omitted here for brevity.
\end{proof}

\subsection{Properties in different limiting parameter regimes}
From now on, we assume the conditions  for the existence of ground states in Theorem \ref{thm:mres} hold.
Introducing an auxiliary  energy functional $\tilde{E}_0(\tilde{\Phi})$ for $\tilde{\Phi}=(\tilde{\phi}_1,\tilde{\phi}_2)^T$
\begin{eqnarray}\label{eq:mini1} \tilde{E}_0(\tilde{\Phi})&=&\int_{{\mathbb R}^d}\biggl[
\sum\limits_{j=1}^2\left(\frac12|\nabla\tilde{\phi}_j|^2+V_j(\bx)|\tilde{\phi}_j|^2\right)+\frac{\delta}{2}
(|\tilde{\phi}_1|^2-|\tilde{\phi}_2|^2)+\frac{\beta_{11}}{2}|\tilde{\phi}_1|^4+\frac{\beta_{22}}{2}|\tilde{\phi}_2|^4
\nonumber\\
&&\qquad
+\beta_{12}|\tilde{\phi}_1|^2|\tilde{\phi}_2|^2\biggl]d\bx=\tilde{E}(\tilde{\Phi})-\Omega \int_{{\mathbb R}^d}\text{Re}
(e^{i2k_0x}\tilde{\phi}_1\overline{\tilde{\phi}}_2)d\bx,
\end{eqnarray}
 we know that the nonconvex minimization problem
\be \label{min987}
\tilde E_g^{(0)}:=\tilde{E}_0(\tilde\Phi_g^{(0)})=\min_{\tilde\Phi\in S}\tilde{E}_0(\tilde\Phi),
\ee
admits a unique positive minimizer $\tilde\Phi_g^{(0)}=(\tilde{\phi}_1^{g,0},\tilde{\phi}_2^{g,0})^T\in S$  if the matrix $A$
is semi-positive definite and $I(\bx)\not\equiv0$ \eqref{eq:Ibx} \cite{Bao2009}. For a given $k_0\in {\mathbb R}$, let
$\tilde{\Phi}^{k_0}=(\tilde{\phi}_1^{k_0},\tilde{\phi}_2^{k_0})^T\in S$ be a ground state of
\eqref{eq:minimize1} when all other parameters are fixed, then we have

\begin{theorem}\label{thm:k0change1}(Large $k_0$ limit).  Suppose the matrix $A$ is semi-positive definite
and $I(\bx)\not\equiv0$ \eqref{eq:Ibx}. When $k_0\to\infty$, we have that the ground state
$\tilde{\Phi}^{k_0}=(\tilde{\phi}_1^{k_0},\tilde{\phi}_2^{k_0})^T$ of
\eqref{eq:minimize1} converges to
a ground state of (\ref{min987})
in  $L^{p_1}\times L^{p_2}$ sense with
$p_1,p_2$  given in Lemma \ref{lem:compact}, i.e., there exist constants $\theta_{k_0}\in [0,2\pi)$
such that $e^{i\theta_{k_0}}(\tilde{\phi}_1^{k_0},\tilde{\phi}_2^{k_0})^T$ converge to the unique positive ground state
$\tilde\Phi_g^{(0)}$ of (\ref{min987}).  In other words,
large $k_0$ in the CGPEs (\ref{eq:cgpe199:sec9}) will remove the effect of Raman coupling $\Omega$, i.e. large $k_0$ limit
is effectively letting $\Omega\to0$.
\end{theorem}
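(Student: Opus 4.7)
The plan is to exploit the identity in \eqref{eq:mini1}, namely
\[
\tilde{E}(\tilde{\Phi}) = \tilde{E}_0(\tilde{\Phi}) + \Omega \int_{\mathbb{R}^d} \mathrm{Re}\bigl(e^{i2k_0 x}\tilde{\phi}_1\overline{\tilde{\phi}}_2\bigr)\, d\bx,
\]
and to show that the oscillatory Raman term averages to zero as $k_0\to\infty$ via a Riemann--Lebesgue argument, forcing the minimum of $\tilde{E}$ to approach that of $\tilde{E}_0$.

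I would begin with the \emph{upper bound}. Plug the $k_0$-independent minimizer $\tilde{\Phi}_g^{(0)}$ of \eqref{min987} into $\tilde{E}$ as a competitor:
\[
\tilde{E}_g^{k_0}\le \tilde{E}(\tilde{\Phi}_g^{(0)}) = \tilde{E}_g^{(0)} + \Omega\int_{\mathbb{R}^d}\mathrm{Re}\bigl(e^{i2k_0 x}\tilde{\phi}_1^{g,0}\overline{\tilde{\phi}}_2^{g,0}\bigr)\, d\bx.
\]
Since $\tilde{\phi}_1^{g,0}\overline{\tilde{\phi}}_2^{g,0}\in L^1(\mathbb{R}^d)$ (indeed it lies in $H^1\cap L^1$ by Theorem \ref{thm:mres} and the confining potential), the Riemann--Lebesgue lemma gives that the integral vanishes, so $\limsup_{k_0\to\infty}\tilde{E}_g^{k_0}\le \tilde{E}_g^{(0)}$.

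Next I would derive the \emph{matching lower bound} together with compactness. The upper bound together with the semi-positive definiteness of $A$ (or its 2D/1D analogue) yields a uniform bound on $\tilde{E}_0(\tilde{\Phi}^{k_0})$, and therefore on $\|\tilde{\Phi}^{k_0}\|_X$. By Lemma \ref{lem:compact}, along a subsequence (not relabeled), $\tilde{\Phi}^{k_0}\rightharpoonup \tilde{\Phi}^*$ weakly in $X$ and strongly in $L^{p_1}\times L^{p_2}$ for admissible exponents; in particular $\tilde{\Phi}^*\in S$ by strong $L^2$ convergence, and $\tilde{\phi}_1^{k_0}\overline{\tilde{\phi}_2^{k_0}}\to \tilde{\phi}_1^*\overline{\tilde{\phi}_2^*}$ in $L^1(\mathbb{R}^d)$. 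Splitting
\[
\int_{\mathbb{R}^d} e^{i2k_0 x}\tilde{\phi}_1^{k_0}\overline{\tilde{\phi}_2^{k_0}}\, d\bx = \int_{\mathbb{R}^d} e^{i2k_0 x}\tilde{\phi}_1^*\overline{\tilde{\phi}_2^*}\, d\bx + \int_{\mathbb{R}^d} e^{i2k_0 x}\bigl(\tilde{\phi}_1^{k_0}\overline{\tilde{\phi}_2^{k_0}}-\tilde{\phi}_1^*\overline{\tilde{\phi}_2^*}\bigr)\, d\bx,
\]
the first term tends to zero by Riemann--Lebesgue (applied to the fixed $L^1$ function $\tilde{\phi}_1^*\overline{\tilde{\phi}_2^*}$) and the second is bounded in modulus by the $L^1$ convergence. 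Combined with the weak lower semicontinuity of $\tilde{E}_0$ on $X$ (standard: the kinetic and potential parts by Fatou/weak l.s.c., the quartic parts by strong $L^4\times L^4$ convergence), I obtain
\[
\liminf_{k_0\to\infty}\tilde{E}_g^{k_0}=\liminf_{k_0\to\infty}\tilde{E}(\tilde{\Phi}^{k_0})\ge \tilde{E}_0(\tilde{\Phi}^*)\ge \tilde{E}_g^{(0)}.
\]

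Combining both bounds gives $\tilde{E}_g^{k_0}\to \tilde{E}_g^{(0)}$ and $\tilde{E}_0(\tilde{\Phi}^*)=\tilde{E}_g^{(0)}$, so $\tilde{\Phi}^*$ is a minimizer of \eqref{min987}. By the uniqueness (up to a constant phase) asserted in the discussion preceding the theorem, $\tilde{\Phi}^*=e^{i\theta}\tilde{\Phi}_g^{(0)}$ for some $\theta\in[0,2\pi)$. A standard subsequence-of-subsequences argument then upgrades the subsequential convergence to convergence of the whole family once the phase is adjusted, giving the claimed $\theta_{k_0}$ with $e^{i\theta_{k_0}}\tilde{\Phi}^{k_0}\to\tilde{\Phi}_g^{(0)}$ in the appropriate $L^{p_1}\times L^{p_2}$ topology. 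The main delicate step will be handling the oscillatory Raman integral when the densities themselves depend on $k_0$; this is where the compactness of the embedding in Lemma \ref{lem:compact} does the decisive work, by reducing the problem to Riemann--Lebesgue applied to a fixed $L^1$ function plus a vanishing $L^1$ remainder.
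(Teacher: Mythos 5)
Your proposal is correct and follows essentially the same route as the paper: an upper bound obtained by testing $\tilde{E}$ with the $k_0$-independent minimizer $\tilde{\Phi}_g^{(0)}$, compactness of the embedding $X\hookrightarrow L^{p_1}\times L^{p_2}$ to extract a strongly convergent subsequence, elimination of the oscillatory Raman integral by splitting it into a fixed-$L^1$ piece (Riemann--Lebesgue) plus a remainder controlled by the strong convergence, weak lower semicontinuity of $\tilde{E}_0$, and uniqueness of the positive minimizer of \eqref{min987} to identify the limit up to a phase. If anything, your treatment of the upper bound (stating it as $\limsup_{k_0\to\infty}\tilde{E}_g^{k_0}\le\tilde{E}_g^{(0)}$ via Riemann--Lebesgue rather than as an exact inequality for each $k_0$) is slightly more careful than the paper's.
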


\begin{proof} Let $\tilde{\Phi}^{k_0}=(\tilde{\phi}_1^{k_0},\tilde{\phi}_2^{k_0})^T\in S$ be a ground state of \eqref{eq:minimize1}, then we have
\be
\tilde{E}(\tilde{\Phi}^{k_0})\leq \tilde{E}^{(0)}_g=\min\limits_{\tilde{\Phi}^\in S}\tilde{E}_0(\tilde{\Phi}),
\ee
where $\tilde{E}^{(0)}_g$ is attained at the unique positive ground state of $\tilde{E}_0(\cdot)$ in
\eqref{min987}.

Under the condition of the theorem, we know that $(\tilde{\phi}_1^{k_0},\tilde{\phi}_2^{k_0})^T\in S$
 is a bounded sequence in $X$. Hence, for any sequence $\{k_0^m\}_{m=1}^\infty$ with $k_0^m\to\infty$, there exists
a subsequence $(\tilde{\phi}_1^{k_0^m},\tilde{\phi}_2^{k_0^m})^T$ (denote as the original sequence for simplicity) such that
\be
(\tilde{\phi}_1^{k_0^m},\tilde{\phi}_2^{k_0^m})^T\hookrightarrow (\tilde{\phi}_1^\infty,\tilde{\phi}_2^\infty)^T\in X, \text{weakly}.
\ee
Lemma \ref{lem:compact} ensures that such convergence is strong in $L^{p_1}\times L^{p_2}$. In particular, we get
\be
\tilde{E}_0(\tilde{\phi}_1^\infty,\tilde{\phi}_2^\infty)\leq\liminf\limits_{k_0^m\to\infty}\tilde{E}_0(\tilde{\phi}_1^{k_0^m},\tilde{\phi}_2^{k_0^m}).
\ee
and $(\tilde{\phi}_1^\infty,\tilde{\phi}_2^\infty)^T\in S$. Recalling that
\begin{equation*}
\begin{split}
&\Omega\int_{\mathbb R^d}\text{Re}(e^{2k_0^mix}\tilde{\phi}_1^{k_0^m}\overline{\tilde{\phi}^{k_0^m}_2})\,d\bx\\
&=
\Omega\int_{\mathbb R^d}\text{Re}(e^{2k_0^mix}(\tilde{\phi}_1^{k_0^m}-\tilde{\phi}_1^\infty)\overline{\tilde{\phi}^{k_0^m}_2})\,d\bx
+\Omega\int_{\mathbb R^d}\text{Re}(e^{2k_0^mix}\tilde{\phi}_1^\infty(\overline{\tilde{\phi}^{k_0^m}_2}-
\overline{\tilde{\phi}^{\infty}_2}))\,d\bx\\
&\quad+\Omega\int_{\mathbb R^d}\text{Re}(e^{2k_0^mix}\tilde{\phi}_1^\infty\overline{\tilde{\phi}_2^\infty})\,d\bx,
\end{split}
\end{equation*}
using the $L^{p_1}\times L^{p_2}$ convergence of $(\tilde{\phi}_1^{k_0^m},\tilde{\phi}_2^{k_0^m})^T$  and Riemann-Lebesgue Lemma, we deduce
\be
\lim_{k_0^m\to\infty}\Omega\int_{\mathbb R^d}\text{Re}(e^{2k_0^mix}\tilde{\phi}_1^{k_0^m}\overline{\tilde{\phi}_2^{k_0^m}})\,d\bx=0.
\ee
Hence,
\be
\tilde{E}_0(\tilde{\phi}_1^\infty,\tilde{\phi}_2^\infty)\leq \liminf\limits_{k_0^m\to\infty}\tilde{E}_0(\phi_1^{k_0^m},\phi_2^{k_0^m})\leq
\liminf\limits_{k_0^m\to\infty}\tilde{E}(\phi_1^{k_0^m},\phi_2^{k_0^m})\leq E_g^{(0)}.
\ee
This means $(\phi_1^\infty,\phi_2^\infty)^T\in S$ is also a minimizer of the
energy \eqref{eq:mini1} in the nonconvex set $S$. The rest then follows
from the fact that the ground state of \eqref{eq:mini1} is unique up to a constant phase factor.
\end{proof}
\begin{remark} Under the assumption of Theorem \ref{thm:k0change1} and
 $\Omega=o(|k_0|)$ as $k_0\to \infty$,  the conclusion
of Theorem \ref{thm:k0change1} still holds (see details in Theorem \ref{thm:order}).
In fact, Theorem \ref{thm:k0change1} holds when the matrix $A$ is nonegative, but
the limiting profile is non-unique since there is no uniqueness for the positive ground state $\Phi_g^{(0)}$ of \eqref{min987} \cite{Bao2009}.
\end{remark}

Then, we  conclude the following for the ground state of CGPEs \eqref{eq:cgpe1} given by the minimization problem \eqref{eq:minimize} when $k_0\to\infty$.
\begin{theorem}\label{thm:k0change}(Large $k_0$ limit).  Suppose the matrix $A$ is semi-positive definite and $I(\bx)\not\equiv0$ \eqref{eq:Ibx}. When  $k_0\to\infty$,
the ground state   $\Phi^{k_0}_g=(\phi_1^{g},\phi_2^g)^T$  of
\eqref{eq:minimize} corresponds to a ground state $\tilde{\Phi}^{k_0}_g=(e^{ik_0x}\tilde{\phi}_1^{g,0},e^{-ik_0x}\tilde{\phi}_2^{g,0})^T$
of \eqref{eq:minimize1} (see \eqref{eq:equiv}),
where
$\tilde{\Phi}_g^{k_0}$ converges to a ground state of \eqref{min987}, i.e. for some $\theta_{k_0}\in \mathbb R$,
$e^{i\theta_{k_0}}(e^{-ik_0x}\phi_1^g,e^{ik_0x}\phi_2^g)^T$ converge to
the positive ground state  $\tilde{\Phi}_g^{(0)}$  of \eqref{min987}   in $L^{p_1}\times L^{p_2}$ sense, where $p_1,p_2$ are given in Lemma \ref{lem:compact}.
 In other words,
large $k_0$
will remove the effect of Raman coupling $\Omega$ in the CGPEs \eqref{eq:cgpe1}.
\end{theorem}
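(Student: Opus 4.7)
The plan is to reduce the statement entirely to Theorem \ref{thm:k0change1} via the one-to-one correspondence \eqref{eq:equiv} between minimizers of \eqref{eq:minimize} and \eqref{eq:minimize1}. First I would observe that since $\|\Phi\|=\|\tilde\Phi\|$ and the energies satisfy $E(\Phi)=\tilde E(\tilde\Phi)-k_0^2/2$ on $S$ by \eqref{eq:reform}, a pair $\Phi_g^{k_0}=(\phi_1^g,\phi_2^g)^T\in S$ is a ground state of \eqref{eq:minimize} if and only if $\tilde\Phi_g^{k_0}:=(e^{-ik_0x}\phi_1^g,e^{ik_0x}\phi_2^g)^T\in S$ is a ground state of \eqref{eq:minimize1}. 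This is exactly the content of the relation displayed before \eqref{eq:equiv}, so existence of $\Phi_g^{k_0}$ (granted by Theorem \ref{thm:mres}) gives existence of the corresponding $\tilde\Phi_g^{k_0}$.

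Next I would apply Theorem \ref{thm:k0change1} to the sequence $\tilde\Phi_g^{k_0}$: under the hypotheses of the present theorem (the matrix $A$ is semi-positive definite and $I(\bx)\not\equiv 0$), there exist phases $\theta_{k_0}\in[0,2\pi)$ such that, along any subsequence $k_0\to\infty$, $e^{i\theta_{k_0}}\tilde\Phi_g^{k_0}\to \tilde\Phi_g^{(0)}$ strongly in $L^{p_1}\times L^{p_2}$ for the exponents $p_1,p_2$ given in Lemma \ref{lem:compact}, where $\tilde\Phi_g^{(0)}$ is the unique positive minimizer of \eqref{min987}. Writing this convergence componentwise, we have
\begin{equation*}
e^{i\theta_{k_0}}e^{-ik_0x}\phi_1^g\longrightarrow \tilde\phi_1^{g,0}\quad\text{in }L^{p_1}(\mathbb R^d),\qquad
e^{i\theta_{k_0}}e^{ik_0x}\phi_2^g\longrightarrow \tilde\phi_2^{g,0}\quad\text{in }L^{p_2}(\mathbb R^d),
\end{equation*}
which is precisely the stated conclusion; uniqueness of the limit $\tilde\Phi_g^{(0)}$ upgrades subsequential convergence to full convergence as $k_0\to\infty$.

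Since Theorem \ref{thm:k0change1} is already proved, no genuinely new analytical obstacle appears here; the only thing to verify carefully is the bookkeeping of the phase factors $e^{\pm ik_0x}$ under the identification \eqref{eq:equiv} and the compatibility of the global phase $e^{i\theta_{k_0}}$ (which is harmless because multiplication by a unimodular constant is an isometry of every $L^p$). The main (and only) subtle point is therefore to articulate clearly that the oscillatory factors $e^{\mp ik_0x}$ are built into the correspondence \eqref{eq:equiv} itself and do not need to be re-estimated; all hard analysis, in particular the cancellation of the Raman term via the Riemann--Lebesgue argument, is already encapsulated in Theorem \ref{thm:k0change1}. The physical interpretation at the end of the statement then follows immediately from the fact that the limiting minimization problem \eqref{min987} carries no $\Omega$-term.
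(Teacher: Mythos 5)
Your proposal is correct and matches the paper's treatment: the paper offers no separate proof of this theorem, presenting it as an immediate consequence of Theorem \ref{thm:k0change1} via the energy identity \eqref{eq:reform} and the one-to-one correspondence \eqref{eq:equiv} between minimizers of \eqref{eq:minimize} and \eqref{eq:minimize1}, which is exactly the reduction you carry out. Your added remarks on the phase bookkeeping and on upgrading subsequential to full convergence by uniqueness of the positive limit are accurate and only make explicit what the paper leaves implicit.
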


Analogous to the case of the two-component BEC without SO coupling \cite{Bao2009}, i.e. $k_0=0$, we have the following results.
\begin{theorem}\label{thm:ogchange}(Large $\Omega$ limit). Suppose the matrix $A$ is either semi-positive definite or nonnegative.
When $|\Omega|\to\infty$, the ground state $\Phi_g$ of \eqref{eq:minimize} converges to a state
$(\phi_g,\text{sgn}(-\Omega)\phi_g)^T$ in $L^{p_1}\times L^{p_2}$ sense, where $p_1,p_2$ are given in Lemma \ref{lem:compact}, i.e., large $\Omega$ will
 remove the effect of $k_0$ in the CGPEs \eqref{eq:cgpe1}. Here $\phi_g$ minimizes the following energy under the constraint
$\|\phi_g\|:=\int_{{\Bbb R}^d} |\phi_g(\bx)|^2d\bx=1/\sqrt{2}$,
\begin{align}\label{eq:oglim}
E_s(\phi)&=\int_{{\mathbb R}^d}\biggl[
\frac12|\nabla\phi|^2+\frac{V_1(\bx)+V_2(\bx)}{2}|\phi|^2+
\frac{\beta_{11}+\beta_{22}+2\beta_{12}}{4}|\phi|^4\biggl]d\bx,
\end{align}
where $\phi_g$ is unique up to a constant phase shift and can be chosen as strictly positive.
\end{theorem}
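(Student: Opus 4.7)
My plan is to prove matching upper and lower bounds for the minimum of $E$ in the regime $|\Omega|\to\infty$, with leading order $2E_{s,g}-|\Omega|/2$, and then identify the limiting profile via Lemma \ref{lem:compact}. Throughout I set $s=\text{sgn}(-\Omega)$, so $\Omega s=-|\Omega|$, and denote by $\phi_g>0$ the positive minimizer of $E_s$ on $\|\phi\|_2^2=1/2$ (existence and uniqueness up to a constant phase being standard, since the hypothesis on $A$ yields $\beta_{11}+\beta_{22}+2\beta_{12}\geq 0$). For the upper bound I would use the trial state $\Phi_0=(\phi_g,s\phi_g)^T\in S$: the detuning cancels by symmetry, the SO contribution $ik_0\int(\bar\psi_1\partial_x\psi_1-\bar\psi_2\partial_x\psi_2)\,d\bx$ vanishes because $\phi_g$ is real, and the Raman term equals $\Omega s\int|\phi_g|^2\,d\bx=-|\Omega|/2$; a direct computation gives $E(\Phi_0)=2E_s(\phi_g)-|\Omega|/2$, so $E_g(\Omega)\leq 2E_{s,g}-|\Omega|/2$.

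For the matching lower bound I would work with the symmetric/antisymmetric decomposition $\phi=(\psi_1+s\psi_2)/\sqrt{2}$, $\chi=(\psi_1-s\psi_2)/\sqrt{2}$, which preserves the total mass $\|\phi\|_2^2+\|\chi\|_2^2=1$ and cleanly separates the Raman term as
\[
\Omega\!\int\!\text{Re}(\psi_1\bar\psi_2)\,d\bx=-\tfrac{|\Omega|}{2}+|\Omega|\,\|\chi\|_2^2,
\]
using $\text{Re}(\chi\bar\phi-\phi\bar\chi)=0$. The kinetic density and the leading quartic density both split into independent $\phi$- and $\chi$-pieces, while the detuning, the $(V_1-V_2)$-potential, the SO coupling, and the subleading quartic terms produce cross contributions. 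Each cross term is bounded by Cauchy–Schwarz/Young: the inequality $|V_1-V_2|\leq V_1+V_2$ dominates the potential mixture, and $\bigl|2k_0\!\int\!\bar\chi\,\partial_x\phi\,d\bx\bigr|\leq \eta\|\nabla\phi\|_2^2+(k_0^2/\eta)\|\chi\|_2^2$ dispatches the SO cross term by absorbing a fraction of the $\phi$-kinetic energy and charging the residue against $|\Omega|\|\chi\|_2^2$. Combining with the upper bound yields, for $|\Omega|$ large enough relative to $k_0^2$, a uniform $X$-bound on $\phi^\Omega$ together with the decay $\|\chi^\Omega\|_2=O(|\Omega|^{-1/2})$.

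To pass to the limit, along any sequence $|\Omega_m|\to\infty$ the uniform $X$-bound and Lemma \ref{lem:compact} provide a subsequence on which $\phi^{\Omega_m}\rightharpoonup\phi^\infty$ in $X$ and strongly in $L^{p_1}\cap L^{p_2}$, with $\|\phi^\infty\|_2^2=1$ and $\chi^{\Omega_m}\to 0$ in $L^2$. Weak lower semicontinuity of the quadratic part of $E_s$, strong $L^{4}$ convergence of the quartic part, and $E(\Phi_g^{\Omega_m})+|\Omega_m|/2\to 2E_{s,g}$ then force $E_s(\phi^\infty/\sqrt{2})=E_{s,g}$, identifying $\phi^\infty/\sqrt{2}$ as a positive minimizer of $E_s$ at $\|\cdot\|_2^2=1/2$. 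Uniqueness up to a constant phase of this minimizer (by the standard argument that $|\phi|$ also minimizes, together with strict convexity in $|\phi|^2$) gives, after multiplication of $\Phi^{\Omega_m}$ by a suitable phase $e^{i\theta_{\Omega_m}}$, convergence to $(\phi_g,s\phi_g)^T$ in $L^{p_1}\times L^{p_2}$; the usual subsequence argument then upgrades this to convergence of the full family.

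The main obstacle is the systematic absorption of the cross terms in the decomposition, especially the SO term, which couples $\phi$ to $\nabla\chi$. Young's inequality must be applied with a parameter $\eta$ small enough to preserve a fixed fraction of $\|\nabla\phi\|_2^2$ while sending the residual $(k_0^2/\eta)\|\chi\|_2^2$ into the $|\Omega|\|\chi\|_2^2$ reservoir; this transfer succeeds only once $|\Omega|$ exceeds a threshold depending on $k_0$, which is the reason the conclusion is asymptotic as $|\Omega|\to\infty$ rather than uniform. The detuning, $(V_1-V_2)$, and mixed quartic cross terms are routine once this estimate is in hand.
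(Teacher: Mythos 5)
Your proposal is correct and follows essentially the same route as the paper's argument (which appears as the proof of part (i) of Theorem \ref{thm:order}, since Theorem \ref{thm:ogchange} itself is only stated by analogy with the $k_0=0$ case): the trial state $(\phi_g,\text{sgn}(-\Omega)\phi_g)^T$ for the upper bound, the rewriting of the Raman term as $-\tfrac{|\Omega|}{2}+\tfrac{|\Omega|}{2}\|\psi_1-\text{sgn}(-\Omega)\psi_2\|^2$ (your symmetric/antisymmetric variables $\phi,\chi$ are just an explicit change of coordinates for the same completion of the square), the Cauchy--Young absorption of the spin-orbit cross term into the kinetic energy and the $|\Omega|\|\chi\|^2$ reservoir, and the compactness/uniqueness conclusion via Lemma \ref{lem:compact}. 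No gaps of substance.
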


\begin{theorem} (Large $\delta$ limit). Assume the matrix $A$ is either semi-positive definite or nonnegative.
When  $\delta \to +\infty$,
the ground state $\Phi^g$ of \eqref{eq:minimize} converges to a state
$(0,\phi_g)^T$ in $L^{p_1}\times L^{p_2}$ sense, where $p_1,p_2$ are given in Lemma \ref{lem:compact}.  Here $\phi_g$ minimizes the following energy under the constraint
$\|\phi_g\|=1$,
\begin{align*}
E_1(\phi)=\int_{{\mathbb R}^d}\biggl[
\frac12|\nabla\phi|^2+V_2(\bx)|\phi|^2-ik_0\bar{\phi}\p_x\phi+\frac{\beta_{22}}{2}|\phi|^4\biggl]d\bx,
\end{align*}
and such $\phi_g$ is unique up to a constant phase shift.
When  $\delta \to -\infty$, the ground state $\Phi_g$ of \eqref{eq:minimize} converges to a state
$(\varphi_g,0)^T$, where $\varphi_g$ minimize the following energy under the constraint
$\|\varphi_g\|_2=1$,
\begin{align*}
E_2(\varphi)=\int_{{\mathbb R}^d}\biggl[
\frac12|\nabla\varphi|^2+V_1(\bx)|\varphi|^2+ik_0\bar{\varphi}\p_x\varphi+\frac{\beta_{11}}{2}|\varphi|^4\biggl]d\bx,
\end{align*}
and such $\varphi_g$ is unique up to a constant phase shift.
\end{theorem}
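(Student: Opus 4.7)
The plan is to prove the $\delta\to+\infty$ case; the $\delta\to-\infty$ case is entirely symmetric by interchanging $(\phi_1,\phi_2)$, $(V_1,V_2)$, $(\beta_{11},\beta_{22})$ and flipping the sign of the $ik_0$-derivative term. Let $\phi_g$ denote the minimizer of $E_1$ on the unit $L^2$-sphere. First I would use the test configuration $(0,\phi_g)^T\in S$ to produce the upper bound $E_g^\delta\le E(0,\phi_g)=E_1(\phi_g)-\delta/2$. For the matching lower bound, let $\Phi^\delta=(\phi_1^\delta,\phi_2^\delta)^T$ be a ground state, set $N_j:=\|\phi_j^\delta\|^2$, and apply Young's inequality to each first-order term $ik_0\int\overline{\phi_j^\delta}(\pm\partial_x\phi_j^\delta)$, absorbing half of the corresponding kinetic energy and leaving a residual of the form $-k_0^2 N_j$. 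Together with nonnegativity of $V_j$ and of the quartic form (which follows from $A$ being semi-positive definite or nonnegative), and the crude estimate $|\Omega\,\text{Re}\int\phi_1^\delta\overline{\phi_2^\delta}\,d\bx|\le|\Omega|/2$, this yields
\begin{equation*}
E_g^\delta \;\ge\; \frac{1}{4}\sum_{j=1}^2\|\nabla\phi_j^\delta\|^2 + \sum_{j=1}^2\int V_j|\phi_j^\delta|^2\,d\bx + \delta N_1 - \frac{\delta}{2} - k_0^2 - \frac{|\Omega|}{2}.
\end{equation*}
Comparing with the upper bound produces both the quantitative a priori bound $\delta N_1\le C$, so that $N_1\to 0$, and a uniform bound on $\Phi^\delta$ in $X$.

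Next I would extract convergent subsequences via Lemma \ref{lem:compact}: along a subsequence, $\phi_j^\delta\to\phi_j^\infty$ strongly in $L^{p_j}$ and weakly in $X$. Since $\|\phi_1^\delta\|_{L^2}^2=N_1\to 0$, it follows that $\phi_1^\infty=0$ and $\|\phi_2^\infty\|_{L^2}=1$. To identify $\phi_2^\infty$, I would split
$E(\Phi^\delta)=E_1(\phi_2^\delta)+E_{\mathrm{rem}}(\phi_1^\delta)+\delta N_1-\delta/2+\mathcal{C}(\phi_1^\delta,\phi_2^\delta)$,
where $E_{\mathrm{rem}}$ is the first-component analogue of $E_1$ (with $V_1$, $\beta_{11}$ and $+ik_0\overline{\phi}\partial_x\phi$) and $\mathcal{C}$ collects the Raman and $\beta_{12}$ cross terms. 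The same Young splitting gives $E_{\mathrm{rem}}(\phi_1^\delta)+\delta N_1\ge 0$ once $\delta\ge 2k_0^2$, while $|\mathcal{C}|=O(\sqrt{N_1})+O(\|\phi_1^\delta\|_{L^4}^2)=o(1)$ by the strong $L^2\cap L^4$ convergence of $\phi_1^\delta$ to $0$. Hence $E_1(\phi_2^\delta)\le E_g^\delta+\delta/2+o(1)\le E_1(\phi_g)+o(1)$. Weak lower-semicontinuity of the kinetic and $ik_0$ pieces, Fatou's lemma for the trapping term, and strong $L^4$ convergence of the quartic term combine to yield $E_1(\phi_2^\infty)\le\liminf_\delta E_1(\phi_2^\delta)\le E_1(\phi_g)$, so $\phi_2^\infty$ is itself a minimizer of $E_1$.

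Uniqueness of $\phi_g$ up to a global phase comes from the single-particle gauge transform $\phi=e^{ik_0 x}\tilde\phi$, which eliminates the $-ik_0\overline{\phi}\partial_x\phi$ contribution and reduces $E_1$ to the standard one-component GPE functional $\tilde E_1(\tilde\phi)-k_0^2/2$; the latter has a unique positive minimizer by the classical theory invoked in the $k_0=0$ treatment of \cite{Bao2009}. This gives $\phi_2^\infty=e^{i\theta_\infty}\phi_g$ and, since the limit is unique up to phase, convergence of the full family (modulo phase). The step I expect to be hardest is sharpening the energy comparison into the \emph{quantitative} bound $\delta N_1\le C$ (not merely $N_1=o(1)$), which is what allows $\delta N_1$ to absorb the sign-indefinite remainder $E_{\mathrm{rem}}(\phi_1^\delta)$ (whose $ik_0\int\overline{\phi_1^\delta}\partial_x\phi_1^\delta$ piece has indeterminate sign) for large $\delta$, while simultaneously delivering the $X$-boundedness needed to invoke Lemma \ref{lem:compact}. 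Both facts rest on the single Young-type splitting of the first-order term, and this is the one place where the assumptions $V_j\ge 0$ and the sign hypothesis on $A$ enter in an essential way.
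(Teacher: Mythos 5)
Your proposal is correct, and it follows exactly the energy-comparison-plus-compactness template that the paper uses for its other limiting results (the proofs of Theorem \ref{thm:k0change1} and of part (i) of Theorem \ref{thm:order}, where the analogous quantitative bound $\|\phi_1^g-\phi_2^g\|\le C/|\Omega|$ plays the role of your $\delta N_1\le C$); the paper itself gives no proof of this particular theorem, only the remark that it is ``analogous to the case $k_0=0$'' in \cite{Bao2009}, so your writeup supplies the omitted details in the intended way. The one blemish is the sign in your gauge transform: to cancel the $-ik_0\bar\phi\p_x\phi$ term in $E_1$ one needs $\phi=e^{-ik_0x}\tilde\phi$ (consistent with \eqref{eq:equiv}), not $e^{+ik_0x}$, but this is immaterial to the argument.
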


\subsection{Convergence rate}
From the discussion in the previous section, we find that the appearance of SO coupling term $k_0$ causes a new transition in the ground states
of the CGPEs \eqref{eq:cgpe1} \cite{Bao2009}.
When $k_0=0$, i.e. there is no SO coupling,  the ground state $\Phi_g=(\phi_1^g,\phi_2^g)^T$ of \eqref{eq:minimize} can be chosen as real functions
 $\phi_1^g=|\phi_1^g|$ and $\phi_2^g=-\text{sgn}(\Omega)|\phi_2^g|$ \cite{Bao2009}. When $k_0\to\infty$,
 $\tilde{\Phi}_g=(e^{-ik_0x}\phi_1^g,e^{ik_0x}\phi_2^g)$ of \eqref{eq:equiv} will
 converge  to
 the ground state of \eqref{min987} (see Theorem \ref{thm:k0change}), i.e. it is equivalent to let $\Omega=0$ in the large $k_0$ limit.
 Here, we are going to characterize the convergence rates of the ground state  $\Phi_g$ of \eqref{eq:minimize} in the above two cases,
 i.e. $k_0\to0$ and $k_0\to\infty$.

For small $k_0$,  it is convenient to rewrite the energy \eqref{eq:energy} for $\Phi=(\phi_1,\phi_2)^T$ as
\begin{eqnarray} E(\Phi)&=&\int_{{\mathbb R}^d}\biggl[
\sum\limits_{j=1}^2\left(\frac12|(\nabla+i(3-2j)k_0{\bf e}_x)\phi_{j}|^2+V_j(\bx)|\phi_j|^2\right)+\frac{\delta}{2}(|\phi_1|^2-|\phi_2|^2)
\nonumber\\&&\label{eq:Erf}
+\frac{\beta_{11}}{2}|\phi_1|^4+\frac{\beta_{22}}{2}|\phi_2|^4
+\beta_{12}|\phi_1|^2|\phi_2|^2+\Omega\cdot\text{Re}
(\phi_1\bar{\phi}_2)\biggl]d\bx-k_0^2\|\Phi\|^2,
\end{eqnarray}
where ${\bf e}_x$ is the unite vector of $x$ axis,
and  we denote
\begin{eqnarray*} E_0(\Phi)=E(\Phi)-\int_{\mathbb R^d}\left(ik_0\overline{\phi}_1\p_x\phi_1-ik_0\overline{\phi}_2\p_x\phi_2\right)\,d\bx,
\end{eqnarray*}
with $E_0(\cdot)$ being the energy of the CGPEs \eqref{eq:cgpe1} when $k_0=0$.

Without loss of generality, we   assume $\Omega<0$.
\begin{theorem}\label{thm:error}Suppose $\Omega<0$, $\lim\limits_{|\bx|\to\infty}V_j(\bx)=\infty$ ($j=1,2$)
and the matrix $A$ is semi-positive definite.
Denoting $\widehat{\Phi}_g=(\varphi_1^g,\varphi_2^g)^T\in S$ as the unique nonnegative ground state of $E_0(\Phi)$ in $S$ \cite{Bao2009},
there exists a constant $C>0$ independent of $k_0$ such that the  ground state $\Phi_g=(\phi_1^g,\phi_2^g)^T\in S$  of \eqref{eq:minimize}
satisfies
\be
\||\phi_1^g|-\varphi_1^g\|+\||\phi_2^g|-\varphi_2^g\|\leq C|k_0|.
\ee
\end{theorem}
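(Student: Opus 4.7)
The plan is to sandwich the energy gap $E_0(|\Phi_g|) - E_0(\widehat\Phi_g)$ by $O(k_0^2)$ and then invoke a local quadratic growth property of $E_0$ around its unique nonnegative minimizer $\widehat\Phi_g$ to translate the energy estimate into the desired $L^2$ bound. For the upper bound, I would test $E$ at $\widehat\Phi_g$: since $\varphi_j^g$ is real with sufficient decay (from the confining potentials $V_j$), integration by parts yields $\int ik_0\, \varphi_j^g \partial_x\varphi_j^g \, d\bx = \tfrac{ik_0}{2}\int \partial_x((\varphi_j^g)^2)\, d\bx = 0$, so $E(\widehat\Phi_g)=E_0(\widehat\Phi_g)$, and the minimization property gives $E(\Phi_g)\leq E_0(\widehat\Phi_g)$.

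For the matching lower bound I would use the magnetic reformulation \eqref{eq:Erf} and apply the pointwise diamagnetic inequality $|(\nabla + i(3-2j)k_0{\bf e}_x)\phi_j^g|^2 \geq |\nabla|\phi_j^g||^2$ to the kinetic terms. The contributions $V_j|\phi_j|^2$, the $\delta$ term, and the quartic interactions depend only on $|\phi_j^g|$, while the assumption $\Omega<0$ combined with $\text{Re}(\phi_1^g \overline{\phi_2^g})\leq |\phi_1^g||\phi_2^g|$ gives $\Omega\,\text{Re}(\phi_1^g\overline{\phi_2^g}) \geq \Omega|\phi_1^g||\phi_2^g|$. Altogether this yields $E(\Phi_g)\geq E_0(|\Phi_g|) - k_0^2$ (the shift coming from the $-k_0^2\|\Phi\|^2$ term of \eqref{eq:Erf}), so combining with the upper bound,
\[
0 \leq E_0(|\Phi_g|) - E_0(\widehat\Phi_g) \leq k_0^2.
\]

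The heart of the argument is a local coercivity estimate: for some $\delta_0, c>0$, every real nonnegative $F=(f_1,f_2)^T\in S$ with $\|F-\widehat\Phi_g\|\leq\delta_0$ satisfies $E_0(F) - E_0(\widehat\Phi_g) \geq c\|F-\widehat\Phi_g\|^2$. Writing $F=\widehat\Phi_g + G$ with $G$ real, the constraint $\|F\|^2=1$ forces $\sum_j\int\varphi_j^g g_j\,d\bx = -\|G\|^2/2$; combining with the Euler--Lagrange equation at chemical potential $\widehat\mu$ and expanding to second order gives
\[
E_0(F)-E_0(\widehat\Phi_g) = \tfrac12\bigl(\mathcal Q(G,G) - 2\widehat\mu\|G\|^2\bigr) + O(\|G\|_{H^1}^3),
\]
where $\mathcal Q$ is the Hessian of $E_0$ at $\widehat\Phi_g$. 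Uniqueness of the nonnegative minimizer, together with $\Omega<0$ (forcing both components of $\widehat\Phi_g$ to be strictly positive) and the semi-positive definiteness of $A$, implies that $\widehat\mu$ is a simple lowest eigenvalue of the linearized Gross--Pitaevskii operator, producing a spectral gap of $\mathcal Q - 2\widehat\mu\, I$ on the orthogonal complement of $\widehat\Phi_g$. Decomposing $G$ into tangential and normal parts and absorbing the resulting $O(\|G\|^3)$ corrections yields the quadratic growth.

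Finally, the energy bound above makes $\{|\Phi_g^{k_0}|\}$ a minimizing sequence for $E_0$ as $k_0\to 0$, so uniqueness of $\widehat\Phi_g$ and the compact embedding of Lemma~\ref{lem:compact} force $|\Phi_g^{k_0}|\to\widehat\Phi_g$ in $L^2$ (in fact in $H^1$). For $|k_0|$ small enough that $\||\Phi_g^{k_0}|-\widehat\Phi_g\|\leq\delta_0$, local coercivity then yields $\||\Phi_g^{k_0}|-\widehat\Phi_g\|^2\leq k_0^2/c$; for the remaining bounded range of $k_0$ the trivial bound $\||\Phi_g^{k_0}|-\widehat\Phi_g\|\leq 2$ suffices after enlarging $C$. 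The main obstacle is the third step: rigorously deriving simplicity of $\widehat\mu$ as an eigenvalue of the linearized operator on the physical (real, positive) sector genuinely needs $\Omega<0$ in addition to semi-positive definiteness of $A$; without these signs, both the positivity of $\widehat\Phi_g$ and the spectral gap could fail, and the argument would collapse.
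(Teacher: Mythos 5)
Your overall strategy coincides with the paper's: an upper bound by testing $E$ at the real state $\widehat\Phi_g$ (where the $k_0$-term integrates to zero), a lower bound via the diamagnetic inequality plus $\Omega\,\mathrm{Re}(\phi_1\overline{\phi_2})\ge \Omega|\phi_1||\phi_2|$ for $\Omega<0$, giving $E_0(|\Phi_g|)-E_0(\widehat\Phi_g)\le Ck_0^2$, and then coercivity of $E_0$ around $\widehat\Phi_g$ coming from the spectral gap $\mu_2-\mu_1>0$ of the linearized operator together with the constraint identity $\langle \widehat\Phi_g,\Phi_e\rangle=-\tfrac12\|\Phi_e\|^2$. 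The genuine difference is in how the coercivity is obtained. You perform a second-order Taylor expansion with an $O(\|G\|_{H^1}^3)$ remainder, which forces you into a \emph{local} estimate: you then need a separate compactness argument to show $|\Phi_g^{k_0}|\to\widehat\Phi_g$ as $k_0\to0$, a case split between small and large $|k_0|$, and an absorption of the cubic remainder. That absorption is the soft spot of your write-up: the quadratic form only returns $c\|G\|_{L^2}^2$ via the spectral gap, and $O(\|G\|_{H^1}^3)$ is not $o(\|G\|_{L^2}^2)$ merely because $\|G\|_{H^1}\to0$; to make this rigorous you would have to upgrade the coercivity to the $X$-norm (which is true but needs an extra argument). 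The paper avoids all of this by observing that the expansion of $E_0$ at $\widehat\Phi_g$ is \emph{exact}: the non-quadratic part is precisely $\int\bigl[\sum_j\tfrac{\beta_{jj}}{2}(|\phi_j^g|^2-(\varphi_j^g)^2)^2+\beta_{12}(|\phi_1^g|^2-(\varphi_1^g)^2)(|\phi_2^g|^2-(\varphi_2^g)^2)\bigr]d\bx$, which is nonnegative by the semi-positive definiteness of $A$ and can simply be dropped, so that $E_0(|\Phi_g|)-E_0(\widehat\Phi_g)\ge \int\Phi_e^TL\Phi_e\,d\bx-\mu_1\|\Phi_e\|^2\ge(\mu_2-\mu_1)\,\tfrac12\|\Phi_e\|^2$ holds globally, uniformly in $k_0$, with no smallness assumption and no preliminary convergence step. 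Your approach can be completed, but the exact-expansion observation is what makes the paper's proof short and what you should incorporate to close the remainder issue.
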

\begin{proof}
First of all, recalling \eqref{eq:mini1} and \eqref{eq:Erf}, we have the lower bound of $E_g=E(\phi_1^g,\phi_2^g)$ as \cite{Bao2009,LiebL}
\be
E(\phi_1^g,\phi_2^g)\ge \tilde{E}_0(|\phi_1^g|,|\phi_2^g|)-|\Omega|\int_{\mathbb R^d}|\phi_1^g||\phi_2^g|d\bx
-\frac{k_0^2}{2}= E_0(|\phi_1^g|,|\phi_2^g|)-\frac{k_0^2}{2},
\ee
and the upper bound
\be
E(\phi_1^g,\phi_2^g)\leq E(\varphi_1^g,\varphi_2^g)=E_0(\varphi_1^g,\varphi_2^g).
\ee
Hence,
\be
E_0(|\phi_1^g|,|\phi_2^g|)-E_0(\varphi_1^g,\varphi_2^g)\leq \frac{k_0^2}{2}.
\ee
 In addition, $(\varphi_1^g,\varphi_2^g)^T\in S$ satisfies the nonlinear eigenvalue problem
\be
\begin{split} &\mu_1 \varphi_{1}^g=\left[-\frac{1}{2}\nabla^2
+V_1(\bx)+\frac{\delta}{2}
+(\beta_{11}|\varphi_{1}^g|^2+\beta_{12}|\varphi_{2}^g|^2)\right]\varphi_{1}^g+\frac{\Omega}{2}
\varphi_{2}^g, \\
&\mu_1 \varphi_{2}^g=\left[-\frac{1}{2}\nabla^2
+V_2(\bx)-\frac{\delta}{2}+(\beta_{12}|\varphi_{1}^g|^2+\beta_{22}|\varphi_{2}^g|^2)\right]\varphi_{2}^g
+\frac{\Omega}{2}
\varphi_{1}^g,\end{split} \ee
where $\mu_1$ is the corresponding eigenvalue (or chemical potential). For this nonlinear eigenvalue problem,
we denote the linearized operator $L$  acting on $\Phi=(\phi_1,\phi_2)^T $ as
\be
L\Phi=\begin{pmatrix}
L_1&\frac{\Omega}{2}\\
\frac{\Omega}{2}&L_2\end{pmatrix}\Phi,\quad
L_j=-\frac{1}{2}\nabla^2
+V_j(\bx)+\frac{\delta}{2}(3-2j)
+\sum\limits_{l=1}^2\beta_{jl}|\varphi_{l}^g|^2,\quad j=1,2.
\ee
It is clear that $(\varphi_1^g,\varphi_2^g)^T$ is an eigenfunction of $L$ with eigenvalue $\mu_1$ and by the nonnegativity of
$(\varphi_1^g,\varphi_2^g)^T$, $\mu_1$ is the smallest eigenvalue. In fact, the eigenfunctions $(\varphi_1^k,\varphi_2^k)^T\in S$
($k=1,2,\ldots$)
of $L$
corresponds to eigenvalue $\mu_k$ which can be arranged  in the nondecreasing order, i.e. $\mu_k$ is nondecreasing. The eigenfunctions form
an orthonormal basis
of $L^2(\mathbb R^d)\times L^2(\mathbb R^d)$ and $\mu_1<\mu_2$ with $(\varphi_1^g,\varphi_2^g)^T=(\varphi_1^1,\varphi_2^1)$ (positive ground state is unique).

Denoting $\Phi_{e}=(\phi_1^e,\phi_2^e)^T:=(|\phi_1^g|-\varphi_1^g,|\phi_2^g|-\varphi_2^g)$,
and using the Euler-Lagrange equation for $(\varphi_1^g,\varphi_2^g)^T\in S$, we find
\begin{align*}
E_0(|\phi_1^g|,|\phi_2^g|)=&\int_{\mathbb R^d}\bigg(
\sum\limits_{j=1}^2\frac{\beta_{jj}}{2}(|\phi_j^g|^2-|\varphi_j^g|^2)^2
+\beta_{12}(|\phi_1^g|^2-|\varphi_1^g|^2)(|\phi_2^g|^2-|\varphi_2^g|^2)\bigg)\,d\bx\\
&+E_0(\varphi_1^g,\varphi_2^g)+\int_{\mathbb R^d}\Phi_{e}^TL\Phi_e\,d\bx-\mu_1\|\Phi_e\|^2.
\end{align*}
Using the fact that $L+c$ ($c\ge0$ sufficiently large) induces an equivalent norm
in $X$, we can take expansion $(\phi_1^e,\phi_2^e)^T=\sum\limits_{k=1}^\infty c_k(\varphi_1^k,\varphi_2^k)^T$ with $
\sum\limits_{k=1}^\infty c_k^2=\|\Phi_e\|^2$, and estimate
\begin{align*}
\int_{\mathbb R^d}\Phi_e^TL\Phi_e\,d\bx=\sum\limits_{k=1}^\infty\mu_k c_k^2\ge\mu_1c_1^2+\mu_2(\|\Phi_e\|^2-c_1^2),
\end{align*}
with $c_1=\frac12\|\Phi^e\|^2=\frac{1}{2}(\||\phi_1^g|-\varphi_1^g\|^2+\||\phi_2^g|-\varphi_2^g\|^2)<1$. Hence, we obtain
\begin{align*}
E_0(|\phi_1^g|,|\phi_2^g|)- E_0(\varphi_1^g,\varphi_2^g)
\ge
(\mu_2-\mu_1)(2c_1-c_1^2)\ge(\mu_2-\mu_1)c_1.
\end{align*}
Since the gap $\mu_2-\mu_1$ is independent of $k_0$, we draw the conclusion.
\end{proof}

For large $k_0$, we have the similar results.
\begin{theorem} Suppose $\Omega<0$, $\lim\limits_{|\bx|\to\infty}V_j(\bx)=\infty$ ($j=1,2$)
 the matrix $A$ is  semi-positive definite and $I(\bx)\not\equiv0$.
Denoting $\tilde{\Phi}_g^{(0)}=(\tilde{\phi}_1^{g,0},\tilde{\phi}_2^{g,0})^T\in S$ as
the unique nonnegative ground state of \eqref{min987} (minimizer of $\tilde{E}_0(\cdot)$ of \eqref{eq:mini1} in $S$),
there exists a constant $C>0$ independent of $k_0$ such that the ground state $\Phi_g=(\phi_1^g,\phi_2^g)^T\in S$ of \eqref{eq:minimize}
satisfies
\be
\||\phi_1^g|-\tilde{\phi}_1^{g,0}\|+\||\phi_2^g|-\tilde{\phi}_2^{g,0}\|\leq C/\sqrt{k_0}.
\ee
\end{theorem}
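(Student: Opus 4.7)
The plan is to transplant the proof of Theorem \ref{thm:error} to the large-$k_0$ regime, working in the gauge-transformed formulation \eqref{eq:minimize1} and replacing the easy excess $k_0^2/2$ (provided there by \eqref{eq:reform}) by an $O(1/|k_0|)$ excess produced by oscillatory-integral estimates. I would first pass to $\tilde\Phi^{k_0}=(e^{-ik_0x}\phi_1^g,e^{ik_0x}\phi_2^g)^T$, a ground state of $\tilde E$. Since $|\tilde\phi_j^{k_0}|=|\phi_j^g|$, the bound to prove is equivalent to $\||\tilde\phi_1^{k_0}|-\tilde\phi_1^{g,0}\|+\||\tilde\phi_2^{k_0}|-\tilde\phi_2^{g,0}\|\le C/\sqrt{|k_0|}$, and the heart of the matter is to establish
\[
\tilde E_0(|\tilde\Phi^{k_0}|)-\tilde E_0(\tilde\Phi_g^{(0)})\le C/|k_0|;
\]
the spectral-gap step from the proof of Theorem \ref{thm:error} applied to the linearization of the Euler--Lagrange equation for $\tilde E_0$ at $\tilde\Phi_g^{(0)}$ will then upgrade this $1/|k_0|$ energy gap to the desired $1/\sqrt{|k_0|}$ error bound.

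For the upper bound on $\tilde E(\tilde\Phi^{k_0})$, I would test with $\tilde\Phi_g^{(0)}$ and use the splitting $\tilde E=\tilde E_0+\Omega\int\text{Re}(e^{i2k_0x}\tilde\phi_1\overline{\tilde\phi}_2)\,d\bx$ from \eqref{eq:mini1}. Since $\tilde\phi_j^{g,0}\in H^1\cap X$, the product $\tilde\phi_1^{g,0}\tilde\phi_2^{g,0}$ lies in $W^{1,1}(\mathbb R^d)$ by Cauchy--Schwarz, so a single integration by parts in $x$ yields $|\Omega\int e^{i2k_0x}\tilde\phi_1^{g,0}\tilde\phi_2^{g,0}\,d\bx|\le C/|k_0|$, and hence $\tilde E(\tilde\Phi^{k_0})\le \tilde E_0(\tilde\Phi_g^{(0)})+C/|k_0|$.

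For the matching lower bound I would combine the diamagnetic inequality $\tilde E_0(|\tilde\Phi^{k_0}|)\le \tilde E_0(\tilde\Phi^{k_0})$ (valid because all non-kinetic terms of $\tilde E_0$ depend only on $|\tilde\phi_j|$) with the identity $\tilde E_0(\tilde\Phi^{k_0})=\tilde E(\tilde\Phi^{k_0})-\Omega\int\text{Re}(e^{i2k_0x}\tilde\phi_1^{k_0}\overline{\tilde\phi_2^{k_0}})\,d\bx$. The crude estimate $|\Omega\int e^{i2k_0x}\tilde\phi_1^{k_0}\overline{\tilde\phi_2^{k_0}}|\le|\Omega|$ together with the upper bound above and semi-positive definiteness of $A$ bootstraps to the uniform-in-$k_0$ a priori estimate $\|\nabla\tilde\Phi^{k_0}\|_{L^2}\le C$; integration by parts in $x$ and Cauchy--Schwarz then upgrade the bound on the Raman term to
\[
\left|\Omega\int_{\mathbb R^d}e^{i2k_0x}\tilde\phi_1^{k_0}\overline{\tilde\phi_2^{k_0}}\,d\bx\right|\le \frac{C}{|k_0|}\|\nabla\tilde\Phi^{k_0}\|_{L^2}\le \frac{C}{|k_0|}.
\]
Combining, $\tilde E_0(|\tilde\Phi^{k_0}|)-\tilde E_0(\tilde\Phi_g^{(0)})\le C/|k_0|$.

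Finally, the closing spectral-gap argument is essentially verbatim from the proof of Theorem \ref{thm:error}: the assumption $I(\bx)\not\equiv 0$ together with $A$ semi-positive definite yields uniqueness of the positive minimizer $\tilde\Phi_g^{(0)}$ of $\tilde E_0$ and a $k_0$-independent gap $\mu_2-\mu_1>0$ for the corresponding linearized operator, so the eigenfunction expansion of the error $\tilde\Phi_e:=(|\tilde\phi_1^{k_0}|-\tilde\phi_1^{g,0},|\tilde\phi_2^{k_0}|-\tilde\phi_2^{g,0})^T$ gives $\tilde E_0(|\tilde\Phi^{k_0}|)-\tilde E_0(\tilde\Phi_g^{(0)})\ge\frac{1}{2}(\mu_2-\mu_1)\|\tilde\Phi_e\|^2$, whence $\|\tilde\Phi_e\|\le C/\sqrt{|k_0|}$. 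The main obstacle is the \emph{uniform} oscillatory estimate for the $k_0$-dependent integrand $\tilde\phi_1^{k_0}\overline{\tilde\phi_2^{k_0}}$: the qualitative Riemann--Lebesgue decay exploited in the proof of Theorem \ref{thm:k0change1} is not a priori quantitative, so one must first extract the uniform $H^1$ bound on $\tilde\Phi^{k_0}$ (which crucially uses semi-positive definiteness of $A$ to render the nonlinear terms nonnegative and the rough $|\Omega|$-bound to absorb the Raman coupling) before integrating by parts in $x$; this bootstrap has no analog in the short-$k_0$ proof and is the delicate ingredient of the argument.
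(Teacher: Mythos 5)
Your proposal is correct and follows essentially the same route as the paper's proof: pass to the gauge-transformed ground state $\tilde\Phi^{k_0}$ of \eqref{eq:minimize1}, integrate the oscillatory Raman term by parts in $x$ to get an $O(|\Omega|/|k_0|)$ upper bound (testing with $\tilde\Phi_g^{(0)}$) and a matching lower bound after a uniform a priori bound in $X$, apply the diamagnetic inequality to reach $\tilde E_0(|\tilde\Phi^{k_0}|)-\tilde E_0(\tilde\Phi_g^{(0)})\le C/|k_0|$, and close with the spectral-gap argument of Theorem \ref{thm:error}. The only (immaterial) difference is how the a priori bound is extracted — you use the crude $|\Omega|$ bound on the Raman term, while the paper absorbs a fraction of the kinetic energy via a Young-type inequality with $\vep=1/4$ — and both yield the same $\|\tilde\Phi^{k_0}\|_X\le C$.
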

\begin{proof} From \eqref{eq:equiv}, we know $\tilde{\Phi}_g=(\tilde{\phi}_1^g,\tilde{\phi}^g_2)^T=(e^{-ik_0x}\phi_1^g,e^{ik_0x}\phi_2^g)^T$ minimizes the energy $\tilde{E}$ in \eqref{eq:minimize1}.
Noticing
\begin{equation*}
\begin{split}
\Omega\int_{\mathbb R^d}\text{Re}(e^{2k_0ix}\tilde{\phi}_1^{g}\overline{\tilde{\phi}^{g}_2})\,d\bx&
=\frac{-\Omega}{2k_0}\int_{\mathbb R^d}\text{Re}\left(ie^{2k_0ix}(\p_x\tilde{\phi}_1^{g}\overline{\tilde{\phi}^{g}_2}+ie^{2k_0ix}(\tilde{\phi}_1^{g}\p_x\overline{\tilde{\phi}^{g}_2}\right)\,d\bx\\
&\ge-\vep(\|\p_x\tilde{\phi}_1^{g}\|^2+\|\p_x\tilde{\phi}_2^g\|^2)+\frac{\Omega^2}{4\vep k_0^2}(\|\tilde{\phi}_1^{g}\|^2+\|\tilde{\phi}_2^{g}\|^2),\quad \vep >0,
\end{split}
\end{equation*}
 we find
\begin{equation*}
\tilde{E}(\tilde{\phi}_1^g,\tilde{\phi}_2^g)\ge \tilde{E}_0(\tilde{\phi}_1^g,\tilde{\phi}_2^g)-\frac14(\|\p_x\tilde{\phi}_1^{g}\|^2
+\|\p_x\tilde{\phi}_2^g\|^2)-\frac{\Omega^2}{k_0^2}.
\end{equation*}
On the other hand, we have
\begin{equation*}
\tilde{E}(\tilde{\phi}_1^g,\tilde{\phi}_2^g)\leq \tilde{E}(\tilde{\phi}_1^{g,0},\tilde{\phi}_2^{g,0})\leq
\tilde{E}_0(\tilde{\phi}_1^{g,0},\tilde{\phi}_2^{g,0})+\frac{C_1|\Omega|}{k_0},
\end{equation*}
where $C_1>0$ is a constant. Thus, we know
\begin{equation*}
\|\tilde{\Phi}^g\|_X^2\leq C(1+\Omega^2/k_0^2),
\end{equation*}
and it follows that  for large $k_0$,
\begin{equation*}
\tilde{E}(\tilde{\phi}_1^g,\tilde{\phi}_2^g)\ge \tilde{E}_0(\tilde{\phi}_1^g,\tilde{\phi}_2^g)-C_2\frac{|\Omega|}{|k_0|}\|\tilde{\Phi}^g\|_X\ge
\tilde{E}_0(\tilde{\phi}_1^g,\tilde{\phi}_2^g)-C_3\frac{|\Omega|}{|k_0|},
\end{equation*}
where $C_2$ and $C_3$ are two positive constant. We then conclude
\begin{equation*}
\tilde{E}_0(|\tilde{\phi}_1^g|,|\tilde{\phi}_2^g|)\leq \tilde{E}_0(\tilde{\phi}_1^g,\tilde{\phi}_2^g)
\leq \tilde{E}(\tilde{\phi}_1^g,\tilde{\phi}_2^g)+\frac{C_3|\Omega|}{|k_0|}\leq \tilde{E}_0(\tilde{\phi}_1^{g,0},\tilde{\phi}_2^{g,0})+\frac{C|\Omega|}{k_0}.
\end{equation*}
The rest of the proof is similar to that in Theorem \ref{thm:error} and is omitted  here.
\end{proof}

\subsection{Competition between $\Omega$ and $k_0$}
In the previous subsection, we find that large Raman coupling $\Omega$ will remove the effect of SO coupling  $k_0$
in  the asymptotic profile of the ground states of \eqref{eq:minimize}
and the reverse is true, i.e.
there is a competition between these two parameters. Here, we are going to study how the relation
between $k_0$ and $\Omega$ affects the ground state profile of \eqref{eq:minimize}. The results
are summarized as follows.
\begin{theorem}\label{thm:order} Suppose $\lim\limits_{|\bx|\to\infty}V_j(\bx)=\infty$ ($j=1,2$),
 the matrix $A$ is either semi-positive definite or nonnegative, then
we have

(i) If $|\Omega|/|k_0|^2\gg1$, $|\Omega|\to\infty$, the ground state $\Phi_g=(\phi^g_1,\phi^g_2)^T$
of \eqref{eq:minimize} for the CGPEs \eqref{eq:cgpe1} converges
to a state $(\phi_g,\text{sgn}(-\Omega)\phi_g)^T$, where $\phi_g$
minimizes the energy \eqref{eq:oglim} under the constraint $\|\phi_g\|=1/\sqrt{2}$, i.e.
conclusion of Theorem \ref{thm:ogchange} holds.

(ii) If $|\Omega|/|k_0|\ll1$, $|k_0|\to\infty$, the ground state $\Phi_g=(\phi^g_1,\phi^g_2)^T$ of
\eqref{eq:minimize} for the CGPEs \eqref{eq:cgpe1} converges
to a state $(e^{-ik_0x}\tilde{\phi}_1^{g,0},e^{ik_0x}\tilde{\phi}_2^{g,0})^T$,
where $\widetilde{\Phi}_g^{(0)}=(\tilde{\phi}_1^{g,0},\tilde{\phi}_2^{g,0})^T$
is a ground state of  \eqref{min987} for the energy $E_s(\cdot)$ in \eqref{eq:mini1},
i.e.,  conclusion of Theorem \ref{thm:k0change} holds.

(iii) If $|k_0|\ll|\Omega|\ll |k_0|^2$ and $|k_0|\to\infty$, the
leading order of the ground state energy $E_g:=E(\Phi_g)$ of \eqref{eq:minimize} for the CGPEs
\eqref{eq:cgpe1} is given by
 $E_g=-\frac{k_0^2}{2}-C_0\frac{|\Omega|^2}{|k_0|^2}+o\left(\frac{|\Omega|^2}{|k_0|^2}\right)$,
 where $C_0>0$ is a generic constant.

\end{theorem}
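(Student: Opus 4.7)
The plan is to treat the three regimes separately: parts (i) and (ii) are quantitative refinements of Theorems \ref{thm:ogchange} and \ref{thm:k0change}, while part (iii) rests on a spectral computation for the spatially homogeneous linear part of the energy. Throughout I will use that the interactions are nonnegative under the hypothesis on $A$, that $V_j\ge 0$, and that the detuning satisfies $|\tfrac{\delta}{2}\int(|\phi_1|^2-|\phi_2|^2)d\bx|\le |\delta|/2$ on $S$.

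For part (i), I would revisit the proof of Theorem \ref{thm:ogchange} while keeping track of the explicit $k_0$-dependence of the error. Plugging the trial function $(\phi,\mathrm{sgn}(-\Omega)\phi)^T/\sqrt{2}$ with $\phi$ the minimizer of $E_s$ into $E(\cdot)$, the $k_0$-dependent SO contribution $ik_0\int(\bar\phi_j\partial_x\phi_j)\,d\bx$ is bounded by $Ck_0\|\nabla\phi\|$, which is $o(|\Omega|)$ under the hypothesis $|\Omega|/k_0^2\gg 1$ and hence negligible compared with the dominant $\Omega$-controlled part of $E_g$; the same compactness argument as in Theorem \ref{thm:ogchange} then gives the claimed limit. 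For part (ii), one integration by parts applied to the oscillatory integral $\Omega\int\mathrm{Re}(e^{2ik_0x}\tilde\phi_1\overline{\tilde\phi}_2)\,d\bx$ in the proof of Theorem \ref{thm:k0change1} yields the quantitative bound $C|\Omega|/|k_0|\cdot\|\tilde\Phi\|_X^2$; the hypothesis $|\Omega|/|k_0|\to 0$ makes this vanish, and the rest of the argument then coincides with that of Theorem \ref{thm:k0change}.

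For part (iii), the essential input is the spectrum of the spatially homogeneous spinor operator
\[
H_0(p)=\tfrac{|p|^2}{2}I+\begin{pmatrix}-k_0p_x & \Omega/2\\ \Omega/2 & k_0p_x\end{pmatrix},
\]
whose lower band $\lambda_-(p)=\tfrac{|p|^2}{2}-\sqrt{k_0^2p_x^2+\Omega^2/4}$ attains its minimum $-\tfrac{k_0^2}{2}-\tfrac{\Omega^2}{8k_0^2}$ at $p_x=\pm q$, $q=\sqrt{k_0^2-\Omega^2/(4k_0^2)}$ (real because $|\Omega|\ll k_0^2$), with associated unit eigenvector $(\alpha,\beta)^T$. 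For the upper bound I would use the trial state $\Phi_{\mathrm{tr}}(\bx)=\chi(\bx)e^{iqx}(\alpha,\beta)^T$, with $\chi\ge 0$ a fixed smooth normalized envelope independent of $k_0,\Omega$ (e.g.\ a Gaussian). A direct calculation then gives $E(\Phi_{\mathrm{tr}})=-\tfrac{k_0^2}{2}-\tfrac{\Omega^2}{8k_0^2}+O(1)$, where the $O(1)$ collects the gradient energy of $\chi$, the potential, the detuning, and the interaction energies. For the lower bound I drop the nonnegative terms $V_j|\phi_j|^2$ and the interaction energy, bound the detuning below by $-|\delta|/2$, and Fourier-diagonalize the remaining translation-invariant quadratic form to obtain $E(\Phi)\ge(\min\lambda_-)\|\Phi\|^2-|\delta|/2$. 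Since $|\Omega|^2/k_0^2\to\infty$ in the relevant regime, all $O(1)$ corrections are $o(|\Omega|^2/k_0^2)$, giving $E_g=-\tfrac{k_0^2}{2}-\tfrac{\Omega^2}{8k_0^2}+o(|\Omega|^2/k_0^2)$, i.e.\ $C_0=\tfrac{1}{8}$.

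The main obstacle is the accurate evaluation of the trial-function energy in part (iii): one must verify that choosing $\chi$ fixed and independent of $k_0,\Omega$ actually produces the band-bottom value of the homogeneous symbol up to $O(1)$, and that no hidden enhancement of the error appears from the nontrivial second spinor component (note $\beta=O(|\Omega|/k_0^2)\to 0$, yet the Raman contribution $\Omega\alpha\beta$ is precisely of order $\Omega^2/k_0^2$ and is where the leading correction originates, so the cancellation with the kinetic and SO pieces must be tracked carefully via the eigenvalue equation for $(\alpha,\beta)$). Parts (i) and (ii) require no conceptually new idea beyond the techniques of the two preceding subsections, only a quantitative bookkeeping of the scalings in $k_0$ and $\Omega$.
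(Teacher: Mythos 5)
Your proposal is correct, and while parts (i) and (ii) follow essentially the paper's own route, part (iii) takes a genuinely different and stronger path. For (i) the paper rewrites $\Omega\,\mathrm{Re}(\phi_1\bar\phi_2)$ as $-\tfrac{|\Omega|}{2}+\tfrac{|\Omega|}{2}|\phi_1-\phi_2|^2$ and controls the SO term via $ik_0\int(\bar\phi_1\p_x\phi_1-\bar\phi_2\p_x\phi_2)\,d\bx=ik_0\int[(\bar\phi_1-\bar\phi_2)\p_x\phi_1-(\phi_1-\phi_2)\p_x\bar\phi_2]\,d\bx$ plus Cauchy's inequality, yielding $\|\phi_1^g-\phi_2^g\|\le(4E_s^g+|\delta|+k_0^2)/|\Omega|\to0$; your sketch is the same bookkeeping, though you should make sure the SO term is absorbed into the kinetic energy \emph{for the ground state itself} (e.g. $|ik_0\int\bar\phi_j\p_x\phi_j|\le\tfrac14\|\nabla\phi_j\|^2+k_0^2$) so the coercivity in $\|\phi_1^g-\phi_2^g\|$ survives when $k_0\to\infty$ together with $\Omega$. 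Part (ii) coincides with the paper's single integration by parts on the oscillatory Raman integral. For (iii), the paper's lower bound costs $-2|\Omega|^2/k_0^2$ from the same integration-by-parts estimate, and its upper bound uses the perturbative ansatz $\tilde\phi_1=N_\vep\rho(x)[1-\vep\cos(2k_0x)]$, $\tilde\phi_2=\rho$ with $\vep=\gamma|\Omega|/k_0^2$; these trap the coefficient of $|\Omega|^2/k_0^2$ only between two unidentified constants, which is why the paper states a ``generic'' $C_0$ and remarks that identifying it is open. Your band-structure argument — Fourier-diagonalizing the translation-invariant quadratic part to get the sharp lower bound $\min_\xi\lambda_-(\xi)=-\tfrac{k_0^2}{2}-\tfrac{\Omega^2}{8k_0^2}$, then matching it up to $O(1)$ with the trial state $\chi(\bx)e^{iqx}(\alpha,\beta)^T$ built from the lower-band eigenvector at $q^2=k_0^2-\Omega^2/4k_0^2$ (real since $|\Omega|\ll k_0^2$) — is correct: the spinor part of the trial energy collapses exactly to $\lambda_-(q)$ by the eigenvalue relation, the envelope, potential, detuning and interaction contributions are uniformly $O(1)$, and $O(1)=o(|\Omega|^2/k_0^2)$ because $|\Omega|\gg|k_0|$ forces $|\Omega|^2/k_0^2\to\infty$. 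This not only proves (iii) but identifies $C_0=\tfrac18$, answering the question the paper leaves open; what you give up relative to the paper's test function is any information about the density modulation (the $O(|\Omega|/k_0^2)$-amplitude, frequency-$2k_0$ oscillation) that the paper's ansatz is designed to exhibit.
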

\begin{proof} Without loss of generality, we assume $\Omega<0$.

 (i) It is obvious that $\Phi_g$ also  minimizes
the following energy for $\Phi=(\phi_1,\phi_2)^T\in S$
\begin{eqnarray*} E(\Phi)&=&-\frac{|\Omega|}{2}+\int_{{\mathbb R}^d}\biggl[
\sum\limits_{j=1}^2\left(\frac12|\nabla\phi_{j}|^2+V_j(\bx)|\phi_j|^2\right)+\frac{\delta}{2}(
|\phi_1|^2-|\phi_2|^2)+ik_0\bar{\phi}_1\p_x\phi_1\\
&&-ik_0\bar{\phi}_2\p_x\phi_2+\frac{\beta_{11}}{2}|\phi_1|^4+\frac{\beta_{22}}{2}|\phi_2|^4
+\beta_{12}|\phi_1|^2|\phi_2|^2+\frac{|\Omega|}{2}
|\phi_1-\phi_2|^2\biggl]d\bx.
\end{eqnarray*}
A simple choice of testing state $(\phi_g,\phi_g)^T\in S$ shows that $E(\cdot)+\frac{|\Omega|}{2}$ is uniformly bounded from above, i.e.
\be
E_g+\frac{|\Omega|}{2}=E(\Phi_g)+\frac{|\Omega|}{2}\leq E(\phi_g,\phi_g)+\frac{|\Omega|}{2}=2E_s(\phi_g):=2E_{s}^g.
\ee
To get a lower bound for $E_g$, using Cauchy inequality, we  have for any $\vep>0$,
\begin{align*}
\int_{\mathbb R^d}ik_0\left(\bar{\phi}_1\p_x\phi_1-\bar{\phi}_2\p_x\phi_2\right)\,d\bx
=&\int_{\mathbb R^d}ik_0\left[(\bar{\phi}_1-\bar{\phi}_2)\p_x\phi_1-(\phi_1-\phi_2)\p_x\bar{\phi}_2\right]\,d\bx\\
\ge&-\frac{\vep}{2}(\|\p_x\phi_1\|^2+\|\p_x\phi_2\|^2)-\frac{k_0^2}{2\vep}\|\phi_1-\phi_2\|^2.
\end{align*}
Hence, by setting $\vep=1$ in the above inequality and recalling $\|\phi_1-\phi_2\|\leq \sqrt{2}$ for $\Phi=(\phi_1,\phi_2)^T\in S$,
we bound $E_g$ from below by
\be
E_g+\frac{|\Omega|}{2}\ge -\frac{|\delta|}{2}-\frac{k_0^2}{2}+\frac{|\Omega|}{2}\|\phi_1^g-\phi_2^g\|^2.
\ee
Combining the upper and lower bounds of $E_g+\frac{|\Omega|}{2}$, we get
\be
\|\phi_1^g-\phi_2^g\|\leq \frac{4E_{s}^g+|\delta|}{|\Omega|}+\frac{k_0^2}{|\Omega|}.
\ee
If $k_0^2/|\Omega|=o(1)$ and $|\Omega|\to\infty$, we see $\phi_1^g-\phi_2^g\to 0$ in $L^2$ and the ground state sequence $\Phi^g=(\phi_1^g,\phi_2^g)^T$
is bounded in $X$. Analogous to the proof  in Theorem \ref{thm:k0change} and \cite{Bao2009}, we can draw the conclusion and the detail
is omitted here.

(ii) It is equivalent to prove that in this case, the ground state $\tilde{\Phi}_g=(\tilde{\phi}^g_1,\tilde{\phi}^g_2)^T=(e^{-ik_0x}\phi^g_1,e^{ik_0x}\phi^g_2)^T
 $of \eqref{eq:minimize1} converges to the ground state
of \eqref{min987}.
Using integration by parts and Cauchy inequality, we get
\begin{equation}\label{eq:ch1}
\begin{split}
\Omega\int_{\mathbb R^d}\text{Re}(e^{i2k_0x}\tilde{\phi}_1^{g}\overline{\tilde{\phi}^{g}_2})\,d\bx&=\frac{\Omega}{2k_0}\int_{\mathbb R^d}\text{Re}\left(ie^{i2k_0x}
\left(\p_x\tilde{\phi}_1^{g}\overline{\tilde{\phi}^{g}_2}+\tilde{\phi}_1^{g}
\p_x\overline{\tilde{\phi}^{g}_2}\right)\right)\,d\bx\\
&\ge-\frac{|\Omega|}{2|k_0|}(\|\p_x\tilde{\phi}_1^{g}\|\,\|\tilde{\phi}_2^{g}\|+
\|\p_x\tilde{\phi}_2^{g}\|\,\|\tilde{\phi}_1^{g}\|).
\end{split}
\end{equation}
Having this in hand, we could proceed as in the proof of Theorem \ref{thm:k0change}.

 (iii)  Similar to the case of (ii),  we need only consider the ground state $\tilde{\Phi}_g=(\tilde{\phi}_1^g,\tilde{\phi}_2^g)^T\in S$ of \eqref{eq:minimize1}.
 Applying Cauchy inequality in \eqref{eq:ch1}, we have
\begin{equation}\label{eq:ch}
\begin{split}
\Omega\int_{\mathbb R^d}\text{Re}(e^{2k_0ix}\tilde{\phi}_1^{g}\overline{\tilde{\phi}^{g}_2})\,d\bx
\ge-\frac14\|\p_x\tilde{\phi}_1^{g}\|-\frac14\|\p_x\tilde{\phi}_2^{g}\|-\frac{2|\Omega|^2}{|k_0|^2}.
\end{split}
\end{equation}
By choosing sufficiently smooth (e.g. $H^3\cap X$ ) test states for $\tilde{E}(\cdot)$ and using
integration by parts as \eqref{eq:ch1}, it is straightforward to get the
upper bound
\be\label{eq:upp}
\tilde{E}(\tilde{\phi}_1^g,\tilde{\phi}_2^g)\leq C+\frac{|\Omega|}{|k_0|^3}.
\ee
Combining \eqref{eq:ch} and \eqref{eq:upp}, we find that
\be\label{eq:telower}
\tilde{E}(\tilde{\phi}_1^g,\tilde{\phi}_2^g)\ge C-\frac{2|\Omega|^2}{|k_0|^2},
\qquad\|\tilde{\Phi}_g\|_{X}^2\leq C+\frac{2|\Omega|^2}{|k_0|^2}+\frac{|\Omega|}{|k_0|^3}\leq C\frac{|\Omega|^2}{|k_0^2|}.
\ee
Then, it follows from \eqref{eq:ch1}  that
\be
\left|\Omega\int_{\mathbb R^d}\text{Re}(e^{2k_0ix}\tilde{\phi}_1^{g}\overline{\tilde{\phi}^{g}_2})\,d\bx\right|\leq \frac{|\Omega|}{|k_0|}\|\tilde{\Phi}_g\|_X=O\left(\frac{|\Omega|^2}{|k_0|^2}\right).
\ee
On the other hand, we can choose test states as follows. In one dimension, let $\rho(x)$ be a $C_0^\infty$ even real-valued function with $\|\rho\|=\sqrt{2}/2$ and we choose
\be
\tilde{\phi}_1(x)=N_\vep\rho(x)[1-\vep\cos(2k_0x)],\quad \tilde{\phi}_2(x)=\rho(x),
\ee
here $N_\vep$ is a normalization constant to ensure that $(\tilde{\phi}_1,\tilde{\phi}_2)^T\in S$ and it is clear that $N_\vep$ is close to 1 for small $\vep$ and large $k_0$.
Recalling $\tilde{E}_0(\cdot)$ in \eqref{eq:mini1}, we can calculate
\be\label{eq:te0}
\tilde{E}_0(\tilde{\phi}_1,\tilde{\phi}_2)=C_1+C_2\vep^2 |k_0|^2+o(\vep^2|k_0|^2),
\ee
and
\begin{equation*}
\begin{split}
\Omega\int_{\mathbb R}\text{Re}(e^{2k_0ix}\tilde{\phi}_1\overline{\tilde{\phi}}_2)\,dx&
=\Omega\int_{\mathbb R}\text{Re}\left(e^{2k_0ix}\rho^2(x)\right)dx
+\vep\Omega\int_{\mathbb R}\cos^2(2k_0x)\rho^2(x)dx\\\
&=\frac{\vep\Omega}{2}\int_{\mathbb R}\rho^2(x)\,dx+\frac{\Omega}{2}\int_{\mathbb R}\left[2\cos(2k_0x)+\cos(4k_0x)\right]\rho^2(x)dx,
\end{split}
\end{equation*}
where the second integral on the RHS is of arbitrary order at $O(|\Omega|/|k_0|^m)$ ($m\ge0$) by
using integration by parts and the property of $\rho(x)$.
Hence, we find
\be\label{eq:og1}
\Omega\int_{\mathbb R}\text{Re}(e^{2k_0ix}\tilde{\phi}_1\overline{\tilde{\phi}}_2)\,dx=-\frac{|\Omega|\vep}{2}+
o\left(\frac{|\Omega|}{|k_0|^3}\right).
\ee
Now, we get from \eqref{eq:mini1}, \eqref{eq:te0} and \eqref{eq:og1} that
\be
\tilde{E}(\tilde{\phi}_1,\tilde{\phi}_2)=C_1+C_2\vep^2 |k_0|^2-|\Omega|\vep/4+o(|\Omega|/|k_0|^3).
\ee
Since $|\Omega|\ll |k_0|^2$, we can choose $\vep=\gamma|\Omega|/|k_0|^2$ and $\gamma>0$ be sufficiently
small such that
 the term $C_2\vep^2|k_0|^2=C_2\gamma|\Omega|\vep\leq \frac{|\Omega|\vep}{8}$. So, we arrive at
 \be\label{eq:telupper}
 \tilde{E}(\tilde{\phi}_1^g,\tilde{\phi}_2^g)\leq \tilde{E}(\tilde{\phi}_1,\tilde{\phi}_2)\leq C-\frac{|\Omega|^2\gamma}{8|k_0|^2}+o\left(\frac{|\Omega|^2}{|k_0|^2}\right).
 \ee
 In two and three dimensions, similar constructions will show the same estimates.
 Thus the conclusion is an immediate consequence of \eqref{eq:reform},
 \eqref{eq:telower} and \eqref{eq:telupper}.
\end{proof}
\begin{remark}
For $|k_0|\ll|\Omega|\ll |k_0|^2$, the ground state of \eqref{eq:minimize} is
much more complicated. In such situation,
the above theorem shows that  oscillation of ground state densities
may occur at the order of  $O(|\Omega|/|k_0|^2)$ in amplitude
and $k_0$ in frequency. Such density oscillation is predicted in the physics literature
\cite{Li2012,Li2013}, known as the density modulation. It is of great interest
to identify the constant $C_0$ in  the conclusion (iii).
\end{remark}

\section{Numerical methods and results}\label{sec:nm}
In this section, we present efficient and accurate numerical methods for computing the
ground states based on (\ref{eq:minimize}) (or (\ref{eq:minimize1})) and dynamics based on the CGPEs
\eqref{eq:cgpe1}  (or (\ref{eq:cgpe199:sec9})) for the SO-coupled BEC.

\subsection{For computing ground states}
Let $t_n=n\tau$ ($n=0,1,2,\ldots$) be the time steps with $\tau>0$ as time step.
In order to compute the ground state
$\Phi_g=(\phi_1^g,\phi_2^g)^T$ of \eqref{eq:minimize} for a SO-coupled BEC,
we propose the following gradient flow with discrete normalization (GFDN), which is widely used in computing
the ground states of BEC \cite{Bao1,Bao2009,Bao2013,Wz1,WangH} and also known as the imaginary
time method in the physics literature.
In detail, we evolve an initial state $\Phi_0:=(\phi_1^{(0)},\phi_2^{(0)})^T$ through the following GFDN
\begin{equation}\label{eq:gfdn}
\begin{split}
&\partial_t \phi_1=\left[\frac{1}{2}\nabla^2
-V_1(\bx)-ik_0\p_x-\frac{\delta}{2}
-\sum_{l=1}^2\beta_{1l}|\phi_l|^2\right]\phi_1-\frac{\Omega}{2}
\phi_2, \quad t\in[t_n, t_{n+1}),\\
&\partial _t \phi_2=\left[\frac{1}{2}\nabla^2
-V_2(\bx)+ik_0\p_x+\frac{\delta}{2}-\sum_{l=1}^2\beta_{2l}|\phi_l|^2\right]
\phi_2-\frac{\Omega }{2}
\phi_1,\quad t\in[t_{n},t_{n+1}),\\
&\phi_1(\bx,t_{n+1})=\frac{\phi_1(\bx,t_{n+1}^-)}{\|\Phi(\cdot,t_{n+1}^-)\|},
\quad \phi_2(\bx,t_{n+1})=\frac{\phi_2(\bx,t_{n+1}^-)}{\|\Phi(\cdot,t_{n+1}^-)\|},\quad \bx\in\mathbb R^d,\\
&\phi_1(\bx,0)=\phi_1^{(0)}(\bx),\quad \phi_2(\bx,0)=\phi_2^{(0)}(\bx),\quad \bx\in\mathbb R^d.
\end{split}
\end{equation}

Due to the confining potentials $V_1(\bx)$ and $V_2(\bx)$,
the ground state $\Phi_g(\bx)$ decays exponentially fast when $|\bx|\to\infty$,
thus in practical computations, the above GFDN \eqref{eq:gfdn}
is first truncated on a bounded large computational domain $U$, e.g.
an interval $[a,b]$ in 1D, a rectangle $[a,b]\times[c,d]$ in 2D and
a box $[a,b]\times[c,d]\times[e,f]$ in 3D, with periodic boundary conditions.
Then the GFDN on $U$ can be further discretized in space via the pseudospectral
method with the Fourier basis or second-order central finite difference method
and in time via backward Euler scheme \cite{Bao2013,Bao2006,Wz1}. For details, we refer to
\cite{Bao2009,Bao2013,Bao2006,Wz1} and references therein.

\begin{figure}[htb]
\centerline{
(a)\psfig{figure=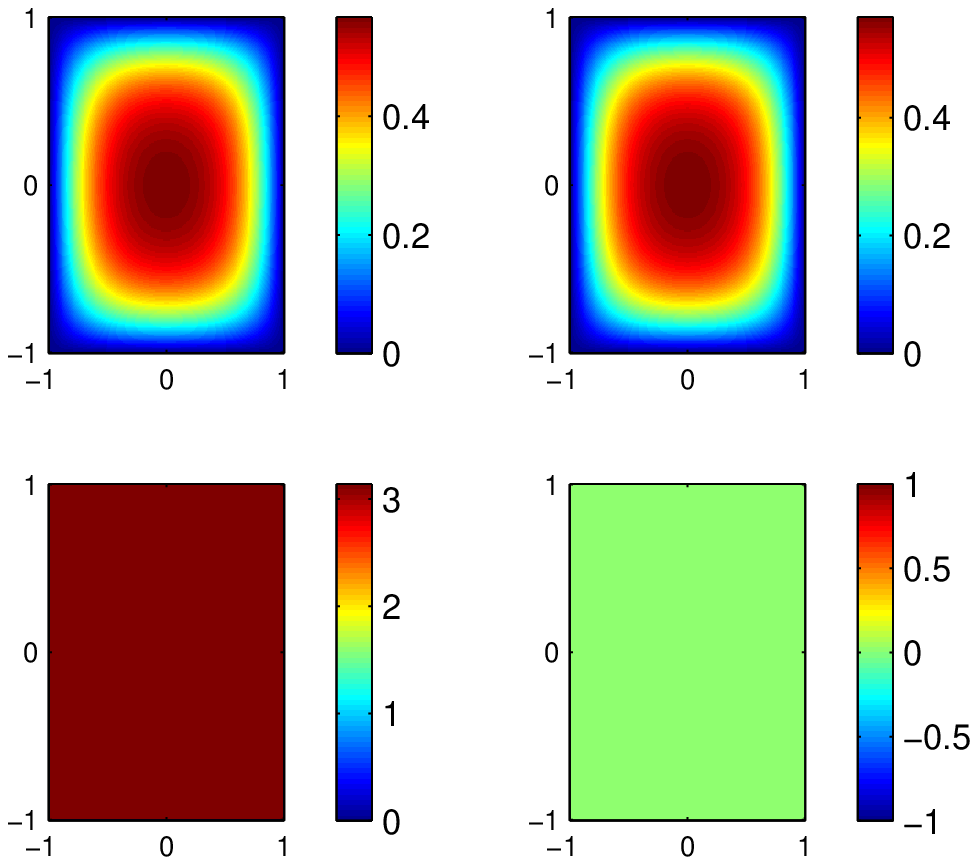,height=5cm,width=6cm,angle=0} \quad
(b)\psfig{figure=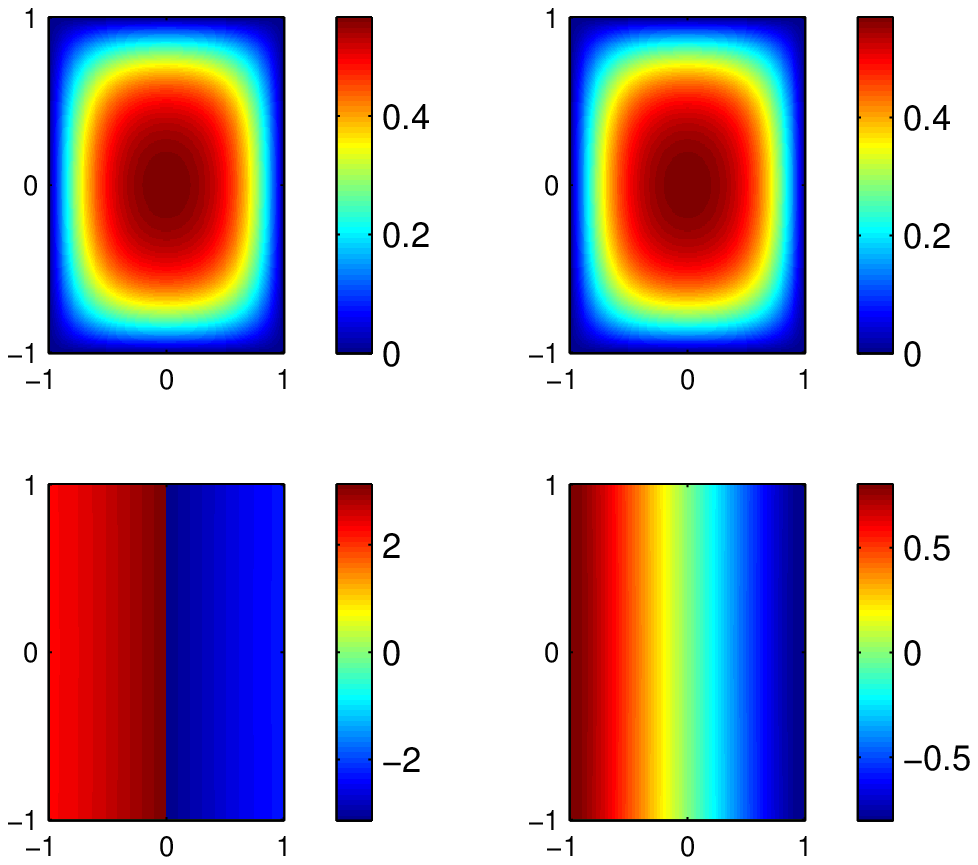,height=5cm,width=6cm,angle=0}}
\centerline{
(c)\psfig{figure=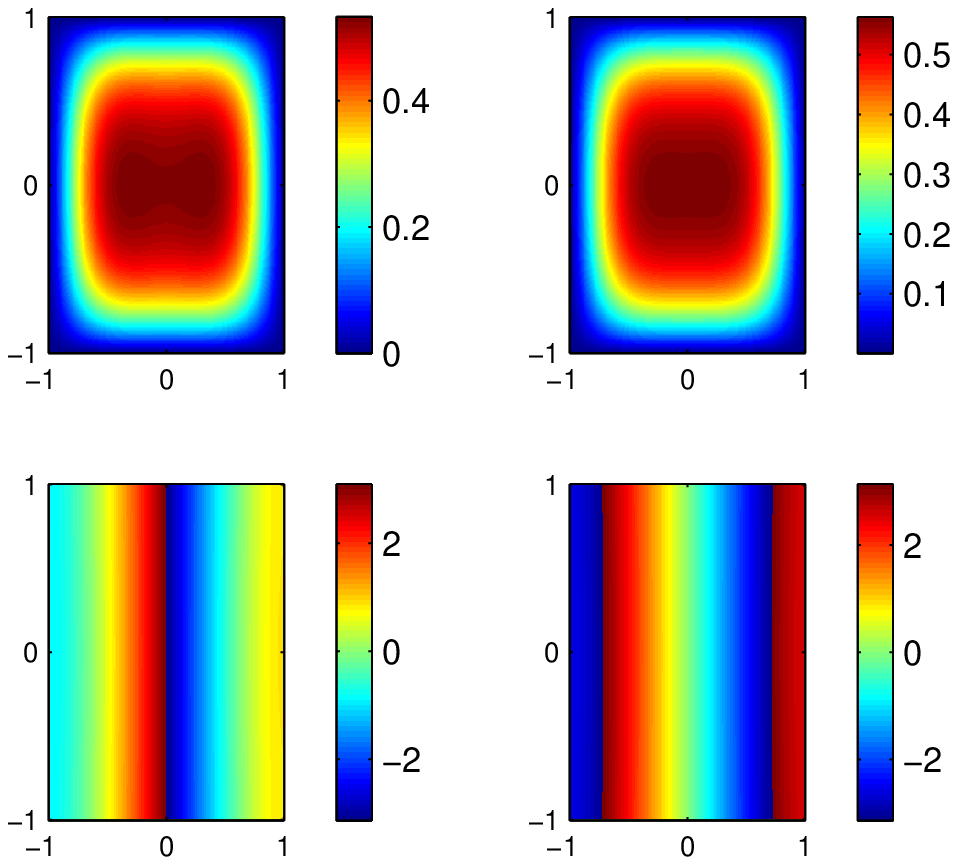,height=5cm,width=6cm,angle=0} \quad
(d)\psfig{figure=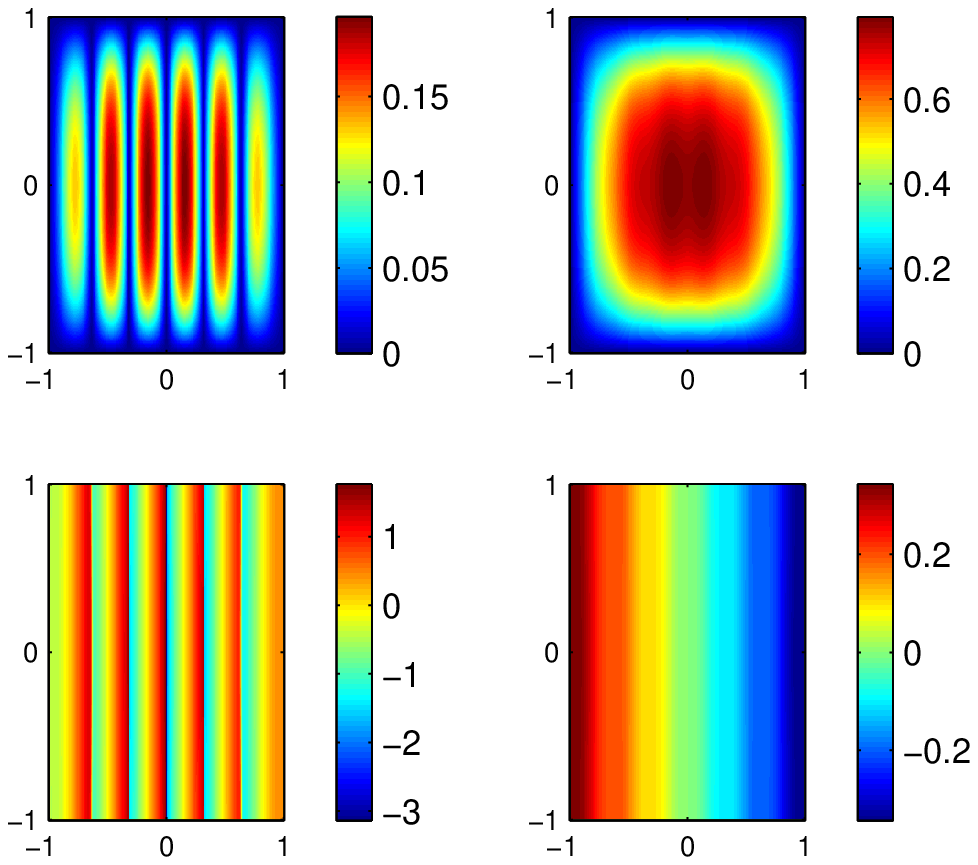,height=5cm,width=6cm,angle=0}}
\centerline{
(e)\psfig{figure=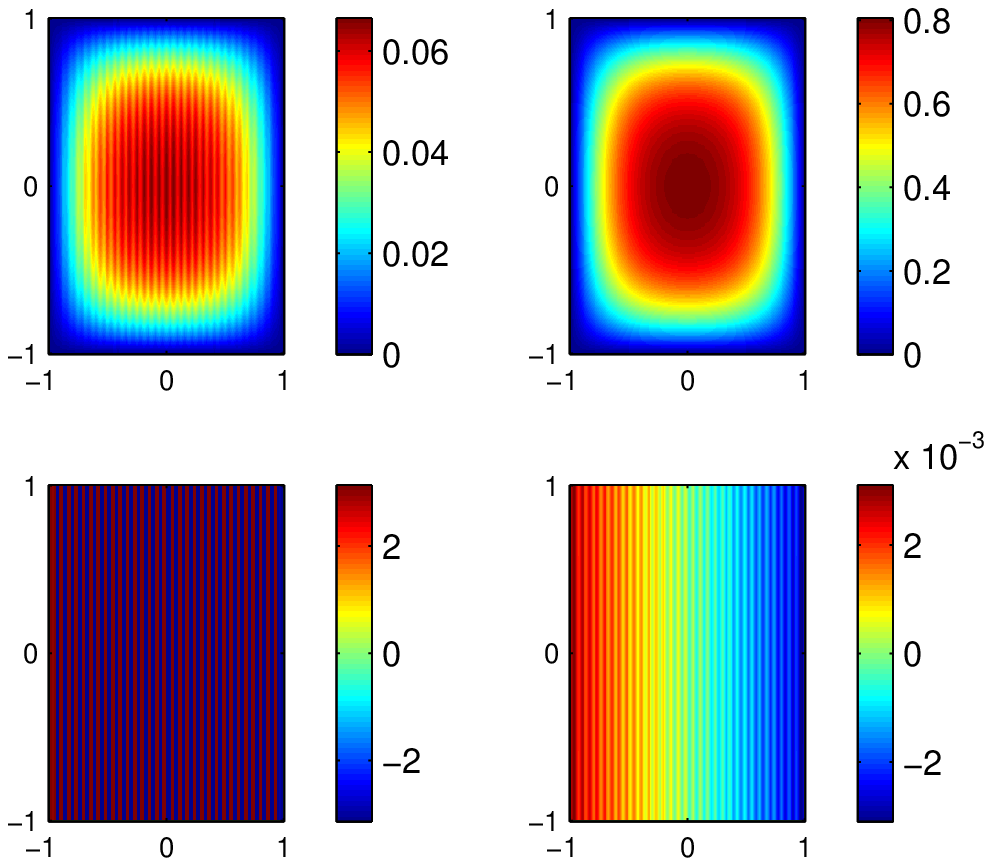,height=5cm,width=6cm,angle=0} \quad
(f)\psfig{figure=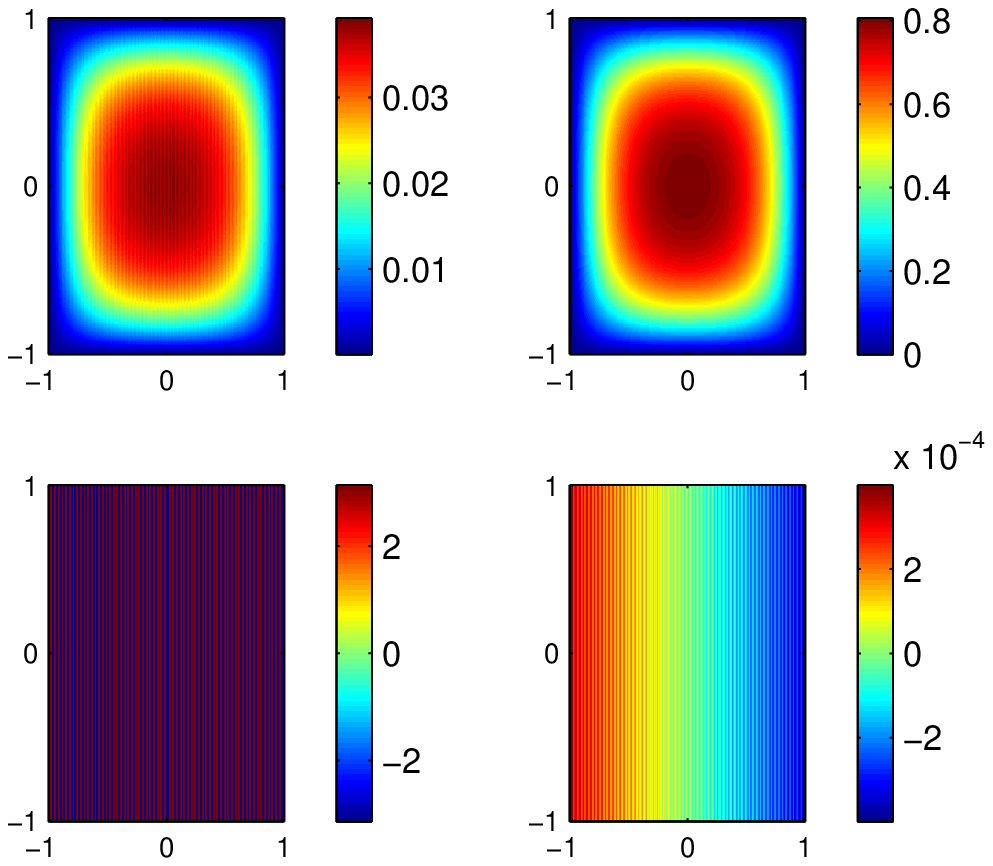,height=5cm,width=6cm,angle=0}}

\caption{Ground states $\tilde{\Phi}_g=(\tilde{\phi}_1^g,\tilde{\phi}_2^g)^T$
for a SO-coupled BEC in 2D with $\Omega=50$, $\delta=0$,
$\beta_{11}=10$, $\beta_{12}=\beta_{21}=\beta_{22}=9$ for:
(a) $k_0=0$, (b) $k_0=1$, (c) $k_0=5$, (d) $k_0=10$,
(e) $k_0=50$, and (f) $k_0=100$. In each subplot, top panel shows densities
and bottom panel shows phases of the ground state $\tilde{\phi}_1^g$ (left column)
and $\tilde{\phi}_2^g$ (right column). \label{fig:1}}
\end{figure}

\begin{figure}[htb]
\centerline{
(a)\psfig{figure=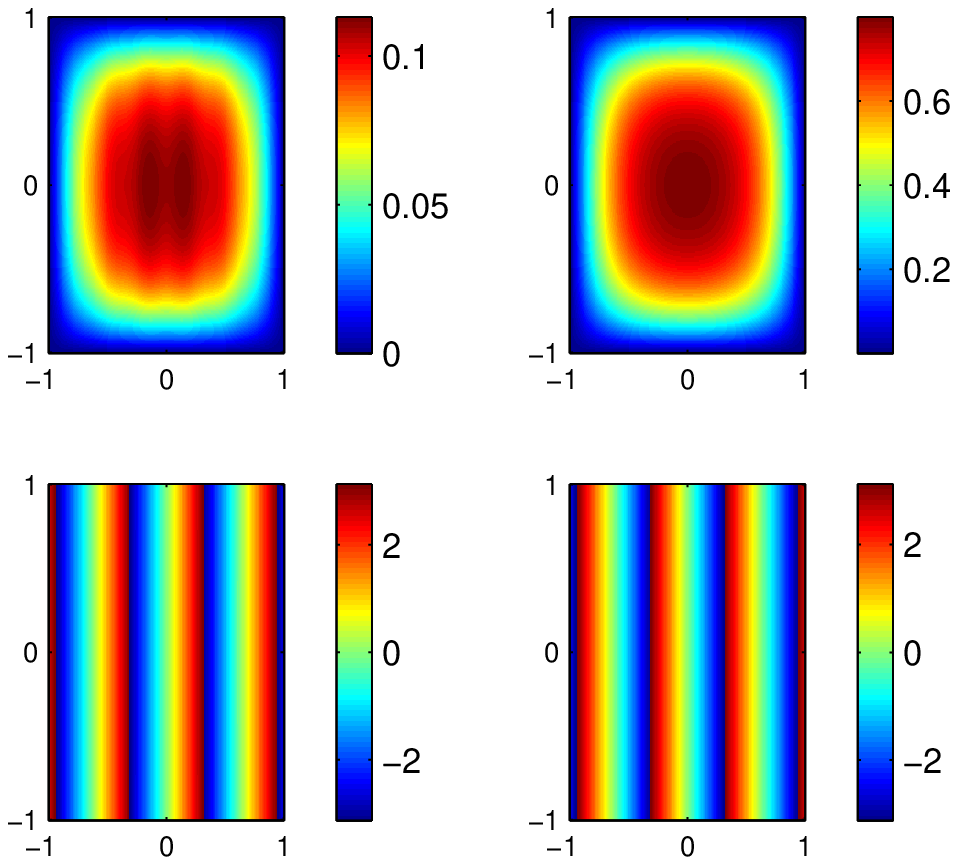,height=5cm,width=6cm,angle=0} \quad
(b)\psfig{figure=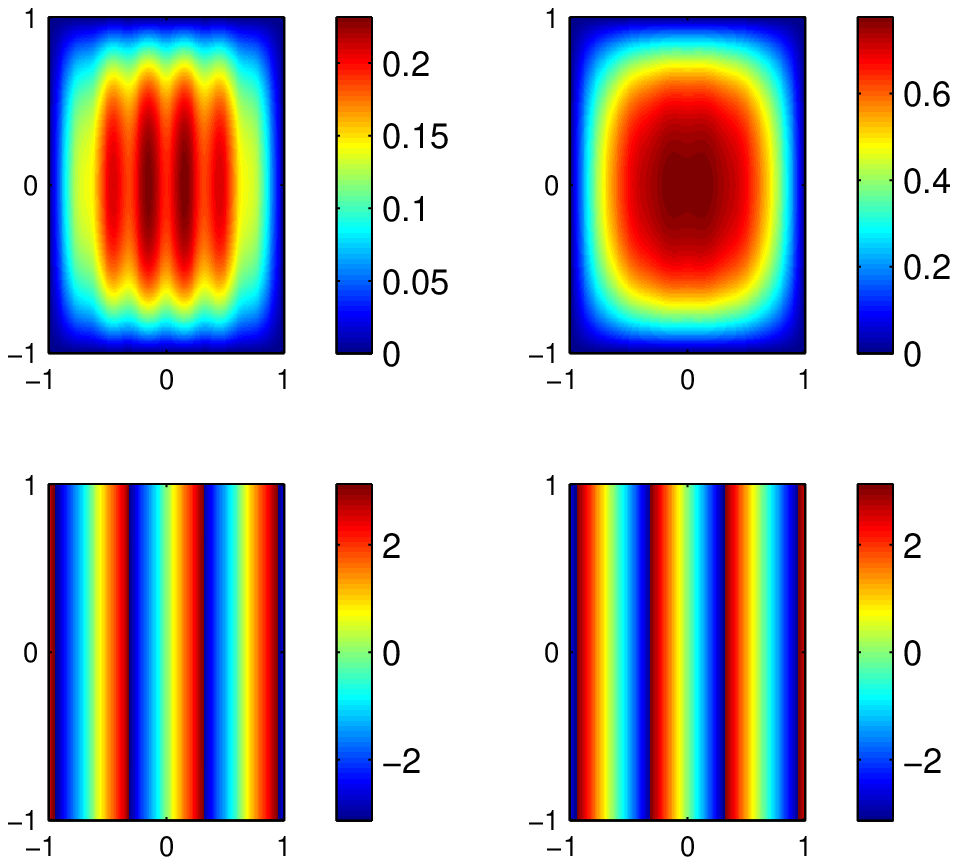,height=5cm,width=6cm,angle=0}}
\centerline{
(c)\psfig{figure=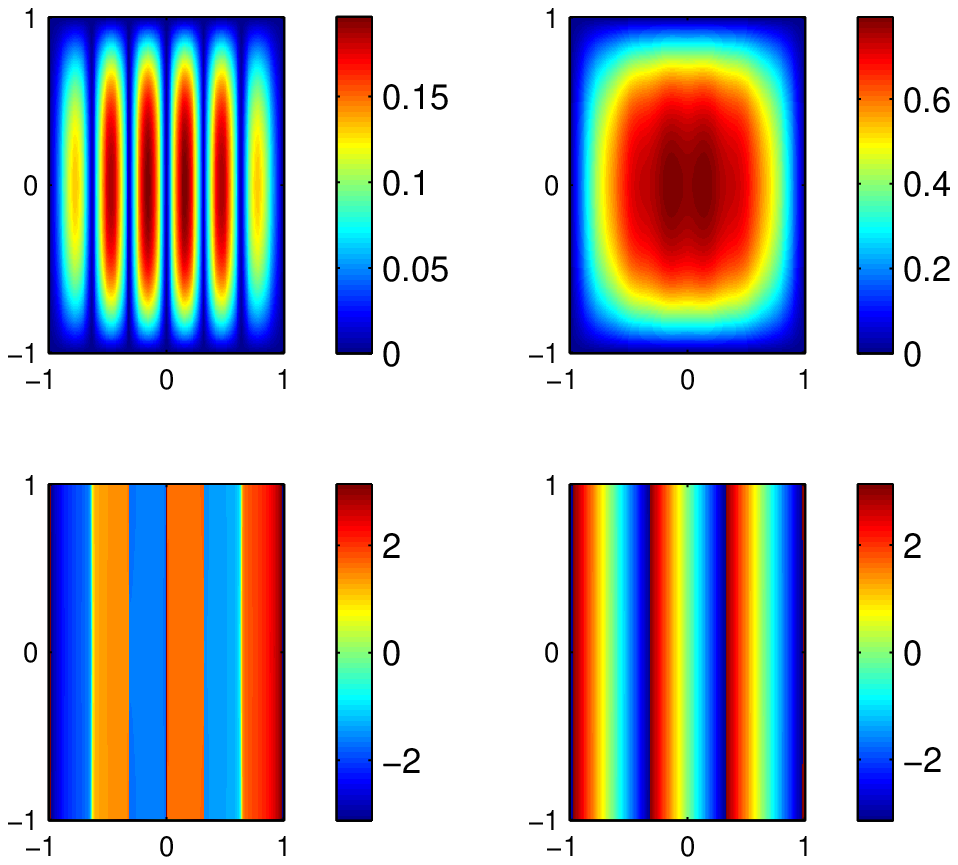,height=5cm,width=6cm,angle=0} \quad
(d)\psfig{figure=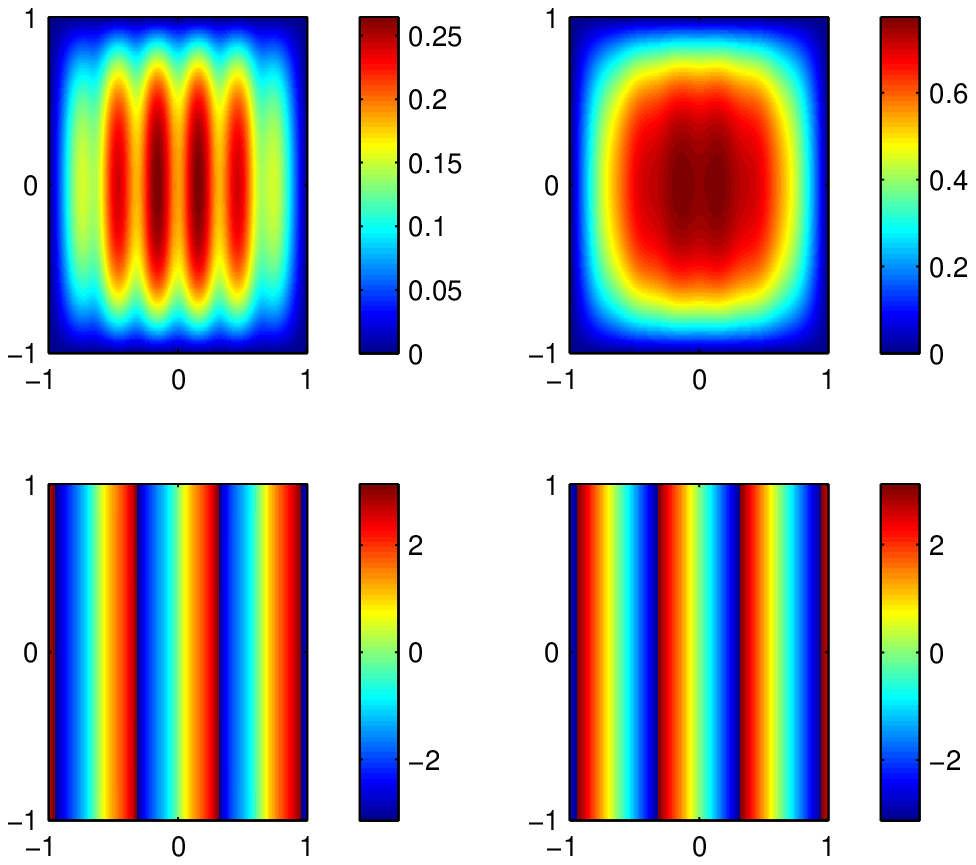,height=5cm,width=6cm,angle=0}}
\centerline{
(e)\psfig{figure=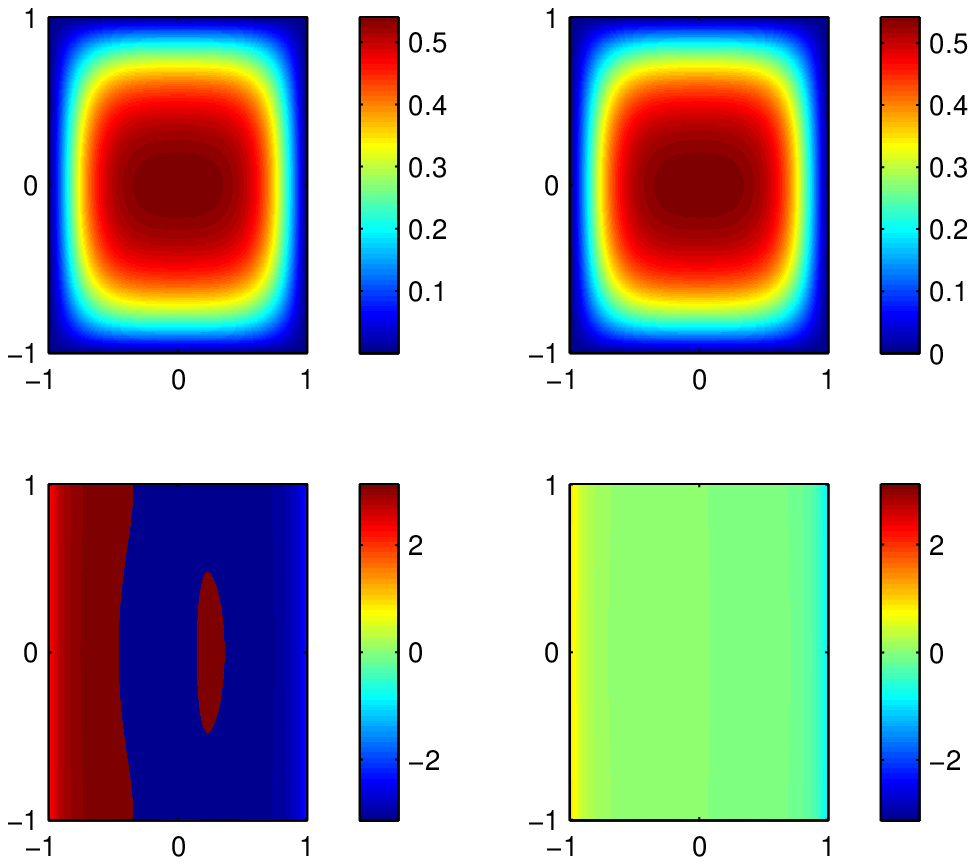,height=5cm,width=6cm,angle=0} \quad
(f)\psfig{figure=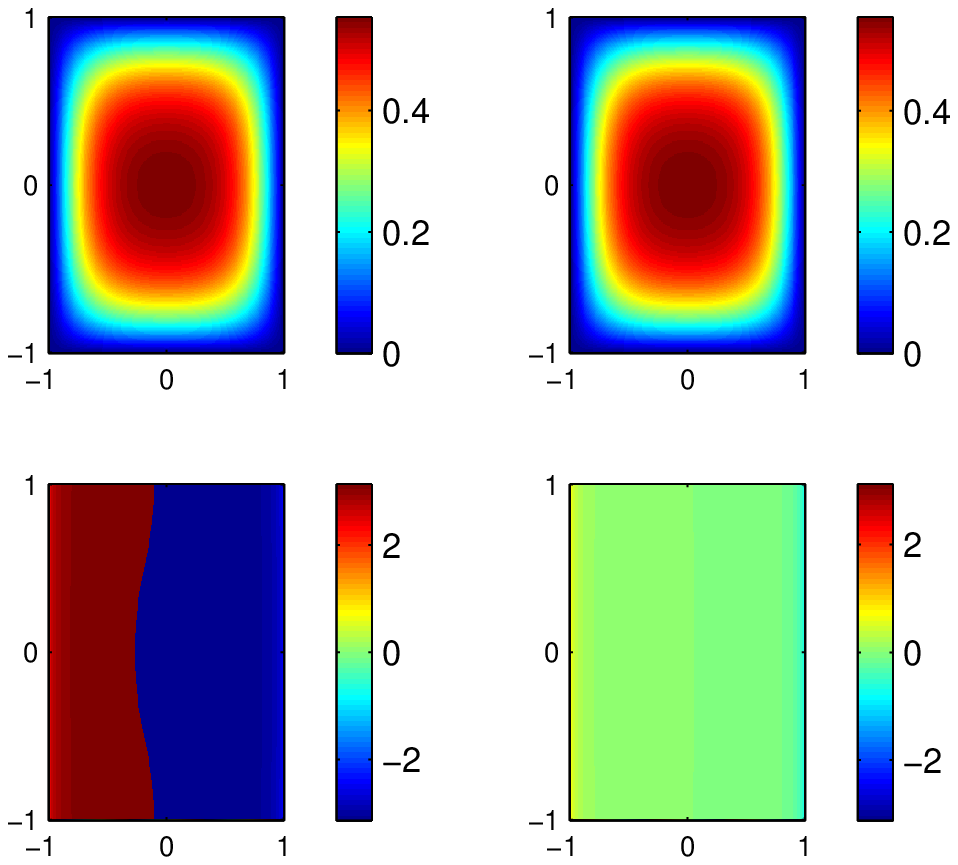,height=5cm,width=6cm,angle=0}}

\caption{\label{fig:2}Ground states $\Phi_g=(\phi_1^g,\phi_2^g)^T$
for a SO-coupled BEC in 2D with $k_0=10$, $\delta=0$,
$\beta_{11}=10$, $\beta_{12}=\beta_{21}=\beta_{22}=9$ for:
(a) $\Omega=1$, (b) $\Omega=10$, (c) $\Omega=50$, (d) $\Omega=200$,
(e) $\Omega=300$, and (f) $\Omega=500$. In each subplot, top panel shows densities
and bottom panel shows phases of the ground state $\phi_1^g$ (left column)
and $\phi_2^g$ (right column).}
\end{figure}

\begin{remark} If the box potential
\begin{equation}\label{eq:box}
V_{{\rm box}}(\bx)=\begin{cases}0,&\bx \in U,\\
+\infty,&\text{otherwise},
\end{cases}
\end{equation}
is used in the CGPEs \eqref{eq:cgpe1}
instead of the harmonic potentials (\ref{eq:potential:ho}),
due to the appearance of the SO coupling, in order to compute the ground state,
it is better to construct the GFDN based on (\ref{eq:minimize1}) and then discretize it
via the backward Euler sine pseudospectral (BESP) method due to that the homogeneous Dirichlet
boundary condition on $\partial U$ must be used in this case. Again, for details, we refer to
\cite{Bao2009,Bao2013,Bao2006,Wz1} and references therein.
\end{remark}

To test the efficiency and accuracy of the above numerical method for computing
the ground state of SO-coupled BECs, we take $d=2$,  $\delta=0$,
$\beta_{11}:\beta_{12}:\beta_{22}=1:0.9:0.9$ with $\beta_{11}=10$ in
\eqref{eq:cgpe1}. The potential $V_j(\bx)$ ($j=1,2$)
is taken as the box potential given in \eqref{eq:box} with $U=[-1,1]\times[-1,1]$.
We compute the ground state via the above BESP method with mesh size $h=\frac{1}{128}$
and time step $\tau=0.01$ ($\tau=0.001$ for large $\Omega$).
For the chosen parameters, it is easy to find that when $\Omega=0$, the ground
state $\Phi_g$ satisfies $\phi_1^g=0$ \cite{Bao1,Bao2009}.
Figure \ref{fig:1} shows the ground state $\tilde{\Phi}^g=(\tilde{\phi}_1^g,\tilde{\phi}_2^g)$
of \eqref{eq:minimize1} with $\Omega=50$ for different $k_0$,
which clearly demonstrates that as $k_0\to\infty$,  effect of $\Omega$
disappears. This is consistent with Theorem \ref{thm:k0change1}.
Figure \ref{fig:2} depicts the ground state $\Phi^g$ with $k_0=10$ for different $\Omega$,
from which we can observe that
$\phi_1^g$ and $\phi_2^g$ tend to have the same density profile with opposite
phase. This confirms Theorem \ref{thm:ogchange}.

\subsection{For computing dynamics}
In order to compute the dynamics of a SO-coupled BEC based on the CGPEs
\eqref{eq:cgpe1}, we usually truncate it onto a bounded computational domain $U$, e.g.
an interval $[a,b]$ in 1D,
a rectangle $[a,b]\times[c,d]$ in 2D and
a box $[a,b]\times[c,d]\times[e,f]$ in 3D, equipped
with periodic boundary conditions. Then
the CGPEs \eqref{eq:cgpe1} can be solve via a time-splitting technique to decouple
the nonlinearity \cite{Bao2003,Bao1,Bao2013,Antoine}. From $t_n$ to $t_{n+1}$, one first solves
\be\label{eq:split1}
\begin{split}
&i\p_t\psi_1=\left(-\frac12\Delta+ik_0\p_x+\frac{\delta}{2}\right)\psi_1+\frac{\Omega}{2}\psi_2,\\
&i\p_t\psi_2=\left(-\frac12\Delta-ik_0\p_x-\frac{\delta}{2}\right)\psi_2+\frac{\Omega}{2}\psi_1,
\end{split} \qquad \bx\in U,
\ee
for  time $\tau$, followed by solving
\be\label{eq:split2}
\begin{split}
&i\p_t\psi_1=\left(V_1(\bx)+\beta_{11}|\psi_1|^2+\beta_{12}|\psi_2|^2\right)\psi_1,\\
&i\p_t\psi_2=\left(V_2(\bx)+\beta_{21}|\psi_1|^2+\beta_{22}|\psi_2|^2\right)\psi_2,
\end{split} \qquad \bx\in U,
\ee
for another time $\tau$. Eq. \eqref{eq:split1} with periodic boundary conditions can be discretized
by the Fourier spectral method in space and then integrated in time {\sl exactly} \cite{Bao2003,Bao1,Bao2013,Antoine}.
Eq. \eqref{eq:split2} leaves the densities $|\psi_1|$ and $|\psi_2|$ unchanged and it can
be integrated in time {\sl exactly} \cite{Bao2003,Bao1,Bao2013,Antoine}. Then
a full discretization scheme can be constructed via a
combination of the splitting steps \eqref{eq:split1} and \eqref{eq:split2}
with a second-order or higher-order time-splitting methods \cite{Bao2003,Bao1,Bao2013,Antoine}.

For the convenience of the readers, here we present the method in 1D for the simplicity of notations.
Extensions to 2D and 3D are straightforward. In 1D, let $h=\Delta x=(b-a)/N$
($N$ an even positive integer), $x_j=a+jh$ ($j=0,\ldots,N$),
$\Psi_j^n=(\psi_{1,j}^n,\psi_{2,j}^n)^T$ be the numerical approximation
of $\Psi(x_j,t_n)=(\psi_1(x_j,t_n),\psi_{2}(x_j,t_n))^T$,
and for each fixed $l=1,2$, denote $\psi_{l}^n$  to be the vector consisting of
$\psi_{l,j}^n$ for $j\in{\cal T}_N=\{0,1,\ldots,N-1\}$.
From time $t=t_n$ to $t=t_{n+1}$,  a second-order time-splitting Fourier pseudospectral (TSFP)
method for the CGPEs \eqref{eq:cgpe1} in 1D reads \cite{Bao2003,Bao2013,Antoine}
\begin{equation}
\begin{split}
&\Psi^{(1)}_j
 =\frac{1}{N}\sum\limits_{k=-N/2}^{N/2-1}e^{i\mu_k(x_j-a)}\,Q_k^T\,e^{-\frac{i\tau}{4} U_k}\;Q_k(\widetilde{\Psi^n})_k,\\
&\Psi_j^{(2)}=e^{-i\tau P_j^{(1)}}\;\Psi_j^{(1)},  \qquad \qquad j=0,1,\ldots, N-1, \\
&\Psi^{n+1}_j
 =\frac{1}{N}\sum\limits_{k=-N/2}^{N/2-1}e^{i\mu_k(x_j-a)}\,Q_k^T\,e^{-\frac{i\tau}{4} U_k}\;Q_k(\widetilde{\Psi^{(2)}})_k,
\end{split}
 \end{equation}
where for each fixed $k=-\frac{N}{2},-\frac{N}{2}+1,\ldots,\frac{N}{2}-1$, $\mu_k=\frac{2 k\pi}{b-a}$,
$(\widetilde{\Psi}^n)_k=((\widetilde{\psi_{1}^n})_k,(\widetilde{\psi}_{2}^n)_k)^T$
with $(\widetilde{\psi_l^n})_k$ being the discrete Fourier transform
coefficients of $\psi_l^n$ ($l=1,2$), $U_k=\text{diag}\left(\mu_k^2+2\lambda_k, \mu_k^2-2\lambda_k\right)$
is a diagonal matrix, and
\[
 Q_k=\begin{pmatrix}\frac{\sqrt{\lambda_k-\chi_k}}{\sqrt{2\lambda_k}}&\frac{\frac{\Omega}{2}}
 {\sqrt{2\lambda_k(\lambda_k-\chi_k)}}\\
  \frac{-\sqrt{\lambda_k+\chi_k}}{\sqrt{2\lambda_k}}
&\frac{\frac{\Omega}{2}}{\sqrt{2\lambda_k(\lambda_k+\chi_k)}}\end{pmatrix}\quad \hbox{with}\quad
\chi_k=k_0\mu_k-\frac{\delta}{2},\quad \lambda_k=\frac{1}{2}\sqrt{4\chi_k^2+\Omega^2},
\]
and
$P_j^{(1)}=\text{diag}\left(V_1(x_j)+\sum\limits_{l=1}^2\beta_{1l}|\psi_{l,j}^{(1)}|^2,
V_2(x_j)+\sum\limits_{l=1}^2\beta_{2l}|\psi_{l,j}^{(1)}|^2\right)$ for $j=0,1\ldots,N-1$.

\subsection{Box potential case}
In some recent experiments of SO-coupled BEC, the box potential \eqref{eq:box} is used.
In this situation, due to that the homogeneous Dirichlet boundary condition on $\partial U$ must be used
for the CGPEs \eqref{eq:cgpe1},  similarly to the computation of the ground states,
it is better to adopt the CGPEs \eqref{eq:cgpe199:sec9} for computing the dynamics.
From $t=t_n$ to $t_{n+1}$, the CGPEs \eqref{eq:cgpe199:sec9} will be split into the
following three steps due to the appearance of the SO coupling.  One first solves
\be\label{eq:split1::2}
\begin{split}
&i\p_t\tilde{\psi}_1=\left(-\frac12\Delta+\frac{\delta}{2}\right)\tilde{\psi}_1,\\
&i\p_t\tilde{\psi}_2=\left(-\frac12\Delta-\frac{\delta}{2}\right)\tilde{\psi}_2,
\end{split}\qquad \bx\in U,
\ee
for  time step $\tau$,  then solves
\be\label{eq:split2:2}
\begin{split}
&i\p_t\tilde{\psi}_1=\left(V_1(\bx)+\beta_{11}|\tilde{\psi}_1|^2
+\beta_{12}|\tilde{\psi}_2|^2\right)\tilde{\psi}_1,\\
&i\p_t\tilde{\psi}_2=\left(V_2(\bx)+\beta_{12}|\tilde{\psi}_1|^2+\beta_{22}
 |\tilde{\psi}_2|^2\right)\tilde{\psi}_2,
\end{split}\qquad \bx\in U,
\ee
for time step $\tau$, followed by solving
\be\label{eq:split3:2}
\begin{split}
&i\p_t\tilde{\psi}_1=\frac{\Omega}{2} e^{-i2k_0x}\tilde{\psi}_2,\\
&i\p_t\tilde{\psi}_2=\frac{\Omega}{2} e^{i2k_0x}\tilde{\psi}_1,
\end{split}\qquad \bx\in U,
\ee
for  time step $\tau$. Again, Eq. \eqref{eq:split1::2} with homogeneous Dirichlet boundary
conditions can be discretized
by the sine spectral method in space and then integrated in time {\sl exactly} \cite{Bao2003,Bao1,Bao2013,Antoine}.
Eq. \eqref{eq:split2:2} leaves the densities $|\tilde\psi_1|$ and $|\tilde\psi_2|$ unchanged and it can
be integrated in time {\sl exactly} \cite{Bao2003,Bao1,Bao2013,Antoine}.
In addition, Eq. \eqref{eq:split3:2} is a linear ODE and can be integrated in time {\sl exactly} as
\be
\tilde{\Psi}(\bx,t_{n+1})=T(x)^\ast\,e^{-i\tau \Omega J}\, T(x)\, \tilde{\Psi}(\bx,t_n),\ \hbox{with}\ T(x)=\frac{1}{\sqrt{2}}\begin{pmatrix}1 &e^{-i2k_0x}\\
-1 &e^{-i2k_0x}\end{pmatrix},
\ee
where $J=\text{diag}(-1,1)$ and $T(x)^\ast=\overline{T(x)}^T$ is the adjoint matrix of $T(x)$.
Then a full discretization scheme can be constructed via a
combination of the splitting steps \eqref{eq:split1::2}-\eqref{eq:split3:2}
with a second-order  method \cite{Bao2003,Bao1,Bao2013,Antoine}. The details are omitted here for brevity.

\section{Dynamics of SO-coupled BEC}
In this section, we study dynamical properties,
in particular the motion of center-of-mass, of a SO-coupled BEC by using the CGPEs \eqref{eq:cgpe1}.
\subsection{Dynamics of center-of-mass}
Let $\Psi=(\psi_1,\psi_2)^T$ be the wave function describing the SO-coupled BEC, which  is governed by the CGPEs \eqref{eq:cgpe1}. Define
 the center-of-mass of the BEC as
\be \label{com321}
\bx_c(t)
=\int_{\mathbb R^d}\bx\sum\limits_{j=1}^2|\psi_j(\bx,t)|^2\,d\bx,\qquad t\ge0,
\ee
and the momentum as
\be\label{eq:momentum}
{\bP}(t)=\int_{\mathbb R^d}\sum\limits_{j=1}^2\text{Im}(\overline{\psi_j(\bx,t)}
\nabla\psi_j(\bx,t))\,d\bx,\qquad t\ge0,
\ee
where $\text{Im}(f)$ denotes the imaginary part of the function $f$.
In addition, we introduce the difference of the masses  $N_1(t)$ and $N_2(t)$ in
\eqref{eq:Njtt1} of the two components in the SO-coupled BEC as
\be\label{eq:Ndiff}
\delta_N(t):=N_1(t)-N_2(t)=\int_{\mathbb R^d}\left[|\psi_1(\bx,t)|^2-|\psi_2(\bx,t)|^2\right]\,d\bx,
\qquad t\ge0.
\ee
Then the following lemma holds.

\begin{lemma}\label{lem:com} Let $V_1(\bx)=V_2(\bx)$ be the
$d$-dimensional ($d=1,2,3$) harmonic potentials given in \eqref{eq:potential:ho},
then the  motion of the center-of-mass $\bx_c(t)$ for the CGPEs \eqref{eq:cgpe1} is governed by
        \be\label{eq:xc}
\ddot\bx_c(t)=-\Lambda \bx_c(t)-2k_0\Omega\,\text{Im}\left(\int_{\mathbb R^d}\overline{\psi_1(\bx,t)}\psi_2(\bx,t)\,d\bx\right)\,{\bf e}_x, \qquad t>0,
        \ee
where $\Lambda$ is a $d\times d$ diagonal matrix with $\Lambda=\gamma_x^2$ in 1D ($d=1$), $\Lambda=\text{diag}(\gamma_x^2,\gamma_y^2)$ in 2D ($d=2$)
and $\Lambda=\text{diag}(\gamma_x^2,\gamma_y^2,\gamma_z^2)$ in 3D ($d=3$),
${\bf e}_x$ is the unit vector for $x$-axis.
The  initial conditions for \eqref{eq:xc} are given as
        \begin{equation*}
        \bx_c(0)=\int_{\mathbb R^d}\bx\sum\limits_{j=1}^2|\psi_j(\bx,0)|^2\,d\bx,\qquad
        \dot{\bx}_c(0)=\bP(0)-k_0 \delta_N(0)\,{\bf e}_x.
        \end{equation*}
 In particular, \eqref{eq:xc} implies that the center-of-mass $\bx_c(t)$ is
 periodic  in $y$-component with frequency $\gamma_y$ when $d=2,3$, and in $z$-component
 with frequency $\gamma_z$ when $d=3$.
 If $k_0\Omega=0$, $\bx_c(t)$ is also periodic in $x$-component with frequency $\gamma_x$.
\end{lemma}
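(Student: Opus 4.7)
The plan is to derive the ODE \eqref{eq:xc} by differentiating $\bx_c(t)$ twice and applying an Ehrenfest-type calculation. First I would derive a modified continuity equation for each density: multiplying the $\psi_1$-equation of \eqref{eq:cgpe1} by $\overline{\psi_1}$ and taking twice the imaginary part yields
\be
\partial_t |\psi_1|^2 + \nabla\cdot\text{Im}(\overline{\psi_1}\nabla\psi_1)=k_0\partial_x|\psi_1|^2+\Omega\,\text{Im}(\overline{\psi_1}\psi_2),
\ee
and analogously for $|\psi_2|^2$ with the opposite signs on the SO and Raman contributions. Multiplying by $\bx$, summing over $j=1,2$, and integrating by parts (the rapid decay imposed by the confining potentials kills all boundary terms), the Raman contributions cancel while the two SO terms combine to give $-k_0\delta_N(t){\bf e}_x$, producing the first-order identity
\be\label{eq:pln:xcdot}
\dot{\bx}_c(t)=\bP(t)-k_0\,\delta_N(t)\,{\bf e}_x,
\ee
from which the stated initial condition for $\dot{\bx}_c(0)$ follows directly.

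Second, I would differentiate \eqref{eq:pln:xcdot} once more. Subtracting the two continuity equations and integrating, the divergence and $\partial_x$ terms drop out at infinity, leaving
\be
\dot\delta_N(t)=2\Omega\,\text{Im}\left(\int_{\mathbb R^d}\overline{\psi_1}\psi_2\,d\bx\right).
\ee
The remaining ingredient is $\dot\bP(t)$, computed via the Ehrenfest-type identity $\dot\bP=2\sum_j\int\text{Im}(\overline{\partial_t\psi_j}\nabla\psi_j)\,d\bx$. Substituting $\partial_t\psi_j$ from the CGPEs and analyzing each term: the kinetic operator contributes a total gradient after integration by parts and hence vanishes; the harmonic potentials with $V_1=V_2$ give $-\int(|\psi_1|^2+|\psi_2|^2)\nabla V\,d\bx=-\Lambda\bx_c$; the self-interaction and $\delta/2$ pieces produce gradients of scalars; and the nonlinear cross-coupling terms cancel pairwise because $\beta_{12}=\beta_{21}$.

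The main obstacle is verifying that the SO-coupling and Raman terms contribute nothing to $\dot\bP$. For the SO piece one must show $\int\text{Im}(\partial_x\overline{\psi_j}\,\partial_\alpha\psi_j)\,d\bx=0$ for every Cartesian $\alpha$: when $\alpha=x$ the integrand is pointwise zero, and when $\alpha\neq x$ two successive integrations by parts (first in $x$, then in $\alpha$) combined with the identity $\text{Im}(\overline{a}b)=-\text{Im}(\overline{b}a)$ force the integral to equal its own negative. For the Raman coupling, the two-component sum satisfies $\text{Re}(\overline{\psi_2}\nabla\psi_1)+\text{Re}(\overline{\psi_1}\nabla\psi_2)=\nabla\,\text{Re}(\overline{\psi_1}\psi_2)$, so this contribution also reduces to the integral of a pure gradient. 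Combining these cancellations yields $\dot\bP(t)=-\Lambda\bx_c(t)$, and differentiating \eqref{eq:pln:xcdot} produces exactly \eqref{eq:xc}. Finally, the periodicity assertions follow by inspection of the components: since the inhomogeneous driving in \eqref{eq:xc} points along ${\bf e}_x$, the $y$- and $z$-components satisfy the uncoupled harmonic oscillator equations $\ddot y_c+\gamma_y^2 y_c=0$ and $\ddot z_c+\gamma_z^2 z_c=0$, giving periodic motion with frequencies $\gamma_y$ and $\gamma_z$; when in addition $k_0\Omega=0$, the $x$-component equation reduces to $\ddot x_c+\gamma_x^2 x_c=0$ as well.
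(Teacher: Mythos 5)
Your proposal is correct and follows essentially the same route as the paper: both arguments reduce to the three identities $\dot{\bx}_c(t)=\bP(t)-k_0\delta_N(t){\bf e}_x$, $\dot\bP(t)=-\Lambda\bx_c(t)$, and $\dot\delta_N(t)=2\Omega\,\mathrm{Im}\int_{\mathbb R^d}\overline{\psi_1}\psi_2\,d\bx$, obtained by Ehrenfest-type computations with integration by parts. Your organization via the local continuity equation rather than direct differentiation of the first moments is only a cosmetic difference, and all the cancellations you identify (Raman terms in $\dot\bx_c$, SO and Raman terms in $\dot\bP$, the cross-interaction terms combining into a total gradient via $\beta_{12}=\beta_{21}$) match the paper's computation.
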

\begin{proof} For $j=1,2$, differentiating $\bx_j(t)=\int_{\mathbb R^d}\bx|\psi_j(\bx,t)|^2\,d\bx$,
using the CGPEs \eqref{eq:cgpe1}
and integral by parts, we find
\begin{equation*}
\dot \bx_j(t)=\bP_j(t)-k_0(3-2j)N_j(t)\,{\bf e}_x-\frac{i\Omega}{2}\int_{\mathbb R^d}\bx\left(\overline{\psi}_j\psi_{3-j}-\psi_j\overline{\psi}_{3-j}\right)\,d\bx,
\end{equation*}
where $\bP_j(t):=\int_{\mathbb R^d}\text{Im}(\overline{\psi_j(\bx,t)}\nabla\psi_j(\bx,t))\,d\bx$.
Summing the above equation for $j=1,2$ and noticing (\ref{com321}) and (\ref{eq:momentum}), we get
\be\label{eq:xdotc}
\dot \bx_c(t)=\bP(t)-k_0\delta_N(t)\, {\bf e}_x,\qquad t\ge0.
\ee
Differentiating \eqref{eq:xdotc} once more, we get
\be\label{eq:xddotc}
\ddot \bx_c(t)=\dot\bP(t)-k_0\dot \delta_N(t)\,{\bf e}_x.
\ee
We now compute the RHS of \eqref{eq:xddotc}. Firstly, for $j=1,2$,
differentiating $\bP_j(t)$,  making use of the CGPEs \eqref{eq:cgpe1} and integral by parts,
we get
\begin{equation*}
\dot\bP_j(t)=
\int_{\mathbb R^d}\left[-|\psi_j|^2\nabla V_j(\bx)-\beta_{12}|\psi_{3-j}|^2\nabla|\psi_j|^2+\Omega\,\text{Re}(\overline{\psi}_{3-j}\nabla\psi_j)\right]\,d\bx,
\end{equation*}
which immediately gives
\be\label{eq:Pdot}
\dot\bP(t)=-\int_{\mathbb R^d}\Lambda \bx \sum\limits_{j=1}^2|\psi_j|^2\,d\bx=-\Lambda \bx_c(t),
\ee
with $\Lambda$ being the diagonal matrix described in the lemma. Secondly, for $j=1,2$,
differentiating $N_j(t)$,
making use of the CGPEs \eqref{eq:cgpe1} and integral by parts, we obtain
 \begin{equation*}
\dot N_j(t)=-\frac{i\Omega}{2}\int_{\mathbb R^d}\left(\overline{\psi}_j\psi_{3-j}-\overline{\psi}_{3-j}\psi_j\right)\,d\bx,
  \end{equation*}
  which again immediately implies
  \be\label{eq:DNdt}
  \dot \delta_N(t)=2\Omega\, \text{Im}\int_{\mathbb R^d}\overline{\psi_1(\bx,t)}\psi_2(\bx,t)\,d\bx.
  \ee
 Combining \eqref{eq:DNdt}, \eqref{eq:Pdot} and \eqref{eq:xddotc}, we draw the conclusion.
\end{proof}

From Lemma \ref{lem:com}, the effect of SO coupling on the motion of the center-of-mass
$\bx_c(t)$  appears in the $x$-component. Denote the $x$-component of $\bx_c(t)$
as $x_c(t)$, and the $x$-component of $\bP(t)$ as $P^x(t)$. Then we have the following results:

\begin{theorem}\label{thm:comapp} Let $V_1(\bx)=V_2(\bx)$ be the harmonic potential as \eqref{eq:potential:ho} in $d$ dimensions ($d=1,2,3$) and $k_0\Omega\neq0$. For the $x$-component $x_c(t)$ of the
center-of-mass $\bx_c(t)$ of the  CGPEs \eqref{eq:cgpe1} with any initial data $\Psi(\bx,0):=\Psi_0(\bx)$ satisfying
$\|\Psi_0\|=1$, we have
\be\label{eq:xcc}
x_c(t)=x_0\cos(\gamma_xt)+\frac{P_0^x}{\gamma_x}\sin(\gamma_xt)-k_0\int_0^t\cos(\gamma_x(t-s))\delta_N(s)\,ds,
\qquad t\ge0,
\ee
where $x_0=\int_{\mathbb R^d}x\sum_{j=1}^2|\psi_j(\bx,0)|^2\,d\bx$ and $P_0^x=\int_{\mathbb R^d}\sum_{j=1}^2\text{Im}(\overline{\psi_j(\bx,0)}\p_x\psi_j(\bx,0))\,d\bx$.
In addition, if $\delta=0$, $\beta_{11}=\beta_{12}=\beta_{22}$ and $|k_0|$ is small, we can approximate the solution $x_c(t)$ as follows:

(i) If $|\Omega|=\gamma_x$,  we can get
\begin{equation*}
x_c(t)\approx\left(x_0-\frac{k_0}{2}\delta_N(0)t\right)\cos(\gamma_xt)+\frac{1}{\gamma_x}
\left(P_0^x-\frac{k_0}{2}\delta_N(0)-
{\rm sgn}(\Omega)\frac{\gamma_xk_0C_0}{2}t\right)\sin(\gamma_xt),
\end{equation*}
where  $C_0=2\text{Im}\int_{\mathbb R^d}\overline{\psi_1(\bx,0)}\psi_2(\bx,0)\,d\bx$.

(ii) If $|\Omega|\neq \gamma_x$,  we can get
\begin{equation*}
\begin{split}
x_c(t)\approx& \left(x_0+\frac{k_0C_0}{\gamma_x^2-\Omega^2}\right)\cos(\gamma_xt)
+\frac{1}{\gamma_x}\left(P_0^x-\frac{\gamma_x^2k_0\delta_N(0)}{\gamma_x^2-\Omega^2}\right)\sin(\gamma_xt)\\
&-\frac{k_0C_0}{\gamma_x^2-\Omega^2}\cos(\Omega t)+\frac{k_0 \delta_N(0)\Omega}{\gamma_x^2-\Omega^2} \sin(\Omega t).
\end{split}
\end{equation*}

Based on the above approximation,  if $|\Omega|=\gamma_x$ or $\frac{\Omega}{\gamma_x}$ is an
irrational number, $x_c(t)$ is not periodic; and if
$\frac{\Omega}{\gamma_x}$ is a rational number, $x_c(t)$ is a periodic function, but its frequency is different
from the trapping frequency $\gamma_x$.
\end{theorem}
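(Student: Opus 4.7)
The explicit representation \eqref{eq:xcc} should be derived first, directly from Lemma~\ref{lem:com}. Taking the $x$-component of \eqref{eq:xc} and using the identity $\dot\delta_N(t)=2\Omega\,\mathrm{Im}\int\overline{\psi_1}\psi_2\,d\bx$ from \eqref{eq:DNdt}, one obtains the forced harmonic oscillator equation
\[
\ddot x_c(t)+\gamma_x^2 x_c(t)=-k_0\,\dot\delta_N(t),\qquad x_c(0)=x_0,\quad \dot x_c(0)=P_0^x-k_0\delta_N(0).
\]
Duhamel's formula (variation of parameters) then yields a representation with $\sin(\gamma_x(t-s))\dot\delta_N(s)$ in the integrand, and a single integration by parts in $s$ trades $\dot\delta_N$ for $\delta_N$ and cancels the $-k_0\delta_N(0)$ piece of $\dot x_c(0)$, producing exactly \eqref{eq:xcc}. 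This part is mechanical and independent of the simplifying assumptions on $\delta$ and the $\beta_{jl}$.

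The core step for (i) and (ii) is to show that, under $V_1=V_2$, $\delta=0$, and $\beta_{11}=\beta_{12}=\beta_{22}=:\beta$, the quantity $\delta_N(t)$ satisfies a closed second-order equation modulo an $O(k_0)$ forcing. Differentiating $\dot\delta_N=2\Omega\,\mathrm{Im}\int\overline{\psi_1}\psi_2\,d\bx$ once more using the CGPEs \eqref{eq:cgpe1}, the contributions from $-\tfrac{1}{2}\Delta$ and from $V_1=V_2$ cancel by self-adjointness, the nonlinear contributions cancel because the coefficients in front of $|\psi_1|^2+|\psi_2|^2$ agree for both components, the $\delta/2$ terms vanish by assumption, and the Rabi term produces $-\Omega^2\delta_N(t)$. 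Only the SO-coupling terms survive, giving
\[
\ddot\delta_N(t)+\Omega^2\,\delta_N(t)=-4k_0\Omega\,\mathrm{Im}\!\int_{\mathbb R^d}\overline{\psi_1}\,\partial_x\psi_2\,d\bx.
\]
Since the right-hand side is $O(k_0)$, to leading order in $|k_0|$ I approximate $\delta_N(t)\approx \delta_N(0)\cos(\Omega t)+C_0\sin(\Omega t)$, where $C_0$ is fixed by $\dot\delta_N(0)=\Omega C_0$.

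Substituting this leading-order $\delta_N$ into \eqref{eq:xcc} reduces the problem to evaluating
\[
-k_0\!\int_0^t\cos(\gamma_x(t-s))\big[\delta_N(0)\cos(\Omega s)+C_0\sin(\Omega s)\big]\,ds.
\]
In the non-resonant regime $|\Omega|\ne\gamma_x$, standard product-to-sum identities evaluate each piece to a linear combination of $\sin(\gamma_x t)$, $\cos(\gamma_x t)$, $\sin(\Omega t)$, $\cos(\Omega t)$ with denominator $\gamma_x^2-\Omega^2$; collecting terms and absorbing the initial-condition contributions gives the formula in case (ii). In the resonant regime $|\Omega|=\gamma_x$, the same integrals produce secular terms of the form $t\cos(\gamma_x t)$ and $t\sin(\gamma_x t)$ (equivalently, treat the ODE $\ddot x+\gamma_x^2 x=-k_0\dot\delta_N$ directly with a particular solution of $t$-modulated form); the sign choice $\mathrm{sgn}(\Omega)$ appears because replacing $\Omega=\gamma_x$ by $\Omega=-\gamma_x$ flips the sign of $\sin(\Omega t)$ while leaving $\cos(\Omega t)$ invariant, which produces exactly the sign $\mathrm{sgn}(\Omega)$ on the secular $\sin(\gamma_x t)$ coefficient.

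The periodicity claim then follows at once from the structure of the approximations: in case (i) the presence of secular $t\cos(\gamma_x t)$, $t\sin(\gamma_x t)$ terms prevents periodicity; in case (ii) the approximation is a sum of oscillations at the two frequencies $\gamma_x$ and $\Omega$, so it is periodic iff $\Omega/\gamma_x\in\mathbb Q$, and when periodic its fundamental frequency is $\gcd(\gamma_x,\Omega)\ne\gamma_x$ generically. The main obstacle I anticipate is the derivation of the closed equation for $\delta_N$: the cancellations of the kinetic, potential, and cubic-nonlinear contributions rely essentially on all three assumptions $V_1=V_2$, $\delta=0$, and $\beta_{11}=\beta_{12}=\beta_{22}$, and keeping track of signs through the repeated integration-by-parts on the $ik_0\partial_x$ terms requires care. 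A secondary subtlety is making precise the sense in which $\delta_N^{(0)}$ approximates $\delta_N$ on the relevant timescale, which can be done by Gronwall on the difference $\delta_N-\delta_N^{(0)}$ driven by the $O(k_0)$ forcing $\mathrm{Im}\!\int\overline{\psi_1}\partial_x\psi_2\,d\bx$, provided bounds on the $H^1$-norm of $\Psi(\cdot,t)$ are available from the conserved energy \eqref{eq:energy}.
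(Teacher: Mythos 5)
Your proposal is correct and follows essentially the same route as the paper: derive \eqref{eq:xcc} by variation of parameters plus one integration by parts, differentiate \eqref{eq:DNdt} to find that under $\delta=0$, $V_1=V_2$, $\beta_{11}=\beta_{12}=\beta_{22}$ only the Rabi term and an $O(k_0)$ SO forcing survive, drop the forcing to get $\delta_N(t)\approx\delta_N(0)\cos(\Omega t)+C_0\sin(\Omega t)$, and substitute back. Your added remarks (the explicit $-4k_0\Omega\,\mathrm{Im}\int\overline{\psi_1}\partial_x\psi_2$ forcing and the Gronwall justification of the leading-order $\delta_N$) are finer-grained than the paper's proof but consistent with it.
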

\begin{proof} Solving \eqref{eq:xc} by the variation-of-constant formula and using \eqref{eq:DNdt}, we have
\begin{equation*}
x_c(t)=x_c(0)\cos(\gamma_xt)+\frac{P^x(0)-k_0\delta_N(0)}{\gamma_x}
\sin(\gamma_xt)-\frac{k_0}{\gamma_x}\int_0^t\cos(\gamma_x(t-s))\dot \delta_N(s)\,ds,
\end{equation*}
and \eqref{eq:xcc} follows by applying integration by parts.

In order to obtain the prescribed approximation, we first find the equation for $\delta_N(t)$. Differentiating \eqref{eq:DNdt} and using \eqref{eq:cgpe1}, we get
\begin{align*}
\ddot\delta_N(t)
&=-\Omega^2\delta_N(t)+2\Omega\,\text{Re}\int_{\mathbb R^d}\bigg[\big(V_1(\bx)-V_2(\bx)+\delta+(\beta_{11}-\beta_{21})|\psi_1|^2\\
&\qquad\qquad\qquad+(\beta_{12}-\beta_{22})|\psi_2|^2\big)\overline{\psi_1}\psi_2
+ik_0(\overline{\psi}_1\p_x\psi_2-\overline{\p_x\psi_1}\psi_2)\bigg]\,d\bx.
\end{align*}
Thus, if $|k_0|\ll1$ and $\delta=0$, $\beta_{11}=\beta_{12}=\beta_{22}$, the above equation is approximated by
\be
\ddot\delta_N(t)\approx -\Omega^2N_{\Delta}(t),
\ee
and the initial condition $\dot\delta_N(0)$ can be obtained via \eqref{eq:DNdt} with $t=0$.
Solving the above ODE, we find
\be\label{eq:Napp}
\delta_N(t)\approx \delta_N(0)\cos(\Omega t)+\frac{\dot\delta_N(0)}{\Omega}\sin(\Omega t).
\ee
Plugging \eqref{eq:Napp} into \eqref{eq:xcc}, we obtain the approximate solution of $x_c(t)$.
\end{proof}

\begin{figure}[htb]
\centerline{
(a)\psfig{figure=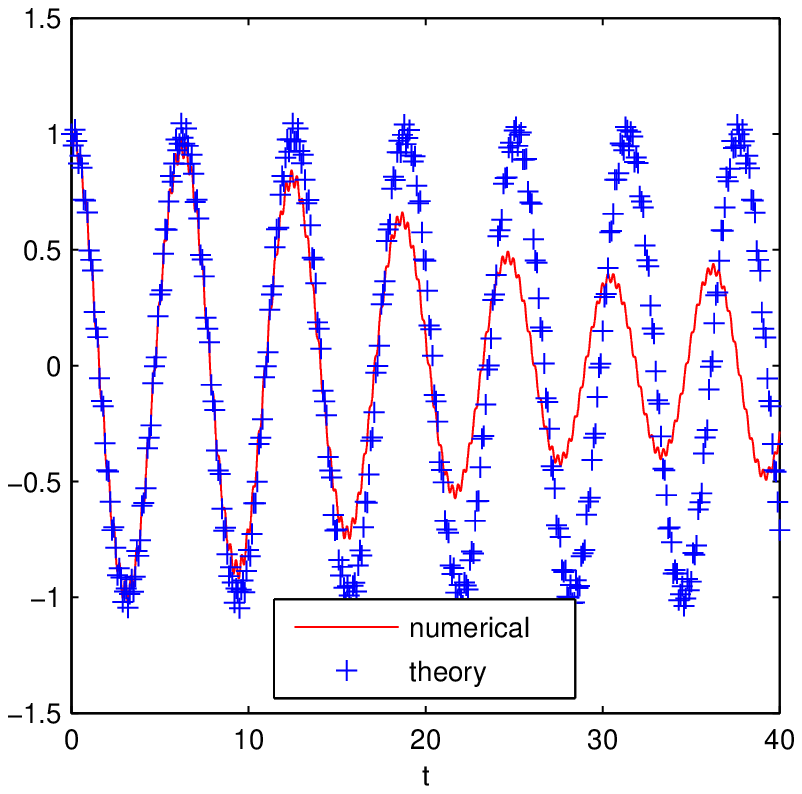,height=5cm,width=6cm,angle=0} \quad
(b)\psfig{figure=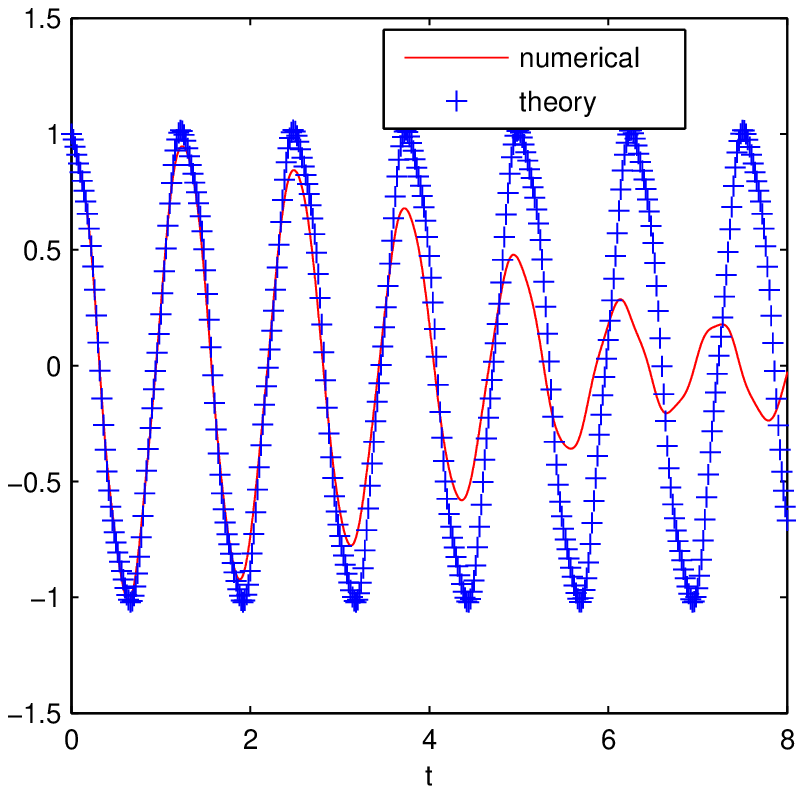,height=5cm,width=6cm,angle=0}}
\centerline{
(c)\psfig{figure=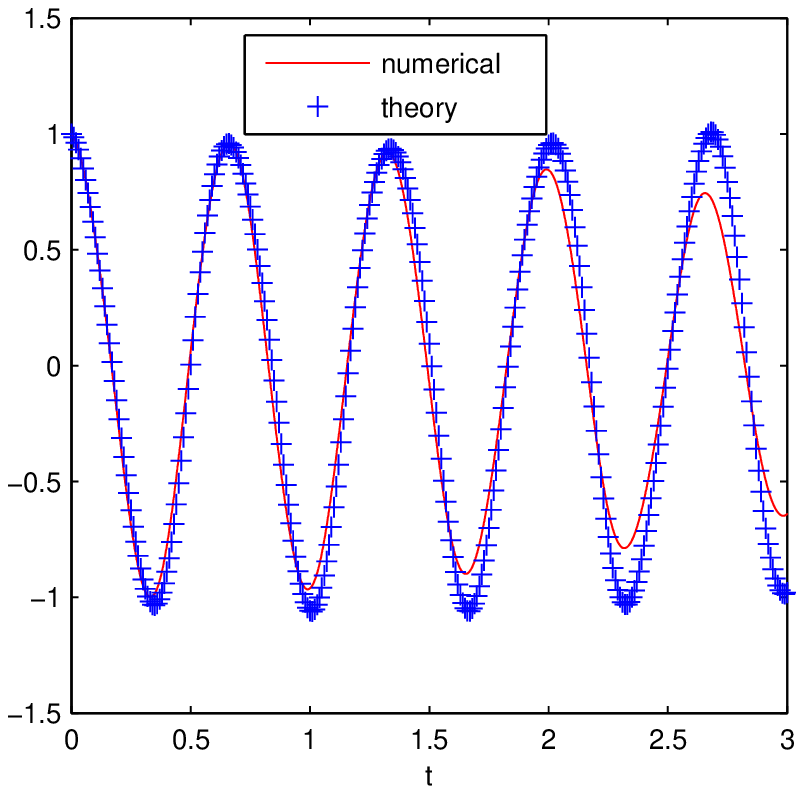,height=5cm,width=6cm,angle=0} \quad
(d)\psfig{figure=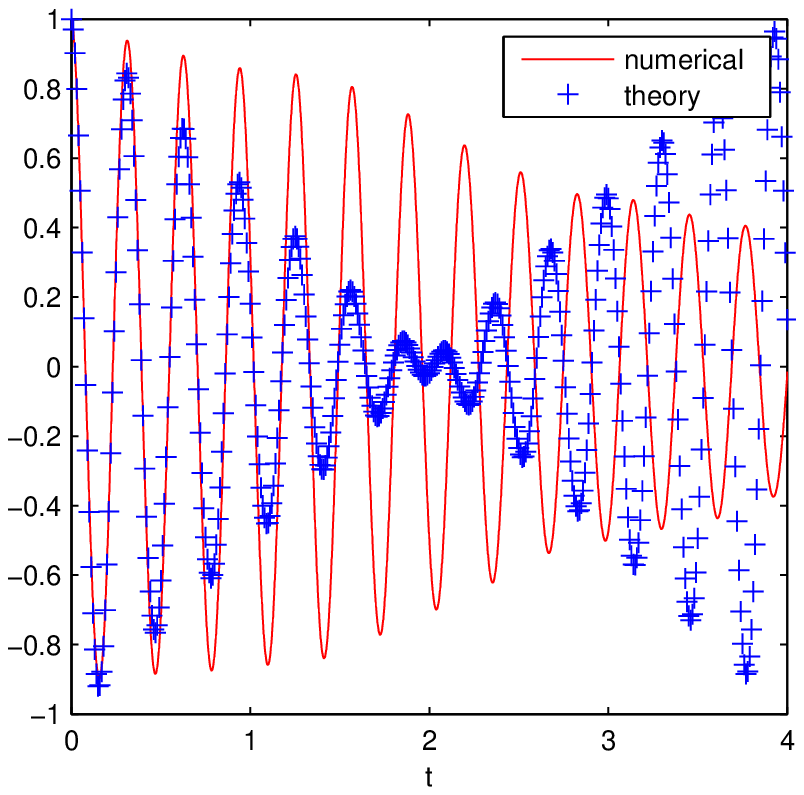,height=5cm,width=6cm,angle=0}}

\caption{Time evolution of the center-of-mass $x_c(t)$ for the CGPEs \eqref{eq:cgpe1}
obtained numerically from its numerical solution (i.e. labeled as 'numerical' with solid lines) and
asymptotically as in Theorem \ref{thm:comapp} (i.e. labeled as 'theory' with `+ + +')
with $\Omega=20$ and  $k_0=1$ for different $\gamma_x$: (a) $\gamma_x=1$, (b) $\gamma_x=5$, (c) $\gamma_x=3\pi$,
 and (d) $\gamma_x=20$.  \label{fig:0}}
\end{figure}

To verify the asymptotic (or approximate) results for $x_c(t)$ in Theorem \ref{thm:comapp},
we numerically solve  the CGPEs \eqref{eq:cgpe1} with (\ref{eq:potential:ho}) in  2D (i.e. $d=2$), take
$\beta_{11}=\beta_{12}=\beta_{22}=10$, $\delta=0$ and choose the initial data as
\be
\psi_1(\bx,0)=\pi^{-1/2}e^{-\frac{|\bx-\bx_0|^2}{2}},\quad \psi_2(\bx,0)=0,\quad \bx\in{\Bbb R}^2,
\ee
where $\bx_0=(1,1)^T$.
Figure \ref{fig:0} depicts time evolution of $x_c(t)$ obtained numerically and asymptotically as in Theorem \ref{thm:comapp}
with $\Omega=20$ and $k_0=1$ for different $\gamma_x$.
From this figure, we see that: for short time $t$, the approximation given in
Theorem \ref{thm:comapp} is very accurate; and when $t\gg1$, it becomes inaccurate, which
is due to that the assumption on  $\delta_N(t)$ obeying \eqref{eq:Napp}
becomes inaccurate.

\medskip

In fact, Theorem \ref{thm:comapp} is valid for any given initial data.
Now, we consider a kind of special initial data, i.e. shift of the ground state $\Phi_g=(\phi_1^g,\phi_2^g)^T$
 of \eqref{eq:minimize} for the CGPEs \eqref{eq:cgpe1}, i.e., the initial condition
 for \eqref{eq:cgpe1} is chosen as
 \be\label{eq:inishift}
 \psi_1(\bx,0)=\phi_1^g(\bx-\bx_0),\quad \psi_2(\bx,0)=\phi_2^g(\bx-\bx_0), \qquad
 \bx\in {\Bbb R}^d,
 \ee
where $\bx_0=x_0$ in 1D, $\bx_0=(x_0,y_0)^T$ in 2D and $\bx_0=(x_0,y_0,z_0)^T$ in 3D.
Then we have the approximate dynamical law for the center-of-mass in  $x$-direction $x_c(t)$.

\begin{theorem} \label{thm:com2}Suppose $V_1(\bx)=V_2(\bx)$ for $\bx\in{\Bbb R}^d$ are harmonic potentials given in \eqref{eq:potential:ho},
$\beta_{11}=\beta_{12}=\beta_{22}=\beta$ and the initial data for the CGPEs \eqref{eq:cgpe1} is taken as \eqref{eq:inishift}.
Using the local density approximation (LDA), the dynamics of the center-of-mass $x_c(t)$ can
be approximated by the following ODE
\be\label{eq:odecomgs}
\dot x_c(t)=P^x(t)-\frac{k_0[2k_0P^x(t)-\delta]}{\sqrt{[2k_0P^x(t)-
\delta]^2+\Omega^2}},\quad
\dot P^x(t)=-\gamma_x^2x_c(t),\qquad t\ge0,
\ee
with $x_c(0)=x_0$ and $P^x(0)=k_0\delta_N(0)$. In particular, the solution to \eqref{eq:odecomgs} is periodic,
and, in general, its frequency is different with the trapping frequency $\gamma_x$.
\end{theorem}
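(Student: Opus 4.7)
The plan is to reduce the claim to a closed two-dimensional autonomous system on $(x_c, P^x)$, close it via LDA, and then derive periodicity from a conservation-of-energy argument.

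First, projecting the identities of Lemma \ref{lem:com} onto the $x$-axis gives the exact pair
\begin{equation*}
\dot x_c(t) = P^x(t) - k_0\,\delta_N(t), \qquad \dot P^x(t) = -\gamma_x^2\, x_c(t),
\end{equation*}
valid for any solution of \eqref{eq:cgpe1} with the harmonic potential \eqref{eq:potential:ho}. To verify the initial condition I use the gauge identity \eqref{eq:equiv} to write the ground state as $\phi_j^g = e^{i(3-2j)k_0 x}\tilde\phi_j^g$ with $\tilde\phi_j^g$ chosen real-valued; a direct computation then gives $P^x(0) = k_0\int_{\mathbb R^d}\bigl((\tilde\phi_1^g)^2 - (\tilde\phi_2^g)^2\bigr)d\bx = k_0\,\delta_N(0)$ for the shifted datum \eqref{eq:inishift}, since translation preserves momentum, matching the initial condition in \eqref{eq:odecomgs}.

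The key step is the LDA closure for $\delta_N(t)$. Under $\beta_{11}=\beta_{12}=\beta_{22}=\beta$ the nonlinear interaction collapses to $\tfrac{\beta}{2}(|\psi_1|^2+|\psi_2|^2)$ and acts identically on both components, so the spin structure is driven solely by the single-particle SO/Raman part. The LDA ansatz represents the condensate as a rigidly translating density envelope carrying a slowly evolving boost and spinor,
\begin{equation*}
\Psi(\bx,t) \;\approx\; R(\bx-\bx_c(t))\,e^{iP^x(t)(x-x_c(t))}\,\chi(t), \qquad |\chi_1(t)|^2+|\chi_2(t)|^2=1,
\end{equation*}
with $\chi(t) \in {\mathbb C}^2$ adiabatically locked to the lower eigenvector of the $2\times 2$ matrix
\begin{equation*}
M(p) = \begin{pmatrix} -k_0 p + \delta/2 & \Omega/2 \\ \Omega/2 & k_0 p - \delta/2 \end{pmatrix}, \qquad p = P^x(t),
\end{equation*}
which arises by substituting a plane wave $e^{ipx}$ into the spin-coupled linear part of \eqref{eq:cgpe1}. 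A short computation of this $2\times 2$ eigenvalue problem gives $|\chi_1|^2 - |\chi_2|^2 = (2k_0 p - \delta)/\sqrt{(2k_0 p - \delta)^2 + \Omega^2}$, so $\delta_N(t) \approx (2k_0 P^x(t) - \delta)/\sqrt{(2k_0 P^x(t) - \delta)^2 + \Omega^2}$; inserting this into the exact equation for $\dot x_c$ yields exactly \eqref{eq:odecomgs}.

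For periodicity I would observe that \eqref{eq:odecomgs} is a Hamiltonian system with conserved energy
\begin{equation*}
H(x_c, P^x) = \tfrac12 \gamma_x^2 x_c^2 + \tfrac12 (P^x)^2 - \tfrac12 \sqrt{(2k_0 P^x - \delta)^2 + \Omega^2},
\end{equation*}
as can be checked by noting that $\p H/\p P^x$ equals the right-hand side of the first equation in \eqref{eq:odecomgs}. Since the square-root term is only linear in $P^x$ at infinity while $(P^x)^2/2$ dominates, $H \to \infty$ as $|x_c|+|P^x|\to\infty$, so every level set is a compact smooth closed curve and the flow on each non-equilibrium level set is periodic; its period depends on both the energy and $k_0$, so it is generically different from $2\pi/\gamma_x$ and only agrees with the trap period in the degenerate limits (e.g.\ $k_0=0$). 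The main obstacle is the LDA closure itself: the rigid-translation plus adiabatic-spin ansatz cannot be justified pointwise in time from first principles without a multiscale/semiclassical expansion separating the slow trap oscillation at frequency $\gamma_x$ from the fast SO/Raman precession, with controlled error; at the LDA level adopted in the statement, only the algebraic closure above needs to be carried out.
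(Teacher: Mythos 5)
Your proposal is correct and follows essentially the same route as the paper: the exact center-of-mass identities from Lemma \ref{lem:com}, an LDA/adiabatic closure locking the spinor to the lower band of the $2\times2$ SO--Raman matrix so that $\delta_N(t)\approx (2k_0P^x-\delta)/\sqrt{(2k_0P^x-\delta)^2+\Omega^2}$, and periodicity from the same conserved quantity $\gamma_x^2x_c^2+(P^x)^2-\sqrt{(2k_0P^x-\delta)^2+\Omega^2}$ (you add the coercivity remark the paper leaves implicit). The only quibble is your justification of $P^x(0)=k_0\delta_N(0)$ via real-valuedness of $\tilde\phi_j^g$, which is not guaranteed for finite $k_0$ with $\Omega\neq0$; it is cleaner to note that the unshifted ground state is stationary, so $\dot x_c\equiv0$ forces $P^x=k_0\delta_N$ exactly, and both quantities are translation invariant.
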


\begin{proof} The initial condition for the ODE \eqref{eq:odecomgs} comes
from the initial value \eqref{eq:inishift} for the  CGPEs \eqref{eq:cgpe1}.
We use LDA here, which means we will treat the BEC system as a uniform system $V_1(\bx)=V_2(\bx)=\text{constant}$ locally.
We begin with the uniform case. The evolution of the wave function
$\Psi=(\psi_1,\psi_2)^T$ is assumed to remain in the ground mode of
 the Hamiltonian
\be\label{eq:H}
{\bf H}=\begin{pmatrix}-\frac{\nabla^2}{2}+ik_0\p_x+\frac{\delta}{2}+\beta|\Psi|^2&\frac{\Omega}{2}\\
\frac{\Omega}{2}&-\frac{\nabla^2}{2}-ik_0\p_x-\frac{\delta}{2}+\beta|\Psi|^2\end{pmatrix},
\ee
and be localized near the center-of-mass $\bx_c(t)$ in physical space
and near the momentum $\bP(t)$ in the phase space.
Thus, the wave function can be written as
\be\label{eq:planewave}
\Psi=(\psi_1,\psi_2)^T=e^{i{\xi\cdot(\bx-\bx_c(t))}}\vec{v}, \quad \vec{v}\quad \text{is a  vector in}\quad \mathbb C^2,
\ee
and $ \xi=(\xi_1,\ldots,\xi_d)^T\in \mathbb R^d$ is centered around $\bP(t)$. Plugging \eqref{eq:planewave} into \eqref{eq:H}, we obtain a two-by-two matrix, and the
 two eigenvalues and the corresponding eigenvectors are
\be
{\cal E}_{\pm}=\frac{|\xi|^2}{2}+\beta|\vec{v}|^2\pm\tilde{\lambda},\quad
\vec{v}_{\pm}=\left(\frac{(\tilde{\lambda}\mp\tilde{\chi})^{1/2}}{(2\tilde{\lambda})^{1/2}},
\frac{\Omega}{2(2\tilde{\lambda}(\tilde{\lambda}\mp\tilde{\chi}))^{1/2}}\right)^T,
\ee
with $\tilde{\lambda}=\frac{1}{2}\sqrt{(2k_0\xi_1-\delta)^2+\Omega^2}$ and
 $\tilde{\chi}=k_0\xi_1-\frac{\delta}{2}$.
By our assumption that the evolution is in the lower eigenstate, we find
 $\vec{v}=|\vec{v}|\vec{v}_{-}$ and
\be
\frac{|\psi_1|^2}{|\psi_2|^2}=\frac{4(\tilde{\lambda}+\tilde{\chi})^2}{\Omega^2}.
\ee
Since the phase space is assumed to be localized around $\bP(t)$, we can approximate the
above equation by letting $\xi_1=P^x:=P^x(t)$ and we get
\be\label{eq:frac}
\frac{|\psi_1|^2}{|\psi_2|^2}\approx\frac{4(\lambda+\chi)^2}{\Omega^2},\quad
\lambda=\frac{1}{2}\sqrt{(2k_0P^x-\delta)^2+\Omega^2},\quad\chi=k_0P^x-\frac{\delta}{2}.
\ee
For the case with harmonic potentials $V_1(\bx)=V_2(\bx)$, we use LDA,
and we could get
the same relation between densities as \eqref{eq:frac} for each position $\bx$ which leads to
\be\label{eq:napprx}
\delta_N(t)=\frac{4(\lambda+\chi)^2-\Omega^2}{4(\lambda+\chi)^2+\Omega^2}.
\ee
Plugging \eqref{eq:napprx} into \eqref{eq:xdotc}, noticing \eqref{eq:Pdot},
we obtain the ODE system (\ref{eq:odecomgs}) approximating the dynamics of  $x_c(t)$. Using the equation \eqref{eq:odecomgs},
it is easy to find that
\be
\frac{d}{dt}\left(\gamma_x^2x_c^2(t)+(P^x(t))^2-
\sqrt{[2k_0P^x(t)-\delta]^2+\Omega^2}\right)
=0,
\ee
which shows $(x_c(t),P^x(t))^T$ is a closed curve and it is periodic.
\end{proof}

Again, to verify the asymptotic (or approximate) results for $x_c(t)$ in Theorem \ref{thm:com2},
we numerically solve  the CGPEs \eqref{eq:cgpe1} with (\ref{eq:potential:ho}) in  2D (i.e. $d=2$), take
$\beta_{11}=\beta_{12}=\beta_{22}=10$ and $\gamma_x=\gamma_y=2$, and choose the initial data as
\eqref{eq:inishift} with $\bx_0=(2,2)^T$ and the ground state computed numerically.
Figure \ref{fig:3} depicts time evolution of $x_c(t)$ obtained numerically and asymptotically as in Theorem
\ref{thm:com2} with different $\Omega$, $k_0$ and $\delta$.

\begin{figure}[htb]
\centerline{
(a)\psfig{figure=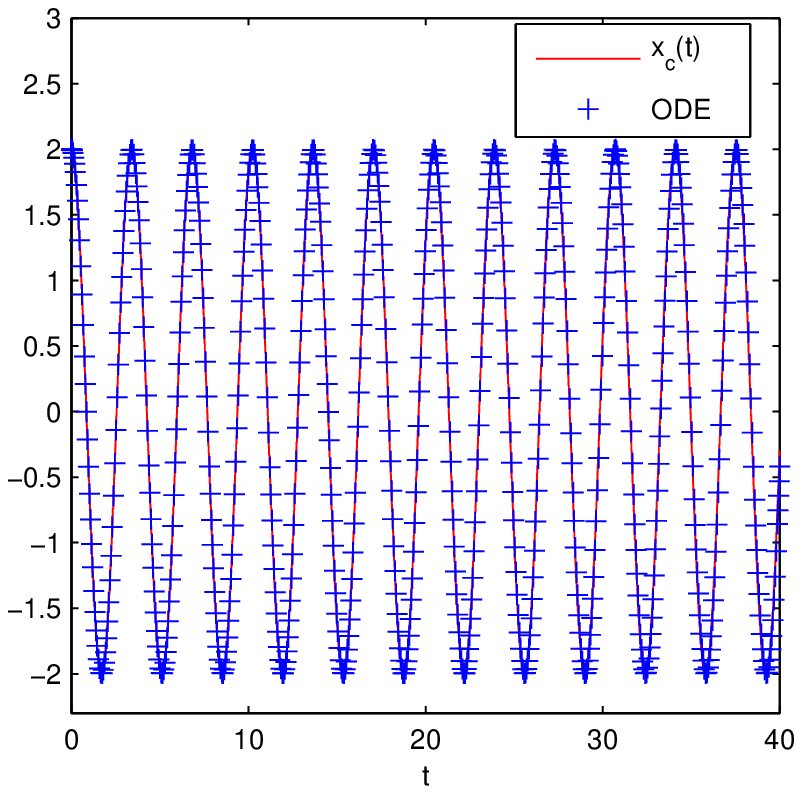,height=5cm,width=6cm,angle=0} \quad
(b)\psfig{figure=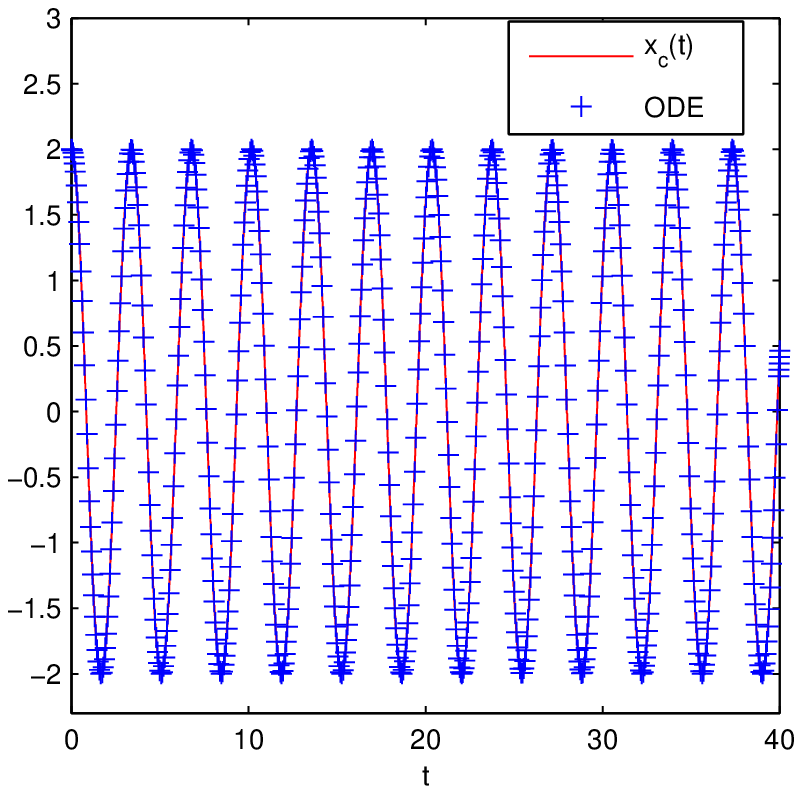,height=5cm,width=6cm,angle=0}}
\centerline{
(c)\psfig{figure=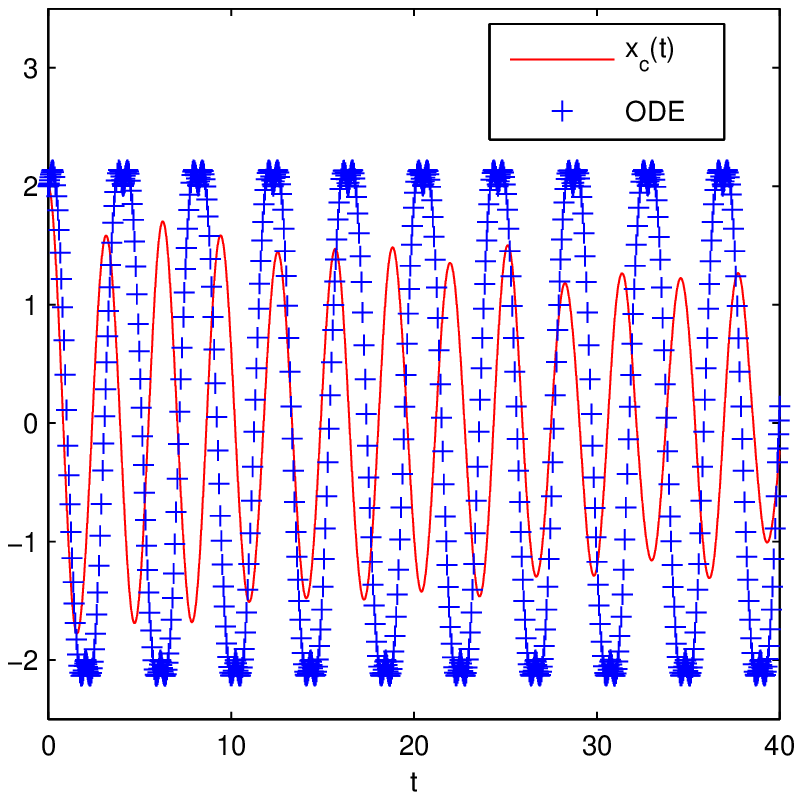,height=5cm,width=6cm,angle=0}
\quad (d)\psfig{figure=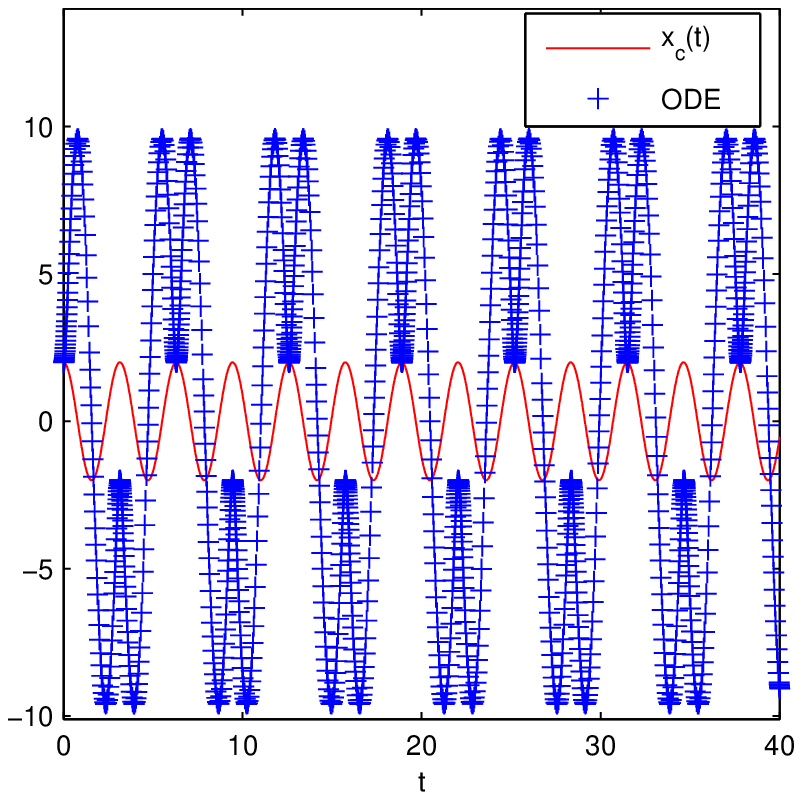,height=5cm,width=6cm,angle=0}
}

\caption{\label{fig:3}
Time evolution of the center-of-mass $x_c(t)$ for the CGPEs \eqref{eq:cgpe1}
obtained numerically from its numerical solution (i.e. labeled as $x_c(t)$ with solid lines) and
asymptotically as in Theorem  \ref{thm:com2} (i.e. labeled as 'ODE' with `+ + +')
for different sets of parameters:
(a) $(\Omega,k_0,\delta)=(50,2,0)$, (b) $(\Omega,k_0,\delta)=(50,2,10)$,
(c) $(\Omega,k_0,\delta)=(2,2,0)$,
 and (d) $(\Omega,k_0,\delta)=(50,20,0)$.   }
\end{figure}

From Figure \ref{fig:3}  and numerous tests we have done (not shown here for brevity),
we find that for the very special initial data \eqref{eq:inishift}, Theorem \ref{thm:com2}
provides a very good approximation for the dynamics of the center-of-mass over
a long time when  $|\Omega|$ is much larger than $\gamma_x$ and $k_0$.
However, when $0<\gamma_x\ll |\Omega|$ and $k_0$
is large, $x_c(t)$ behaves periodically over a long time,
but the approximation in Theorem \ref{thm:com2}  fails!
For $|\Omega|$ being comparable to $\gamma_x$, $x_c(t)$ is damped in time
and non-periodic.

\begin{remark}
 Theorem \ref{thm:com2} does not contradict with Theorem \ref{thm:comapp},
 because Theorem \ref{thm:comapp} holds for small $k_0$,
 where the $\Omega$ frequency contribution is very small  and $x_c(t)$ is almost periodic there.
 Theorem \ref{thm:com2} has certain restriction because of the assumptions we have used
 on the initial data. In particular, $k_0$ can not be large because the
 energy gap between ground modes and excited modes will be reduced for
 large $k_0$ and the assumption that the wave function remains in the ground mode will
 be violated.
\end{remark}

\subsection{Semi-classical scaling}
For strong interaction $\beta_{jl}\gg1$, we could rescale \eqref{eq:cgpe1} by choosing
$\bx\to\bx \vep^{-1/2}$ , $\psi_{j}\to\psi^\vep_j\vep^{d/4}$, $\vep=1/\beta^{2/(d+2)}$, $
\beta=\max\{|\beta_{11}|,|\beta_{12}|,|\beta_{22}|\}$, which gives the following CGPEs
\be\label{eq:cgpe1:semc}
\begin{split} &i\vep\partial_t \psi_{1}^\vep=\left[-\frac{\vep^2}{2}\nabla^2
+V_1(\bx)+ik_0\vep^{3/2}\p_x+\frac{\delta\vep}{2}
+\sum_{j=1}^2\beta_{1j}^0|\psi^\vep_{j}|^2\right]\psi^\vep_{1}+\frac{\Omega\vep}{2}
\psi^\vep_{2}, \\
&i\vep\partial _t \psi^\vep_{2}=\left[-\frac{\vep^2}{2}\nabla^2
+V_2(\bx)-ik_0 \vep^{3/2}\p_x-\frac{\delta\vep}{2}+\sum_{j=1}^2\beta_{2j}^0|\psi^\vep_{j}|^2\right]\psi^\vep_{2}
+\frac{\Omega\vep}{2}
\psi^\vep_{1},\end{split} \ee
where $\beta_{j,l}^0=\frac{\beta_{j,l}}{\beta}$ and the potential functions are given in \eqref{eq:potential:ho}. It is
of great interest to study the behavior of \eqref{eq:cgpe1:semc} when the small parameter $\vep$ tends to 0.

\textbf{Semiclassical limits in linear case}.
In the linear case, i.e. $\beta_{jl}^0=0$ for $j,l=1,2$, (\ref{eq:cgpe1:semc}) collapses to
\be\label{eq:cgpe1:semc:linear}
i\vep\p_t\Psi^\vep=\begin{bmatrix}\frac{-\vep^2}{2}\Delta+ik_0\vep^{3/2}\p_x+\frac{\delta\vep}{2}+V_1&\frac{\Omega\vep}{2}\\
\frac{\Omega\vep}{2}&\frac{-\vep^2}{2}\Delta-ik_0\vep^{3/2}\p_x-\frac{\delta\vep}{2}+V_2
\end{bmatrix}\Psi^\vep \ee
where $\Psi^\vep=(\psi_1^\vep,\psi_2^\vep)^T$.
We now describe the limit as $\vep\to0^+$ using the Wigner transform
\be
W^\vep(\Psi^\vep)(\bx,\xi)=(2\pi)^{-d}\int_{\mathbb R^d}\Psi^\vep(\bx-\vep v/2)\otimes\Psi^\vep(\bx+\vep v/2)e^{iv\cdot\xi}\,dv,
\ee
where $W^\vep$ is a $2\times2$ matrix-valued function.
The symbol corresponds to (\ref{eq:cgpe1:semc:linear}) can be written as
\be
P^\vep(\bx,\xi)=\frac{i}{2}|\xi|^2+i\begin{bmatrix}k_0\vep^{1/2}\xi_1+
V_1(\bx)+\frac{\delta\vep}{2}&\frac{\Omega\vep}{2}\\
\frac{\Omega\vep}{2}&-k_0\vep^{1/2}\xi_1+V_2(\bx)-\frac{\delta\vep}{2}\end{bmatrix},
\ee
where $\xi=(\xi_1,\xi_2,\ldots, \xi_d)^T$.
Let us consider the principal part $P$ of $P^\vep=P+O(\vep)$, i.e., we omit small term $O(\vep)$,
and we know that $-iP(\bx,\xi)$ has two eigenvalues $
\lambda_1(\bx,\xi)$ and $\lambda_2(\bx,\xi)$.
 Let $\Pi_{j}$ ($j=1,2$) be the projection matrix from $\mathbb C^2$ to the eigenvector space associated with $\lambda_{j}$. If $\lambda_{1,2}$ are
well separated, then $W^\vep(\Psi^\vep)$ converges to the
Wigner measure $W^0$ which can be decomposed as \cite{Gerard}
\be\label{eq:decwg}
W^0=u_1(\bx,\xi,t)\Pi_1+u_2(\bx,\xi,t)\Pi_2,
\ee
where $u_{j}(\bx,\xi,t)$ satisfies the Liouville equation
\be \label{Liou}
\p_tu_j(\bx,\xi,t)+\nabla_{\xi}\lambda_j(\bx,\xi,t)\cdot\nabla_{\bx}u_j(\bx,\xi,t)-
\nabla_{\bx}\lambda_j(\bx,\xi,t)\cdot\nabla_{\xi}u_j(\bx,\xi,t)=0.
\ee
It is known that such semi-classical limit fails at regions when $\lambda_1$ and $\lambda_2$ are close.

  Specifically, when $k_0=O(1)$, $\delta=O(1)$ and $\Omega=O(1)$, the limit of the
  Wigner transform $W^\vep(\Psi^\vep)$
  only has diagonal elements, and we have
\be
\label{weakwg}
 P=
\frac{i}{2}|\xi|^2+i\begin{bmatrix}
V_1(\bx)&0\\
0&V_2(\bx)\end{bmatrix}, \quad \lambda_1=\frac{1}{2}|\xi|^2+V_1(\bx), \quad \lambda_2=
\frac{1}{2}|\xi|^2+V_2(\bx).\ee
In the limit of this case, $W^0$ in \eqref{eq:decwg},  $\Pi_1$ and $\Pi_2$ are diagonal matrices,
which means the two components of $\Psi^{\vep}$ in \eqref{eq:cgpe1:semc:linear} are decoupled as $\vep\to0^+$.
In addition, the Liouville equation (\ref{Liou}) is valid with $\lambda_1$ and $\lambda_2$ defined in
(\ref{weakwg}).

Similarly, when $k_0=O(1/\vep^{1/2})$, $\delta=O(1/\vep)$ and $\Omega=O(1/\vep)$, e.g.
$k_0=\frac{k_\infty}{\vep^{1/2}}$, $\Omega=\frac{\Omega_\infty}{\vep}$ and
 $\delta=\frac{\delta_\infty}{\vep}$ with $k_\infty$, $\Omega_\infty$ and
$\delta_\infty$ nonzero constants,  the limit of the Wigner transform $W^\vep(\Psi^\vep)$
  has nonzero diagonal and off-diagonal elements, and we have
\be\label{weakwg3}
 P=\frac{i}{2}|\xi|^2+i\begin{bmatrix}k_\infty\xi_1+
V_1(\bx)+\frac{\delta_\infty}{2}&\frac{\Omega_\infty}{2}\\
\frac{\Omega_\infty}{2}&-k_\infty\xi_1+V_2(\bx)-\frac{\delta_\infty}{2}\end{bmatrix},
\ee
and
\be\label{weakwg5}
\lambda_{1,2}=\frac{|\xi|^2}{2}+\frac{V_1(\bx)+V_2(\bx)}{2}\pm\frac{\sqrt{[V_1(\bx)-
V_2(\bx)+2k_\infty\xi_1+\delta_\infty]^2+\Omega_\infty^2}}{2}.
\ee
In the limit of this case, $W^0$ in \eqref{eq:decwg}, $\Pi_1$ and $\Pi_2$ are full matrices,
which means that the two components of $\Psi^{\vep}$ in \eqref{eq:cgpe1:semc:linear}
are coupled as $\vep\to0^+$. Again, the Liouville equation (\ref{Liou}) is
valid with $\lambda_1$ and $\lambda_2$ defined in
(\ref{weakwg5}).

Of course, for the nonlinear case, i.e. $\beta_{jl}^0\ne0$ for $j,l=1,2$,
only the case when $\Omega=0$ and $k_0=0$ has been addressed \cite{Lee}.
For $\Omega\neq0$ and $k_0\neq0$, it is
still not clear about the semi-classical limit of the CGPEs
\eqref{eq:cgpe1:semc}.

\section{Conclusions}
We have studied analytically and asymptotically as well as numerically ground states and dynamics of
two-component spin-orbit-coupled Bose-Einstein condensates (BECs) based on
the coupled Gross-Pitaevskii equations (CGPEs) with
the spin-orbit (SO) and Raman couplings.
For ground state properties,  we established
existence and uniqueness, as well as non-existence of the grounds states
in different parameter regimes and studied their limiting behavior and
structure with various combination of the SO and Raman coupling strengths.
Efficient and accurate numerical methods
were designed for computing the ground states and dynamics of SO-coupled BECs,
especially for box potentials. Numerical results for the ground states were
reported under different parameter regimes, which confirmed
our analytical results on ground states.
For dynamical properties, we obtained the dynamical laws governing the motion of the
center-of-mass and showed that the dynamics of the center-of-mass in the SO-coupled direction
is either non-periodic or a periodic function with different frequency to the trapping frequency,
which is completely different from the case without SO coupling.
Numerical results were presented to confirm our asymptotical (or approximate)
results on the dynamics of the center-of-mass.
Finally, we described the semi-classical limit of the CGPEs
in the linear case via the Wigner transform method.


\end{document}